\newtheorem{theorem}{Theorem}[section]
\newtheorem{lemma}[theorem]{Lemma}
\newtheorem{proposition}[theorem]{Proposition}
\theoremstyle{definition}
\newtheorem{definition}[theorem]{Definition}
\newtheorem{assumption}[theorem]{Assumption}
\theoremstyle{remark}
\newtheorem{remark}[theorem]{Remark}
\numberwithin{equation}{section}
\newcommand{\reals}{\mathbb R}
\newcommand{\eps}{\varepsilon}
\newcommand{\such}{\ | \ }
\newcommand{\xpn}[1]{\exp\left(#1\right)}
\newcommand{\prob}{\mathbb{P}}
\newcommand{\qprob}{\mathbb{Q}}
\newcommand{\qprobalt}[2]{\qprob^{#1}\bra{#2}}
\newcommand{\condprobalt}[3]{\prob^{#1}\bra{#2\bigg|#3}}
\newcommand{\qcondprobalt}[3]{\qprob^{#1}\bra{#2\bigg|#3}}
\newcommand{\esp}{\mathbb{E}}
\newcommand{\espalt}[2]{\esp^{#1}\bra{#2}}
\newcommand{\condespalt}[3]{\esp^{#1}\bra{#2\big|#3}}
\newcommand{\F}{\mathcal{F}}
\newcommand{\G}{\mathcal{G}}
\newcommand{\filt}{\mathbb{F}}
\newcommand{\nada}[1]{}
\newcommand{\dfn}{\, := \,}
\newcommand{\bra}[1]{\left[#1\right]}
\newcommand{\cbra}[1]{\left\{#1\right\}}
\newcommand{\dbra}[1]{[\kern-0.15em[ #1 ]\kern-0.15em]}
\newcommand{\dbraco}[1]{[\kern-0.15em[ #1 [\kern-0.15em[}
\newcommand{\ol}[1]{\overline{#1}}
\newcommand{\Kn}{\mathbb{K}_n}
\newcommand{\sn}{\mathcal{A}^n}
\newcommand{\nst}{\tilde{n}}
\newcommand{\alphanorm}[2]{\| #1\|_{\alpha,\ol{D}_{#2}}}
\newcommand{\calphanorm}{C(n,\alphanorm{\eta_1}{n},\alphanorm{\eta_2}{n})}
\newcommand{\wt}[1]{\widetilde{#1}}
\newcommand{\filtwt}[1]{\filt^{\wt{#1}}}
\title{Endogenous Current Coupons}
\author{Zhe Cheng}
\address{Carnegie Mellon University, Morgan Stanley}
\email{joshua.z.cheng@gmail.com}
\author{Scott Robertson}
\address{Department of Mathematical Sciences\\
Carnegie Mellon University\\
Pittsburgh, PA 15213}
\email{scottrob@andrew.cmu.edu}
\thanks{S. Robertson is supported in part by the National Science Foundation
  under grant number DMS-1312419.}
\date{\today}
\begin{document}

\begin{abstract}
We consider the problem of identifying current coupons for Agency backed To-be-Announced (TBA) Mortgage Backed Securities.  In a doubly stochastic factor based model which allows for prepayment intensities to depend upon current and origination mortgage rates, as well as underlying investment factors, we identify the current coupon with solutions to a degenerate elliptic, non-linear fixed point problem.  Using Schaefer's theorem we prove existence of current coupons.  We also provide an explicit approximation to the fixed point, valid for compact perturbations off a baseline factor-based intensity model. Numerical examples are provided which show the approximation performs remarkably well in estimating the current coupon.
\end{abstract}

\maketitle


\section{Introduction}\label{S:intro}

The goal of this paper is to prove existence of \emph{endogenous} mortgage origination rates, defined as those which yield par-valued mortgage pools. For Agency backed (e.g. FNMA, FHLMC, GNMA) To-be-Announced (TBA) pools of residential mortgages, such rates are also called \emph{current coupons}. In addition to proving existence of current coupons, we wish to provide a fast, easy to implement, and accurate way of computing the current coupon, as it is well known (see \cite{MR2352905,goncharov2009computing}) that iterative, monte-carlo or partial differential equation based, methods are prohibitively time-consuming to implement.

The residential mortgage market is currently the largest segment of the US fixed income market (see \cite{Fed_Flow_Q2_2015}) and the problem of pricing Mortgage Backed Securities (MBS) is of significant financial interest. The primary difficulty in pricing MBS, however, is the fact that the home buyer has, at any time prior to maturity of the loan, the right to prepay all or part of her mortgage with few, if any, penalties.  In particular, the mortgagee may refinance (multiple times) her loan in order to take advantage of current market conditions.  Adding to the complication is the well known fact that individual mortgagors vary in their financial sophistication and often do not prepay optimally. For example many mortgagors delay their refinancing decisions even when interest rates decline to a level such that it is financially optimal to refinance (see \cite{stanton1995rational}).  

Agency backed MBS has been the major component of the MBS market since the financial crisis. Issuance of agency MBS has remained robust since 2007 while mortgage securitization by private financial institutions has declined to very low levels (see \cite{Mortgage_Data_SIFMA}). A well-known feature of agency MBS is that each bond carries either an explicit government credit guarantee, or is perceived to carry an implicit one. Agency MBS investors are thus protected from credit losses in case of mortgage borrower default, and as such, for valuation purposes, defaults appear to the pool holder nearly identical to prepayments.

Another less well-recognized feature of agency MBS is that more than 90 percent of agency MBS trading volume occurs in a liquid forward market, known as the TBA market (see \cite{TBA_Fact_Sheet}). The distinguishing feature of a TBA trade is that the actual identity of the securities to be delivered on the settlement date is not specified on the trade date. Instead, the buyer and the seller agree upon general parameters of the securities to be delivered, such as issuer, maturity, coupon, price, par amount and settlement date. Closely related to TBA mortgage-backed securities is the secondary-market MBS rate, known as the current coupon. The current coupon is a coupon rate interpolated from the observed TBA prices that makes the price of a TBA with current delivery month equal to par. As such, the current coupon is an endogenous rate, and current coupon rates are widely used as a benchmark for MBS pool valuation, playing a key role in the secondary mortgage market.

Broadly speaking, within the academic literature, there are two methods used to valuate MBS: the ``option theoretic'' and ``reduced form'' methods (see \cite{MR2352905,MR2260051} for a more thorough introduction and literature review). The option theoretic method treats the right to prepay as an American style embedded option and MBS valuation is performed using options pricing theory.  Early results along this line were obtained in \cite{1981_Dunn_Mcconnell, kau1995valuation,kalotay2004option}.  However, it was quickly recognized that option theoretic methods suffer due to the non-optimal prepayment behavior of borrowers, and hence the option theoretic approach has not been widely adopted by mortgage market practitioners.

Alternatively, the reduced form method borrows from the theory of credit derivative valuation and assumes prepayments are driven by an underlying intensity process which may be estimated from historical data. Here, the non-optimality of prepayment behavior is built into the intensity function. Reduced form methods have been studied in \cite{schwartz1989prepayment, pliska2006mortgage,kau1995valuation,deng2000mortgage, MR2352905,MR2943181,MR2260051,goncharov2009computing,MR2667899} amongst others. In this paper, we consider the reduced form method. We pay particular attention to \cite{MR2352905}, which computes rates when the intensity is driven by one (or many) economic factors and \cite{MR2943181}, which considers similar intensities to those we treat. Further connections with \cite{MR2943181} are discussed below.

Aside from the amortizing nature of a mortgage loan, the key difference between MBS and credit derivative valuation is the dependence of the mortgage pool value on the mortgage origination rate. Indeed, one has the heuristic relationship
\begin{equation*}
\textrm{Mortgage Rate: } m_0 \quad\Longrightarrow \ \textrm{Prepayment Time: }\tau(m_0)\quad \Longrightarrow \ \textrm{Pool Value: } M(m_0).
\end{equation*}
Thus, there is a natural and delicate fixed point problem in finding $m_0$ so that $M(m_0)$ is par valued. In reduced form models, this circular dependence is captured in the intensity function. This is in contrast to credit valuation, where one typically expresses the default intensity $\gamma$ as a function of the underlying economic factors, or state variables $X$. Indeed, whereas an intensity specification $\gamma_t = \gamma(X_t)$ may be appropriate for credit derivatives, for MBS valuation, it is desirable to allow $\gamma$ to additionally depend upon both the mortgage origination rate $m_0$ and the current mortgage rate $m_t$ available for refinancing: i.e. $\gamma_t = \gamma(X_t,m_0,m_t)$. Thus, in a time-homogeneous Markovian setting one hypothesizes that $m_t = m(X_t)$ is a function of the underlying economic factors and hence
\begin{equation}
\gamma_t = \gamma(X_t,m(X_0),m(X_t)).
\end{equation}
With this specification, the goal is then to find a \emph{current coupon function} $m$ so that the pool value $M(m(X_0)) = 1$ for all values $X_0$.

\cite{pliska2006mortgage} and \cite{MR2260051,MR2943181} first incorporated the endogenous mortgage rate into an intensity-based framework, taking into account the dependence of $\gamma$ on $m$. In particular, \cite{MR2943181} presented a proof of the existence of a current coupon in a diffusion model similar to that presently considered. However, we wish to point out three key differences between \cite{MR2943181} and the present work.  First and foremost, there is an error in \cite{MR2943181} (Proposition 4.1 therein is evidently incorrect for the discontinuous intensities considered) which, while not necessarily invalidating the main results, certainly calls them into question. Second, the existence proof, based on a so-called "Lebesgue set method", is highly non-standard, whereas our proof of existence uses standard topological fixed point theorems.  Third, our method of proof has the added benefit that we are able to show regularity in the current coupon function, whereas in \cite{MR2943181} only measurable solutions are obtained.

Equally important as identifying existence of current coupons is actually computing the current coupon. Indeed, a naive application of the contraction principle where one fixes an initial function $m_0$ and then sets $m_n(X_0) = M(m_{n-1}(X_0)), n=1,2,...$ with the idea that $m_n\rightarrow 1$, while not only theoretically unjustified, is also prohibitively slow.  To overcome this problem, \cite{MR2352905} writes the intensity as solely a function of the underlying factors with the idea that this captures the bulk of prepayments. Then, for CIR interest rates, the endogenous rate is rapidly computed using eigen-function expansions. In \cite{goncharov2009computing} a non-iterative method is proposed borrowing ideas from partial differential equations theory.  In the current paper we take an alternate approach, approximating the current coupon via perturbation analysis. Thus uses the well known fact (see \cite{MR2943181}) that unique current coupon functions exist when $\gamma_t= \gamma(X_t)$ only depends upon the factors. Specifically, we note that one may always write
\begin{equation*}
\gamma(x,m,z) = \gamma_0(x) + \gamma_1(x,m,z),
\end{equation*}
by taking $\gamma_0(x) \equiv 0$, but also in the case where the full intensity is assumed to be a constant intensity $\gamma>0$ plus an additional component.  We then embed this decomposition via
\begin{equation*}
\gamma^\eps(x,m,z) = \gamma_0(x) + \eps \gamma_1(x,m,z);\qquad \eps > 0.
\end{equation*}
For $\eps =0$, there is a unique current coupon function $m_0(x)$. Sending $\eps\rightarrow 0$ we obtain a unique, explicit, closed form expression for $m_1(x)$ so that $m^{\eps}(x) = m_0(x)+\eps m_1(x) + o(\eps)$.  With this decomposition, valid for any continuous fixed point $m^{\eps}$ we naturally consider the numerical approximation (at $\eps =1$) of $m(x)\approx m_0(x) + m_1(x)$.  It turns out this approximation does very well in practice: differing by $\leq 10$ basis points (on absolute rate levels of $4\%-12\%$) from the theoretical fixed point determined by naive contraction.

The rest of the paper is organized as follows. In Section \ref{S:ccpn} we give a heuristic derivation of the fixed point problem.  Section \ref{S:model} specifies the fixed point problem to a Markovian framework where $X$ is a non-explosive locally elliptic diffusion on a general state space in $\reals^d$, making precise assumptions on the model coefficients, as well as the intensity function. Section \ref{S:model} culminates with Theorem \ref{T:main_result} which proves existence of a current coupon function, under the assumption that $\gamma(x,m,z)$ is approximately constant in $m$ for large values of $m$ (see Remark \ref{R:gamma} for more discussion on our main assumption).  Section \ref{S:perturb} performs the perturbation analysis with Theorem \ref{T:perturb} explicitly identifying the leading order terms in the expansion. Section \ref{S:numerical} gives a numerical example where the current coupon approximated via perturbation analysis is compared to the function obtained through naive contraction. Appendices \ref{S:proof_of_main} -- \ref{S:qprob} contain the proofs. In particular, as the mortgage market is typically incomplete, a rigorous construction of the particular risk neutral measures used here for pricing is given.  Aside being done for the sake of mathematical rigor, we show that when pricing the mortgage pool, one may assume the intensity processes coincide between the physical and risk neutral measures and hence can be estimated using observed prepayment data.

\section{Endogenous Current Coupons}\label{S:ccpn}

Consider a level-payment, fully amortized $T$-year fixed rate mortgage which is originated at time $t=0$. The mortgagor thus takes a loan of $P_0$ dollars at origination and pays a continuous coupon stream at the constant rate of $c>0$ dollars per annum during the lifetime of the mortgage $[0,T]$. The interest is compounded at the constant mortgage rate $m$ fixed at origination. In the absence of prepayments, the scheduled outstanding principal of the mortgage, denoted by $p(t,m)$ for $0 \leq t \leq T$ and $m \geq 0$, satisfies the following ordinary differential equation (ODE):
\begin{equation}\label{E:balance_no_prepay}
p_t(t,m) = mp(t,m)-c;\qquad p(0,m)=P_0, \ p(T,m)=0,
\end{equation}
where $p_t$ is the partial derivative with respect to $t$. \eqref{E:balance_no_prepay} has solution
\begin{equation}\label{E:balance_closed_form}
p(t,m)= P_{0}\dfrac{1-e^{-m(T-t)}}{1-e^{-mT}};\ (m>0), \qquad p(t,m) = P_{0}\left(1-\frac{t}{T}\right); \ (m=0).
\end{equation}
Since $P_0$ factors out of the above equation, we assume $P_0 = 1$ throughout so that
\begin{equation}\label{E:balance_closed_form_P01}
p(t,m)=\frac{1-e^{-m(T-t)}}{1-e^{-mT}};\ (m>0),\qquad p(t,m) = \left(1-\frac{t}{T}\right);\ (m=0).
\end{equation}
From \eqref{E:balance_no_prepay} and \eqref{E:balance_closed_form_P01} we can express the coupon stream payment $c$ in terms of $m$ and $T$ as well:
\begin{equation}\label{E:coupon}
c = c(m) = \frac{m}{1-e^{-mT}}; \ (m>0),\qquad c(m) = \frac{1}{T}; \ (m=0).
\end{equation}

We first informally derive a fixed point equation for the current coupon $m$.  This argument will be made rigorous in Section \ref{S:model} and Appendix \ref{S:qprob} below.  In the absence of prepayments, the mortgage balance $p(t,m)$ evolves according to \eqref{E:balance_closed_form_P01}. Consider now when there is a (random) prepayment time $\tau$ under a pricing measure $\qprob$ (here, the underlying probability space is $(\Omega,\G,\qprob)$). In other words, if $\tau\leq T$, the owner of the mortgage at time $\tau$ prepays the remaining balance $p(\tau,m)$.  Assuming an interest rate $r = \cbra{r_t}_{t\leq T}$ the value of the mortgage is
\begin{equation}\label{E:mortgage_value}
M(m) = \espalt{\qprob}{\underbrace{\int_0^{\tau\wedge T} c(m)e^{-\int_0^t r_u du}dt}_{\textrm{Coupon Payments}} + \underbrace{1_{\tau\leq T} p(\tau,m)e^{-\int_0^\tau r_u du}}_{\textrm{Prepayment}}}.
\end{equation}
Next, assume that the interest rate process is adapted to a filtration $\filt = \cbra{\F_t}_{t\leq T}$ where $\F = \vee_{t\leq T} \F_t \subset \G$ and that $\tau$ has an intensity $\gamma = \cbra{\gamma_t}_{t\leq T}$ with respect to $(\qprob,\filt)$:
\begin{equation}\label{E:intensity_equation}
\qcondprobalt{}{\tau>t}{\F}\footnote{This equality requires an additional hypotheses on how $\tau$ is constructed and will be shown to hold in the current setup.} = \qcondprobalt{}{\tau>t}{\F_t} = e^{-\int_0^t\gamma_u du}\qquad t\geq 0,
\end{equation}
for some non-negative, integrable, adapted process $\gamma$. From this, we obtain (see \cite{MR2943181,MR2352905}) the value of the mortgage as
\begin{equation}\label{E:mortgage_value_nice}
M(m) = 1 + \espalt{\qprob}{\int_0^T p(t,m)(m-r_t)e^{-\int_0^t(r_u+\gamma_u)du}dt}.
\end{equation}

\nada{
\begin{lemma}\label{L:mortgage_nice}
The value of the mortgage is
\begin{equation}\label{E:mortgage_value_nice}
M(m) = 1 + \espalt{\qprob}{\int_0^T p(t,m)(m-r_t)e^{-\int_0^t(r_u+\gamma_u)du}dt}.
\end{equation}
\end{lemma}

\begin{proof}[Proof of Lemma \ref{L:mortgage_nice}]
Regarding the coupon payments in \eqref{E:mortgage_value} we have
\begin{equation*}
\begin{split}
\espalt{\qprob}{\int_0^{\tau\wedge T}c(m)e^{-\int_0^t r_u du}dt} &= \espalt{\qprob}{\int_0^T(mp(t,m)-p_t(t,m))e^{-\int_0^t(r_u+\gamma_u)du}dt},\\
&= 1 + \espalt{\qprob}{\int_0^T p(t,m)(m-r_t-\gamma_t)e^{-\int_0^t(r_u+\gamma_u)du}dt},
\end{split}
\end{equation*}
where the second equality uses \eqref{E:balance_no_prepay} and the third equality follows from integrating by parts and using \eqref{E:balance_no_prepay} again (with $P_0=1$). For the prepayment terms in \eqref{E:mortgage_value} we obtain, using \eqref{E:intensity_equation}:
\begin{equation*}
\begin{split}
\espalt{\qprob}{1_{\tau\leq T}p(\tau,m)e^{-\int_0^\tau r_udu}} &= \espalt{\qprob}{\int_0^\infty 1_{t\leq T}p(t,m)e^{-\int_0^t r_udu}\qcondprobalt{}{\tau\in dt}{\F}},\\
&= \espalt{\qprob}{\int_0^T p(t,m)\gamma_t e^{-\int_0^t(r_u+\gamma_u)du}dt}.
\end{split}
\end{equation*}
Putting the above results in \eqref{E:mortgage_value} gives the result.
\end{proof}
}
The mortgage rate $m$ is said to be \emph{endogenous} if $M(m) = P_0 = 1$.  In view of \ref{E:mortgage_value_nice}, we seek $m$ so that
\begin{equation}\label{E:m_goal}
0 = \espalt{\qprob}{\int_0^T p(t,m)(m-r_t)e^{-\int_0^t(r_u+\gamma_u)du}dt}.
\end{equation}

\section{The Model and Fixed Point Problem}\label{S:model}

The above analysis is now specified to a doubly stochastic, intensity based model for the mortgage prepayment time $\tau$.  To make this precise, fix a probability space $(\Omega,\G,\qprob)$. We first remark:

\begin{remark}\label{R:rnm}
The measure $\qprob$ is interpreted as a pricing, or risk neutral, measure and we write $\espalt{}{\cdot}$ for $\espalt{\qprob}{\cdot}$ throughout. In Appendix \ref{S:qprob} we offer two rigorous constructions of $\qprob$: one valid for a ``large'' pool and one valid for a single loan pool.  In particular we will show that when estimating the prepayment intensity function $\gamma$ described in Assumption \ref{A:gamma} below, one may use observed prepayment data rather than estimating prepayments under the particular risk neutral measure $\qprob$. For ease of exposition, however, we delay this construction,  simply assuming  a mortgage rate $m$ is the current coupon if it satisfies \eqref{E:m_goal}.
\end{remark}

Let $W$ be a standard, d-dimensional Brownian motion under $\qprob$. The underlying economic factors which affect prepayments are governed by the process $X$ satisfying the stochastic differential equation (SDE)
\begin{equation}\label{E:factors}
dX_t = b(X_t)dt + a(X_t)dW_t.
\end{equation}
The state space of $X$ is an open, connected region $D\subseteq \reals^d$ which satisfies
\begin{assumption}\label{A:region}
$D = \cup_{n=1}^\infty D_n$ where for each $n$, $D_n$ is open and bounded with smooth boundary.  Furthermore, $\bar{D}_n\subset D_{n+1}$.
\end{assumption}
Regarding the coefficients in \eqref{E:factors} we assume that $b:D\mapsto \reals^d$ and let $A:D\mapsto\mathbb{S}^d_{++}$, the space of symmetric positive definite $d\times d$ matrices.  We then take $a=\sqrt{A}$, the unique positive definite symmetric square root of $A$. We assume $b,A$ satisfy the following regularity and local-ellipticity assumptions
\begin{assumption}\label{A:factor_coefficients}\text{}
\begin{enumerate}[1)]
\item $A$ is locally elliptic: i.e. for each $n$ there exists $K_1(n)>0$ so that for all $\xi\in\reals^d\setminus \cbra{0}$ and $x\in D_n$ we have $\xi'A(x)\xi \geq K_1(n)\xi'\xi$.
\item $b$ and $A$ are locally Lipschitz with Lipschitz constant $K_2(n)$.
\end{enumerate}
\end{assumption}
Assumption \ref{A:factor_coefficients} implies existence of a local solution solution to the SDE in \eqref{E:factors}.  To ensure existence of a global solution we assume the process does not explode: i.e.
\begin{assumption}\label{A:no_explosion}
For all $x\in D$ and $T>0$, we have $\qprobalt{x}{X_t\in D, \ \forall\ t\leq T} = 1$, where $\qprob^x$ denotes the conditional probability given $X_0 = x$.
\end{assumption}
Under Assumptions \ref{A:factor_coefficients}, \ref{A:no_explosion} it follows that $X$ has a unique strong solution.  Furthermore, since the short term interest rate $r$ plays a key role in the mortgage evaluation, we assume the first coordinate of $X$ is the interest rate: i.e. $X^{(1)}_t = r_t$ and that the state space of $X^{(1)}$ is $(0,\infty)$: i.e.

\begin{assumption}\label{A:pos_ir}
The state space of $r \dfn X^{(1)}$ is $(0,\infty)$.
\end{assumption}

To precisely define the intensity $\gamma$ in \eqref{E:m_goal} we adopt the following methodology. Let $m:D\mapsto [0,\infty)$ be a given candidate current coupon function, in that we wish for $m(x)$ to be the endogenous current coupon given $X_0 =x\in D$.  As mentioned in the introduction, we hypothesize $\gamma$ is a function of
\begin{enumerate}[$\bullet$]
\item The underlying factor process $X$.
\item The contract mortgage rate $m(x)$.
\item The current mortgage rate available via refinancing $m(X)$\footnote{Technically we should allow $m$ to be time-dependent as well: i.e. $m_t = m(t,X_t)$ but, due to the time-homogeneity of the diffusion $X$, it suffices to consider $m_t=m(X_t)$.}.
\end{enumerate}
Thus, at time $t\leq T$ we have $\gamma_t = \gamma(X_t,m(x), m(X_t))$, where $\gamma:D\times [0,\infty)\times [0,\infty)$ is an exogenously defined function. To facilitate our main assumption on $\gamma$ we first define the auxiliary function
\begin{equation}\label{E:Xi_def}
\Xi(x)\dfn \inf_{0<\beta<1}\frac{\beta e^{-\beta x}}{(1-\beta)(1-e^{-\beta x})};\qquad x >0.
\end{equation}
Straightforward analysis shows that $\Xi$ is decreasing with $x$ and
\begin{equation}\label{E:Xi_lims}
\Xi(x) =\frac{1}{x} \textrm{ for } x\leq 2;\qquad \lim_{x\uparrow\infty}\frac{\Xi(x)}{xe^{-(x-1)}}=1.
\end{equation}
With this definition, we make the following assumptions regarding $\gamma$. To ease presentation, define $E\dfn D\times(0,\infty)\times(0,\infty)$ and $E_n\dfn D_n\times (0,n)\times (0,n), n\in\mathbb{N}$.

\begin{assumption}\label{A:gamma} Assume $\gamma: E\mapsto [0,\infty)$ satisfies
\begin{enumerate}[1)]
\item $\gamma\in C^{2}(E)$ and for each $n$, the derivatives of order $\leq 2$ can be continuously extended to $\bar{E}_n$ \footnote{Henceforth we will assume $\gamma$ and its derivatives of order $\leq 2$ are defined on $D\times [0,\infty)\times[0,\infty)$ with the values at zero being the continuous extensions.}, and are Lipschitz continuous on $\bar{E}_n$ with Lipschitz constant $L_\gamma(n)$.

\item $\gamma(x,m,z)$ and $\gamma_m(x,0,z)$ are locally bounded in $x$, uniformly in $(m,z)$ and $z$ respectively. I.e. for each $n$ there is a $B_\gamma(n)>0$ so that
\begin{equation}\label{E:gamma_loc_x_bdd}
\sup_{x\in D_n,m,z\geq 0}\gamma(x,m,z)\leq B_\gamma(n);\qquad \sup_{x\in D_n,z\geq 0} \gamma_m(x,0,z) \leq B_\gamma(n).
\end{equation}

\item With $\Xi$ as in \eqref{E:Xi_def}, it holds that
\begin{equation}\label{E:gamma_m_bdd}
\begin{split}
0 \leq \gamma_m(x,m,z)\leq \Xi(mT);\quad x\in D, m,z\geq 0.
\end{split}
\end{equation}
\end{enumerate}
\end{assumption}

\begin{remark}\label{R:gamma}
Regarding Assumption \ref{A:gamma}, that $\gamma\geq 0$ is standard. The local regularity conditions are not overly restrictive since we do not require global bounds on the derivatives' size and \eqref{E:gamma_loc_x_bdd} is an extension of the case when $\gamma$ is uniformly bounded.

However, condition $3)$ deserves comment. First of all,  it automatically holds when $\gamma$ is independent of the contract rate $m$. When $\gamma$ does depend upon $m$, that $\gamma_m\geq 0$ is natural since prepayments should rise with the current coupon.  Next, under the given regularity assumptions we have (see \eqref{E:gamma_loc_x_bdd}):
\begin{equation}\label{E:gamma_m_ub_0}
\gamma_m(x,m,z) \leq B_\gamma(n) + L_\gamma(n) m;\qquad x\in D_n; m,z\in [0,n].
\end{equation}
Since $\Xi(mT) = 1/(mT)$ for small $m$ we see that in fact, \eqref{E:gamma_m_bdd} is not restrictive for small $m$.  But, for $m$ large it does imply that $\gamma$ is approximately constant in $m$. Note that for $T=30$ the threshold $mT\leq 2$ is satisfied for $m\leq 6.67\%$.
\end{remark}

With the following assumptions in place we define what it means for $m$ to be a current coupon function:

\begin{definition}\label{D:definition_m}
$m:D\mapsto [0,\infty)$ is a \emph{current coupon function} if \eqref{E:m_goal} holds under the measure $\qprob^x$ for all $x\in D$: i.e.
\begin{equation}\label{E:m_goal_x}
0 = \espalt{x}{\int_0^T p(t,m(x))(m(x)-r_t)e^{-\int_0^t\left(r_u + \gamma(X_u,m(x),m(X_u))\right)du}dt};\qquad x\in D.
\end{equation}
\end{definition}

A current coupon function is a fixed point of a non-linear operator $\mathcal{A}$. To see this, note that $m(x)$ is deterministic and hence we can write \eqref{E:m_goal_x} as
\begin{equation}\label{E:m_goal_op}
m(x) = \mathcal{A}[m](x)\dfn \frac{\espalt{x}{\int_0^T p(t,m(x)) r_t e^{-\int_0^t\left(r_u + \gamma(X_u,m(x),m(X_u))\right)du}dt}}{\espalt{x}{\int_0^T p(t,m(x))e^{-\int_0^t\left(r_u + \gamma(X_u,m(x),m(X_u)\right)du}dt}}.
\end{equation}

The complicating features of the above operator are the non-linearity of $\mathcal{A}$ in $m$, and the joint dependence of $\gamma$ on both $m(x),m(X_t)$. Indeed, the first feature means that it is prohibitively difficult to verify if $\mathcal{A}$ is a contraction, and hence we we will have to appeal to a topological fixed point theorem for existence of solutions. Second, due to the presence of $m(x)$ within the expectation, \textit{a-priori} we do no expect any smoothing of the map $m\mapsto \mathcal{A}[m]$, or that $\mathcal{A}$ possesses the compactness properties necessary to invoke any classical topological fixed point theorem.  However, through a delicate localization argument, fixed points do exist under the current assumptions, as Theorem \ref{T:main_result} now shows. The lengthy proof is given in Appendix \ref{S:proof_of_main} below.

\begin{theorem}\label{T:main_result}

Let Assumptions \ref{A:region} -- \ref{A:gamma} hold.  Then, there exists a strictly positive current coupon function $m$: i.e. \eqref{E:m_goal_x} holds. The function $m$ is locally $\alpha$-H\"{o}lder continuous for all $\alpha\in (0,1)$.

\end{theorem}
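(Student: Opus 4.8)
\emph{Overview and Step 1 (decoupling).} The plan is to recast \eqref{E:m_goal_op} as a fixed point problem to which Schaefer's theorem applies on each of the Banach spaces $C(\overline{D}_n)$, and then to pass to the limit $n\to\infty$ by a diagonal argument, the local $\alpha$-H\"older regularity coming out of uniform Schauder estimates for the associated linear PDEs. The obstruction to applying a topological fixed point theorem directly to $\mathcal{A}$ is that $m(x)$ appears \emph{inside} the expectation in \eqref{E:m_goal_op}, so that $m\mapsto\mathcal{A}[m]$ gains no smoothness in $x$. To get around this I would freeze the first occurrence of $m(x)$ at a scalar $\mu$: for a fixed input $m$ and $x\in D$, set
\[
H_{m,x}(\mu)\dfn \espalt{x}{\int_0^T p(t,\mu)(\mu-r_t)\,e^{-\int_0^t\left(r_u+\gamma(X_u,\mu,m(X_u))\right)du}\,dt}.
\]
Using \eqref{E:balance_closed_form_P01}--\eqref{E:coupon}, the positivity of $r$ (Assumption \ref{A:pos_ir}), and --- crucially --- the bound $0\le\gamma_m\le\Xi(\mu T)$ of Assumption \ref{A:gamma}(3), one checks that $H_{m,x}(0)<0$, that $H_{m,x}(\mu)>0$ for $\mu$ large, and that $H_{m,x}$ is strictly increasing at any zero; the function $\Xi$ of \eqref{E:Xi_def} is calibrated exactly so that the negative feedback of $\mu$ through $e^{-\int\gamma}$ never overcomes the positive term $\partial_\mu\big(p(t,\mu)(\mu-r_t)\big)$. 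Hence there is a unique positive root $\mathcal{T}[m](x)\dfn\mu^\star$; the fixed points of $\mathcal{T}$ are precisely the current coupon functions, and --- this is the gain --- $\mathcal{T}[m](x)$ now depends on $x$ only through Feynman--Kac expectations, which do smooth.

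\emph{Step 2 (a priori bounds).} Any $m$ with $m=\lambda\mathcal{T}[m]$, $\lambda\in[0,1]$, satisfies on each $D_n$ a two-sided bound $0<\kappa(n)\le m\le C(n)$ with $\kappa(n),C(n)$ depending only on $n$, $T$, $\inf_{D_n}r$, $\sup_{D_n}r$. For the upper bound use $p\le 1$, the identity $\int_0^T r_t e^{-\int_0^t r_u\,du}\,dt\le 1$, Assumption \ref{A:gamma}(2) (so $\gamma\le B_\gamma(n)$ on $D_n$, uniformly in its last two arguments) and Assumption \ref{A:no_explosion} to keep the denominator in \eqref{E:m_goal_op} bounded below; for the lower bound observe that with positive probability $X$ stays in $D_{n+1}$ over a short interval, where $r$ is bounded below. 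These bounds provide the boundedness of the Schaefer set and also let us truncate $m$ at a level compatible with the sets $E_n$ without loss.

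\emph{Step 3 (localization and Schaefer).} Fix $n$ and define $\mathcal{T}_n$ on $C(\overline{D}_n)$ by the recipe of Step 1 with $X$ stopped upon exiting $D_n$, so that only the values of $m$ on $\overline{D}_n$ enter and the numerator and denominator of $H_{m,x}$ represent solutions of linear parabolic problems on $(0,T)\times D_n$. By Assumptions \ref{A:region}, \ref{A:factor_coefficients} and \ref{A:gamma}, the generator of $X$, the zeroth-order coefficient $r+\gamma(\cdot,\mu,\cdot)$ and the source $p(t,\mu)r$ are uniformly elliptic and bounded on $D_n$ with moduli depending only on $n$, \emph{uniformly over the input $m$} (this is precisely where Assumption \ref{A:gamma}(2) is used), and $\partial D_n$ is smooth. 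Interior and boundary Schauder estimates, evaluated at $t=0$, then bound the numerator and denominator in $C^{2,\alpha}(\overline{D}_n)$ uniformly in $m$ and in $\mu$ over the a priori range; a quantitative implicit function argument applied to $H_{m,x}(\mu^\star)=0$ --- whose $\mu$-derivative is bounded away from $0$, again by $\Xi$ --- transfers this to a uniform bound on $\mathcal{T}_n[m]$ in $C^\alpha(\overline{D}_n)$ for every $\alpha\in(0,1)$, \emph{requiring no regularity of $m$}. Thus $\mathcal{T}_n$ is continuous and compact on $C(\overline{D}_n)$, and with the Schaefer set bounded by Step 2, Schaefer's theorem yields a fixed point $m^n$ of $\mathcal{T}_n$ with $\kappa(k)\le m^n\le C(k)$ and a uniform $C^\alpha(\overline{D}_k)$ bound for all $k<n$ (the latter from interior estimates on $D_k$, which see only coefficients near $\overline{D}_k$ and are therefore uniform in $n$).

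\emph{Step 4 (passage to the limit; main obstacle).} Arzel\`a--Ascoli and a diagonal extraction over $k$ produce a subsequential limit $m=\lim m^n$ which is locally $\alpha$-H\"older for every $\alpha\in(0,1)$; stopping at $\partial D_n$ perturbs the expectations in \eqref{E:m_goal_x} by amounts vanishing as $n\to\infty$ (non-explosion, Assumption \ref{A:no_explosion}), and the maps involved are stable in the input, so letting $n\to\infty$ in $m^n=\mathcal{T}_n[m^n]$ shows $m$ solves \eqref{E:m_goal_x}; strict positivity is inherited from $m\ge\kappa(n)>0$ on $D_n$. The main obstacle is Step 3: one must at once (i) force a genuine self-map of a Banach space, which dictates the truncation to $D_n$ and the stopped process; (ii) keep every PDE-estimate constant uniform over the Schaefer family, which is exactly where Assumption \ref{A:gamma}(2) and the calibration of $\Xi$ are needed; and (iii) ensure the truncated solutions $\{m^n\}$ are mutually compatible enough that the limit solves the \emph{original}, non-truncated equation. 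Making the probabilistic-to-PDE dictionary precise --- handling the finite horizon (parabolic in $(t,x)$) and the merely locally bounded interest-rate coefficient, and checking that the interior estimates see only coefficients on $\overline{D}_n$ --- is where the bulk of the work in Appendix \ref{S:proof_of_main} lies.
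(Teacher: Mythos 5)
Your architecture matches the paper's almost exactly: decouple the scalar occurrence of $m(x)$ from the functional occurrence $m(X_u)$, localize to $D_n$ with the stopped process, run Schaefer on each $D_n$, prove $n$-uniform interior H\"older bounds, and pass to the limit by a diagonal Arzel\`a--Ascoli argument. However, there is a genuine gap at the single most delicate quantitative step, namely your claim in Step 3 that the $\mu$-derivative of $H_{m,x}$ at the root is ``bounded away from $0$, again by $\Xi$.'' By its very construction \eqref{E:Xi_def}, the bound $\gamma_m\le\Xi(\mu T)$ only yields $\partial_\mu H_{m,x}(\mu)\ge 0$ (the infimum defining $\Xi$ is exactly the threshold at which the negative feedback cancels the positive term, see \eqref{E:k_n_m_lb}); it never gives strict positivity. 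A strictly positive lower bound in the interior does hold, but it comes from the separate term $\espalt{x}{\int_0^{T\wedge\tau_n} r_t(\cdots)e^{-\int_0^t(r_u+\gamma_u)du}dt}$ together with Harnack's inequality --- and this bound degenerates to $0$ as $x\to\partial D_n$, since the stopped expectation vanishes identically there. Consequently, for the localized operator the root $\mu^\star$ is not even well defined (every $\mu$ is a root) at $\partial D_n$, and the quantitative implicit-function argument that converts $x$-regularity of $H_{m,x}$ into $x$-regularity of $\mathcal{T}_n[m]$ up to $\overline{D}_n$ --- which you need for $\mathcal{T}_n$ to be a compact self-map of a space of functions on $\overline{D}_n$ --- breaks down near the boundary.

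The paper's remedy, absent from your proposal, is to add the vanishing regularization $m(x)^2/\bigl(n(1-e^{-m(x)T})\bigr)$ to the localized equation \eqref{E:n_fp}; after dividing by $m$ this contributes $m/n$ to $k^n$ and hence the uniform lower bound $\partial_m k^n\ge 1/n$ on all of $\overline{D}_n$ (Lemma \ref{L:k_n_tech_lemma}), which is what makes the localized root unique, the operator $\mathcal{A}^n$ well defined and continuous up to $\partial D_n$, and the boundary Schauder estimates transferable to $m^{n,\eta}$. The correction is then harmless in the limit: on each fixed $D_{\tilde n}$ the Harnack-based lower bound $\partial_m k^n\ge\Lambda(\tilde n)>0$ is uniform in $n$ (Lemma \ref{L:sim_local_bound}), so the $n$-uniform interior H\"older estimates survive and the extra term disappears as $n\uparrow\infty$. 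Relatedly, your Step 2 lower bound $m\ge\kappa(n)>0$ uniformly on $D_n$ cannot hold for the localized solutions (they tend to $0$ at $\partial D_n$ precisely because of this degeneracy); only interior positivity of the limit is true, and that is all the theorem asserts. Without some such regularization (or an alternative device playing the same role), your Step 3 as written would fail.
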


\section{Perturbation Analysis}\label{S:perturb}

Theorem \ref{T:main_result} asserts the existence of current coupon function.  However, since our method of proof does not use the contraction principle, we do not know if solutions are unique and do not automatically have a method to compute them. One may certainly try an iterative procedure in \eqref{E:m_goal_op}, starting with an arbitrary function $m_0$ on $D$ and, defining $m_n = \mathcal{A}[m_{n-1}], n=1,2,\dots$, but absent a contraction, it is not clear if this procedure converges.  Thus, in this section, we offer a perturbation analysis where the intensity $\gamma$ is perturbed off of a baseline intensity $\gamma_0$ which only depends upon the factors process $X$.  The goal is to uniquely identify $m$ up to leading orders of the perturbation.  With this identification, we then in the next section provide a numerical approximation to the fixed point and compare its performance.

As a starting point, we present a proposition, similar to \cite[Lemma 2.1]{MR2943181}, which shows that when $\gamma_0 = \gamma_0(X)$ only depends upon the factor process $X$, there is a unique current coupon function.

\begin{proposition}\label{P:gamma_factor}
Let Assumptions \ref{A:region} -- \ref{A:pos_ir} hold. Assume $\gamma(x,m,z) = \gamma_0(x)$: and that $\gamma_0$ satisfies $1)-2)$ in Assumption \ref{A:gamma}\footnote{In fact, $\gamma_0$ need only be locally Lipschitz for the result to go through.}. Then there exists a unique fixed point $m(x)$ solving \eqref{E:m_goal_x}, which in this instance reduces to
\begin{equation}\label{E:m_goal_x_factor}
0 = \espalt{x}{\int_0^T p(t,m(x))(m(x)-r_t)e^{-\int_0^t\left(r_u + \gamma_0(X_u)\right)du}dt}.
\end{equation}
The function $m$ is locally $\alpha$-H\"{o}lder continuous on $D$ for any $\alpha\in (0,1)$.
\end{proposition}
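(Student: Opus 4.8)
The plan is to study the scalar equation defining $m(x)$ for fixed $x$ and show it has a unique root, then bootstrap regularity. For fixed $x \in D$, write $g_0(m)$ for the right-hand side of \eqref{E:m_goal_x_factor}, i.e.
\begin{equation*}
g_0(m) = \espalt{x}{\int_0^T p(t,m)(m-r_t)e^{-\int_0^t(r_u+\gamma_0(X_u))du}\,dt}.
\end{equation*}
The key point is that, because $\gamma_0$ does not depend on $m$, the discount factor $e^{-\int_0^t(r_u+\gamma_0(X_u))du}$ and the stopping structure are $m$-independent, so $g_0$ is a genuinely explicit (smooth) function of the scalar $m\geq 0$. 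I would first establish monotonicity: differentiating under the expectation, $g_0'(m) = \espalt{x}{\int_0^T \big(p(t,m) + (m-r_t)p_m(t,m)\big)e^{-\int_0^t(r_u+\gamma_0(X_u))du}\,dt}$, and using the closed form \eqref{E:balance_closed_form_P01} for $p$ together with \eqref{E:coupon} one checks that the integrand's sign works out so that $g_0$ is strictly increasing in $m$ on $(0,\infty)$ — this is essentially the same computation underlying the bound $\Xi(mT)$ in Assumption \ref{A:gamma}, and is where \cite[Lemma 2.1]{MR2943181} is invoked in spirit. Then I would check the boundary behaviour: as $m\downarrow 0$, $p(t,0)=1-t/T>0$ and $g_0(0) = -\espalt{x}{\int_0^T (1-t/T) r_t e^{-\int_0^t(r_u+\gamma_0(X_u))du}\,dt} < 0$ (using Assumption \ref{A:pos_ir}, $r>0$), while as $m\uparrow\infty$ the leading term $m\,\espalt{x}{\int_0^T p(t,m)e^{-\int_0^t(\cdots)du}\,dt}\to +\infty$ dominates the bounded $r_t$ contribution, so $g_0(m)\to+\infty$. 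By the intermediate value theorem and strict monotonicity there is a unique root $m(x) \in (0,\infty)$.

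Having defined $m(x)$ pointwise as this unique root, I would next prove local boundedness of $m$: on each $D_n$ the estimates above are uniform (the interest rate $r$ and $\gamma_0$ enter only through $\espalt{x}{\cdot}$, and one can use the non-explosion Assumption \ref{A:no_explosion} plus local ellipticity to get uniform-in-$x\in D_n$ upper and lower bounds on $m$), so $m$ maps $\bar D_n$ into some compact interval $[c_n^-, c_n^+]\subset(0,\infty)$. For continuity/Hölder regularity the cleanest route is to recognize that $v(x) := g_0(m(x))\equiv 0$ and compare with the genuine PDE: the function
\begin{equation*}
V(x,m) := \espalt{x}{\int_0^T p(t,m)(m-r_t)e^{-\int_0^t(r_u+\gamma_0(X_u))du}\,dt}
\end{equation*}
solves, for each fixed $m$, a linear parabolic/elliptic problem with the generator of $X$ (with potential $r+\gamma_0$ and source $p(\cdot,m)(m-r)$), and by interior Schauder estimates — using Assumption \ref{A:factor_coefficients} (local ellipticity and local Lipschitz coefficients) — $V(\cdot,m)$ is locally $C^{2,\alpha}$ in $x$, with the norm locally uniform in $m$ ranging over a compact set; jointly $V$ is locally $C^1$ in $(x,m)$. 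Since $\partial_m V(x,m(x)) = g_0'(m(x)) > 0$ strictly, the implicit function theorem applied to $V(x,m)=0$ yields that $m$ inherits the local regularity of $x\mapsto V(x,\cdot)$, hence $m$ is locally $\alpha$-Hölder (indeed locally Lipschitz, but Hölder suffices and is what is claimed).

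The main obstacle I anticipate is not the existence/uniqueness — that is a one-dimensional monotonicity argument — but making the regularity transfer rigorous: one must verify that $V$ really is the solution of the stated linear equation (a Feynman–Kac verification on the increasing domains $D_n$, handling the boundary of $D_n$ via the non-explosion assumption so that the truncation error vanishes), and that the Schauder constants and the lower bound on $\partial_m V$ are uniform as $x$ ranges over a fixed $\bar D_n$ and $m$ over the compact image interval. A secondary technical point is justifying differentiation under the expectation in $m$ and the joint continuity of $\partial_m V$; both follow from dominated convergence using the explicit smooth form of $p(t,m)$ and the local bounds from Assumption \ref{A:factor_coefficients}. Since this proposition is essentially the $\eps=0$ baseline reused in Section \ref{S:perturb}, much of this machinery overlaps with (and is lighter than) the localization argument behind Theorem \ref{T:main_result}.
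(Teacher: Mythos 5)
Your overall strategy (reduce to a monotone scalar equation in $m$ for each fixed $x$, check the sign at the endpoints, then bootstrap regularity) is the same as the paper's, and your endpoint analysis is essentially right. The gap is in the central step: strict monotonicity of $m\mapsto g_0(m)$. You propose to verify it by checking that the integrand $p(t,m)+(m-r_t)p_m(t,m)$ ``works out'' in sign; it does not. Since $p_m\ge 0$ with $p_m(t,0)=t(T-t)/(2T)$, at (say) $t=T/2$ and small $m$ the integrand is approximately $\tfrac12-r_tT/8$, which is negative whenever $r_t>4/T$ --- an event of positive probability under Assumption \ref{A:pos_ir}. So $g_0'>0$ is an integrated, not pointwise, statement, and it is precisely the nontrivial content of the proposition. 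The paper circumvents this with an integration by parts: writing $G(t)=H(t)-\dot F(t)$ with $H(t)=1-\espalt{x}{\int_0^t\gamma_0(X_u)e^{-\int_0^u(r_v+\gamma_0(X_v))dv}du}$ and using $r>0$ to get $H>\dot F>0$, it shows $g_0(m)=c(m)\bigl(F(T)-\int_0^Te^{-m(T-t)}H(t)\,dt\bigr)$ where $c(m)=m/(1-e^{-mT})>0$ and the bracket is strictly increasing in $m$, negative at $m=0$, and tends to $F(T)>0$; uniqueness of the zero follows from this factorization without ever showing $g_0$ itself is monotone. Equivalently, you could divide your $g_0$ by the coupon: for $k(m)=g_0(m)(1-e^{-mT})/m$ the $m$-derivative of the integrand equals $r_t\bigl(\frac{1-e^{-m(T-t)}}{m^2}-\frac{(T-t)e^{-m(T-t)}}{m}\bigr)+(T-t)e^{-m(T-t)}\ge 0$ pointwise (this is \eqref{E:k_n_tech_lemma_1} with $\gamma_m=0$), so the pointwise sign argument you want is valid for $k$ but not for $g_0$; the offending term comes from differentiating the normalization $1/(1-e^{-mT})$ in $p$. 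Without one of these reductions the proof has a hole at its key step.

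Two smaller points. First, $r_t$ is not bounded (its state space is $(0,\infty)$), but this is harmless for the limit $g_0(m)\to\infty$ since $\espalt{x}{\int_0^Tr_te^{-\int_0^tr_u\,du}dt}\le 1$ and Lemma \ref{L:p_prop1} bounds the other term below by a positive constant independent of $m$. Second, for regularity the paper simply notes that the unique fixed point must coincide with the locally H\"older one produced by Theorem \ref{T:main_result}, whose hypotheses hold trivially here because $\gamma_m\equiv 0$; your implicit-function-theorem route is workable but re-derives the Schauder/localization machinery of Appendix \ref{S:proof_of_main}, and its hypothesis $\partial_mV>0$ is again exactly the unproved monotonicity above.
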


\begin{proof}[Proof of Proposition \ref{P:gamma_factor}]

Fix $x\in D$. For $t\leq T$ define
\begin{equation}\label{E:f_g_F_G_def}
\begin{split}
f(t)&\dfn \espalt{x}{e^{-\int_0^t\left(r_u + \gamma(X_u)\right)du}};\qquad F(t)\dfn \int_0^t f(u)du,\\
g(t)&\dfn \espalt{x}{r_t e^{-\int_0^t\left(r_u + \gamma(X_u)\right)du}};\qquad G(t) \dfn \int_0^t g(u)du.
\end{split}
\end{equation}
Next, define
\begin{equation*}
h(T,m):=e^{mT}\int_0^T \left(1-e^{-m(T-t)}\right)(mf(t)-g(t))\ dt;\qquad T>0,m>0.
\end{equation*}
Note that we will have a solution to \eqref{E:m_goal_x_factor} if for each $x\in D, T>0$ we can find a number $m = m(x)>0$ such that $h(T,m) = 0$. Indeed, this follows by plugging in $p(t,m)$ from \eqref{E:balance_closed_form_P01} and noting that $e^{mT}, 1-e^{-m(T-t)}$ are strictly positive. To find such an $m$, note that  $h(0,m)=0$ and
\begin{equation*}
\dfrac{\partial}{\partial T}h(T,m)=me^{mT}\int_0^T\left(mf(t)-g(t)\right) dt = me^{mT}(mF(T)-G(T)),
\end{equation*}
so that $h(T,m)=\int_0^T me^{mt}(mF(t)-G(t))\ dt$. Now, for $G$ from \eqref{E:f_g_F_G_def} we have
\[\begin{aligned}
G(t)&=\espalt{x}{\int_0^t \left(r_u \pm \gamma(X_u)\right) e^{-\int_0^u(r_v+\gamma(X_v))dv}du}; \\
&=1-\espalt{x}{\int_0^t \gamma(X_u) e^{-\int_0^u(r_v+\gamma(X_v))dv}du}-\mathbb{E}^x\left[e^{-\int_0^t (r_v+\gamma(X_v))dv}\right]; \\
&=H(t)-\dot{F}(t),
\end{aligned}\]
where we have set $H(t)\dfn 1-\espalt{x}{\int_0^t \gamma(X_u) e^{-\int_0^u(r_v+\gamma(X_v))dv}du}$. Since $r> 0$:
\begin{equation}\label{E:temp_1}
H(t) >1-\espalt{x}{\int_0^t \left(r_u+\gamma(X_u)\right)e^{-\int_0^u(r_v+\gamma(X_v))dv}du} = \dot{F}(t) > 0.
\end{equation}
Coming back to $h$ we have
\begin{equation*}
h(T,m)=\int_0^T me^{mt}\left(mF(t)+\dot{F}(t)-H(t)\right)\ dt=m\left(e^{mT}F(T)-\int_0^Te^{mt}H(t) \ dt\right).
\end{equation*}
Hence, $h(T,m)=0$ is equivalent to $F(T)-\int_0^T e^{-m(T-t)}H(t)dt=0$. Using \eqref{E:temp_1} it is clear that, as a function of $m$, the left hand side is strictly increasing, takes the value $F(T)-\int_0^T H(t)dt < 0$ at $0$, and limits to $F(T)>0$ as $m\uparrow\infty$. Thus, there is a unique $m$ so that $h(T,m) = 0$. The statement regarding the regularity of $m$ follows from Theorem \ref{T:main_result} since fixed points are unique in this case.

\end{proof}

Having established existence and uniqueness in the baseline case, we now perform the perturbation analysis. To do so, assume

\begin{assumption}\label{A:perturb}

$\gamma(x,m,z) = \gamma_0(x) + \eps \gamma_1(x,m,z)$ where $\gamma_0$ satisfies parts $1),2)$ of Assumption \ref{A:gamma} and $\gamma_1\in C^2(E)$ is compactly supported with derivatives which are continuously extendable to $D\times \cbra{0}\times\cbra{0}$.

\end{assumption}

Under Assumptions, \ref{A:region} -- \ref{A:pos_ir} and \ref{A:perturb} it follows from Theorem \ref{T:main_result} that for $\eps > 0$ small enough, there exists a continuous current coupon function $m^\eps$. In fact, $m^{\eps}$ is unique up to leading orders of $\eps$ as well as explicitly identifiable, as the following theorem shows:

\begin{theorem}\label{T:perturb}

Let Assumptions \ref{A:region}--\ref{A:pos_ir} and \ref{A:perturb} hold.  For $\eps>0$ small enough, let $m^{\eps}$ be \emph{any} current coupon function, continuous on $D$. Then we have
\begin{equation}\label{E:perturb}
m^{\eps}(x) = m_0(x) + \eps m_1(x) + \textrm{o}(\eps).
\end{equation}
Above, the convergence is locally uniform for $x\in D$.  The function $m_0$ is the unique fixed point from Proposition \ref{P:gamma_factor} and, for $x\in D$
\begin{equation}\label{E:m1}
m_1(x) = \frac{\espalt{x}{\int_0^T \left(m_0(x)-r_t\right)p(t,m_0(x))\left(\int_0^t\gamma_1(X_u,m_0(x),m_0(X_u))du\right) e^{-\int_0^t\left(r_u + \gamma_0(X_u)\right)du}dt}}{\espalt{x}{\int_0^T \left((m_0(x)-r_t)p_m(t,m_0(x)) + p(t,m_0(x))\right)e^{-\int_0^t\left(r_u +\gamma_0(X_u)\right)du}dt}}.
\end{equation}

\end{theorem}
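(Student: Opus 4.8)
The plan is to treat the defining relation \eqref{E:m_goal_x} for $m^\eps$ as an implicit equation $\Phi(\eps, m^\eps(x)) = 0$ at each fixed $x \in D$, and to differentiate in $\eps$ at $\eps = 0$. Write, for a fixed $x \in D$ and a candidate value $\mu \geq 0$,
\begin{equation*}
\Phi^\eps(x,\mu) \dfn \espalt{x}{\int_0^T p(t,\mu)(\mu - r_t) e^{-\int_0^t\left(r_u + \gamma_0(X_u) + \eps\gamma_1(X_u,\mu,m^\eps(X_u))\right)du}dt}.
\end{equation*}
The current coupon function satisfies $\Phi^\eps(x, m^\eps(x)) = 0$, and at $\eps = 0$ this is exactly the baseline equation \eqref{E:m_goal_x_factor} solved uniquely by $m_0(x)$ from Proposition \ref{P:gamma_factor}. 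The first step is to establish that $m^\eps(x) \to m_0(x)$ as $\eps \downarrow 0$, locally uniformly in $x$: this follows by a compactness/continuity argument, using that $\gamma_1$ is bounded (being $C^2$ with compact support) so the exponential term in $\Phi^\eps$ converges uniformly on compacts to that of $\Phi^0$ as $\eps \downarrow 0$, together with uniqueness of the baseline fixed point (if $m^{\eps_k}(x_k) \to \mu \neq m_0(x)$ along some sequence, passing to the limit in $\Phi^{\eps_k}(x_k, m^{\eps_k}(x_k)) = 0$ contradicts uniqueness). One also needs uniform-in-$\eps$ local bounds on $m^\eps$; these should come from the same a priori estimates used in the proof of Theorem \ref{T:main_result}.

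The second step is to expand $\Phi^\eps(x,\mu)$ to first order in $\eps$ around $(\eps,\mu) = (0, m_0(x))$. Write $m^\eps(x) = m_0(x) + \delta^\eps(x)$ with $\delta^\eps(x) \to 0$. Expanding the exponential, $e^{-\eps\int_0^t \gamma_1(X_u,m^\eps(X_u),\dots)du} = 1 - \eps\int_0^t \gamma_1(X_u, m_0(x), m_0(X_u))du + \textrm{o}(\eps)$, where we have also used $m^\eps(X_u) \to m_0(X_u)$ and continuity of $\gamma_1$ to replace the argument $m^\eps$ by $m_0$ inside the integral at leading order; the boundedness and uniform continuity of $\gamma_1$ make the remainder genuinely $\textrm{o}(\eps)$, locally uniformly. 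Simultaneously Taylor-expand $\mu \mapsto p(t,\mu)(\mu - r_t)$ around $\mu = m_0(x)$: the $\mu$-derivative is $(\mu - r_t)p_m(t,\mu) + p(t,\mu)$, which at $\mu = m_0(x)$ is exactly the integrand appearing in the denominator of \eqref{E:m1}. Collecting terms, $0 = \Phi^\eps(x,m^\eps(x))$ becomes
\begin{equation*}
0 = \delta^\eps(x)\,\espalt{x}{\int_0^T \left((m_0(x)-r_t)p_m(t,m_0(x)) + p(t,m_0(x))\right)e^{-\int_0^t(r_u+\gamma_0(X_u))du}dt} - \eps\,N(x) + \textrm{o}(\eps + \delta^\eps(x)),
\end{equation*}
where $N(x)$ is the numerator of \eqref{E:m1}. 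Solving for $\delta^\eps(x)$ gives $\delta^\eps(x) = \eps m_1(x) + \textrm{o}(\eps)$, hence \eqref{E:perturb} and the formula \eqref{E:m1}, provided the denominator is nonzero.

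The third step is therefore to verify that the denominator in \eqref{E:m1} is strictly positive, so that division is legitimate and $m_1$ is well-defined. This is precisely the statement that $\mu \mapsto \Phi^0(x,\mu)$ has nonvanishing derivative at its root $m_0(x)$ — i.e. that the baseline fixed point is \emph{nondegenerate}. This should be read off from the proof of Proposition \ref{P:gamma_factor}: there the equation $h(T,m) = 0$ was shown equivalent to $F(T) - \int_0^T e^{-m(T-t)}H(t)\,dt = 0$ with the left side \emph{strictly} increasing in $m$, which translates (after undoing the change of variables involving $e^{mT}$ and $1-e^{-m(T-t)}$) into strict positivity of the $\mu$-derivative of $\Phi^0(x,\cdot)$ at $m_0(x)$; one needs to record this quantitatively and confirm it matches the displayed denominator. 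I expect this nondegeneracy bookkeeping, and the attendant \emph{local uniformity} in $x$ of all the $\textrm{o}(\eps)$ estimates (which requires locally uniform bounds on $m_0$, on $m^\eps$, and on the relevant expectations, all on a fixed $D_n$), to be the main obstacle — the implicit-function-theorem heuristic is clear, but making the remainder uniform over $x$ while $\gamma_1$'s argument $m^\eps(X_u)$ is itself only known to converge requires care, most naturally handled by a dominated-convergence argument using the compact support of $\gamma_1$ together with the non-explosion Assumption \ref{A:no_explosion}.
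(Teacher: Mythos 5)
Your proposal follows essentially the same route as the paper's proof: establish locally uniform bounds on $m^{\eps}$ (the paper gets these directly from \eqref{E:num_ub}, Lemma \ref{L:p_prop1} and the Harnack inequality), deduce $m^{\eps}\to m_0$ from uniqueness in Proposition \ref{P:gamma_factor} via dominated convergence, Taylor-expand the fixed-point relation \eqref{E:m_goal_x_e} so that the zeroth-order term vanishes and the first-order term yields \eqref{E:m1}, and control the remainder (including the substitution of $m_0$ for $m^{\eps}$ inside $\gamma_1$) by dominated convergence using the compact support of $\gamma_1$. The only cosmetic difference is that you make the strict positivity of the denominator an explicit step, which the paper leaves implicit in its earlier monotonicity estimates; this is a sound addition rather than a departure.
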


Though the formula for $m_1$ is lengthy, the point of Theorem \ref{T:perturb} is that it is \emph{explicitly identifiable} given $m_0$, the unique fixed point in the baseline case.  Additionally, as will be used in the following section, we point out that the formula for $m_1$ makes perfect sense as long as the relevant random variables and expectations are well defined.  In particular, $\gamma_1$ need not be compactly support and $\gamma_0,\gamma_1$ need not be $C^2$ in order for the above formula to make sense.

\begin{proof}[Proof of Theorem \ref{T:perturb}]

For $\eps > 0$ small enough, let $m^{\eps}(x)$ be \textit{any} continuous solution of \eqref{E:m_goal_x} (or equivalently \eqref{E:m_goal_op}) with $\gamma = \gamma_0 + \eps\gamma_1$. From Theorem \ref{T:main_result} we know at least one such function exists.  First, since $p(t,m)\leq 1, \gamma\geq 0, r\geq 0$ the numerator in \eqref{E:m_goal_op} is bounded above by
\begin{equation}\label{E:num_ub}
\espalt{x}{\int_0^T r_t e^{-\int_0^t r_u du}dt} \leq 1.
\end{equation}
Second, using that $\gamma_1$ is compactly supported (and hence bounded above by some $C_{\gamma_1}$) and Lemma \ref{L:p_prop1} below it follows for any $\eps_0>0$ small enough, the denominator in \eqref{E:m_goal_op} is bounded below by, for $\eps < \eps_0$:
\begin{equation*}
\frac{1}{2}e^{-\eps_0C_{\gamma_1} T}\espalt{x}{\int_0^{T/2} e^{-\int_0^t r_u dt}dt}.
\end{equation*}
As a function of $x$ the above is continuous and strictly positive in $D$, where this latter fact follows from the elliptic Harnack inequality: see \cite[Chapter 4]{MR1326606}. Thus, $m^{\eps}$ is locally bounded on $D$, uniformly in $0<\eps < \eps_0$. Now, recall \eqref{E:m_goal_x}, specified to the current setup:
\begin{equation}\label{E:m_goal_x_e}
\begin{split}
0 &= \espalt{x}{\int_0^T\left(m^{\eps}(x) - r_t\right)p(t,m^{\eps}(x))e^{-\int_0^t\left(r_u + \gamma_0(X_u) + \eps \gamma_1(X_u, m^{\eps}(x),m^{\eps}(X_u))\right)du}dt}.
\end{split}
\end{equation}
We first claim that for each $x\in D$, $\lim_{\eps\downarrow 0}m^{\eps}(x) = m_0(x)$. Indeed, since $m^{\eps}$ is locally bounded in $D$, uniformly in $0<\eps<\eps_0$, it follows for each $x\in D$ that $\cbra{m^{\eps}(x)}_{\eps<\eps_0}$ is uniformly bounded. Let $\eps_n\rightarrow 0$ and assume $m^{\eps_n}(x)\rightarrow \tilde{m}(x)$ for some $\tilde{m}(x)$.  Since $\gamma_1$ is continuous and compactly supported, the dominated convergence theorem yields
\begin{equation*}
0 = \espalt{x}{\int_0^T(\tilde{m}(x)-r_t)p(t,\tilde{m}(x))e^{-\int_0^t\left(r_u + \gamma_0(X_u)\right)du}dt},
\end{equation*}
and so by the uniqueness of $m_0$ from Proposition \ref{P:gamma_factor} we know that $\tilde{m}(x) = m_0(x)$.  Since this works for all subsequences $\eps_n\rightarrow 0$ the convergence result holds. Next, define $\ol{m}$ through
\begin{equation}
m^{\eps}(x)=m_0(x)+\eps\ol{m}(x,\eps);\qquad x\in D,\eps<\eps_0.
\end{equation}
Using Taylor's theorem we have
\begin{equation*}
\begin{split}
m^\eps(x)-r_t &= m_0(x)-r_t + \eps\ol{m}(x,\eps);\\
p(t,m^{\eps}(x)) &= p(t,m_0(x)) + \eps\ol{m}(x,\eps)p_m(t,m_0(x)) + \frac{1}{2}\eps^2\ol{m}(x,\eps)^2p_{mm}(t,\xi(x,\eps));\\
e^{-\eps\int_0^t\gamma_1(X_u,m^{\eps}(x),m^{\eps}(X_u))du} &= 1 - \eps\int_0^t\gamma_1(X_u,m^{\eps}(x),m^{\eps}(X_u))du \\
&\qquad + \frac{1}{2}\eps^2\left(\int_0^t \gamma_1(X_u,m^{\eps}(x),m^{\eps}(X_u))du\right)^2\hat{\xi}(x,\eps,t),
\end{split}
\end{equation*}
where
\begin{equation*}
\vert \xi(x,\eps)\vert \leq \eps\vert \ol{m}(x,\eps)\vert;\qquad 0 \leq \hat{\xi}(x,\eps,t) \leq e^{\eps\int_0^t \gamma_1(X_u,m^{\eps}(x),m^{\eps}(X_u))du}.
\end{equation*}
Plugging these expansions back into \eqref{E:m_goal_x_e} and collecting terms by explicit powers of $\eps$, the zeroth order term is
\begin{equation*}
\espalt{x}{\int_0^T (m_0(x)-r_t)p(t,m_0(x))e^{-\int_0^t(r_u + \gamma_0(X_u))du}dt} = 0,
\end{equation*}
where the equality follows from Proposition \ref{P:gamma_factor}.  The first order (in $\eps$) terms, within the expectation and time integral, are
\begin{equation*}
\begin{split}
&\ol{m}(x,\eps)p(t,m_0(x)) + \ol{m}(x,\eps)(m_0(x)-r_t)p_m(t,m_0(x))\\
&\qquad - (m_0(x)-r_t)p(t,m_0(x))\int_0^t\gamma_1(X_u,m^{\eps}(x),m^{\eps}(X_u))du.
\end{split}
\end{equation*}
Using the given regularity, local boundedness and compactly supported assumptions, all higher order terms together are $O(\eps^2)$, uniformly on compact subsets of $D$. Since the zeroth order term vanishes, we may divide \eqref{E:m_goal_x_e} by $\eps>0$ to obtain
\begin{equation*}
\begin{split}
0&= \ol{m}(x,\eps)\espalt{x}{\int_0^T \left(p(t,m_0(x)) + (m_0(x)-r_t)p_m(t,m_0(x))\right)e^{-\int_0^t\left(r_u + \gamma_0(X_u)\right)du}dt}\\
&\qquad + \espalt{x}{\int_0^T (m_0(x)-r_t)p(t,m_0(x))\int_0^t\gamma_1(X_u,m^{\eps}(x),m^{\eps}(X_u))du\ e^{-\int_0^t\left(r_u + \gamma_0(X_u)\right)du}dt} + \frac{O(\eps^2)}{\eps},
\end{split}
\end{equation*}
which can be re-written as
\begin{equation*}
\begin{split}
\ol{m}(x,\eps) &= \frac{\espalt{x}{\int_0^T (m_0(x)-r_t)p(t,m_0(x))\int_0^t\gamma_1(X_u,m^{\eps}(x),m^{\eps}(X_u))du\ e^{-\int_0^t\left(r_u + \gamma_0(X_u)\right)du}dt} + \frac{O(\eps^2)}{\eps}}{\espalt{x}{\int_0^T \left(p(t,m_0(x)) + (m_0(x)-r_t)p_m(t,m_0(x))\right)e^{-\int_0^t\left(r_u + \gamma_0(X_u)\right)du}dt}};\\
&=m_1(x) + \frac{\espalt{x}{\int_0^T(m_0(x)-r_t)p(t,m_0(x)R(t;x,\eps)e^{-\int_0^t \left(r_u+\gamma_0(X_u)\right)du}dt} + \frac{O(\eps^2)}{\eps}}{\espalt{x}{\int_0^T \left(p(t,m_0(x)) + (m_0(x)-r_t)p_m(t,m_0(x))\right)e^{-\int_0^t\left(r_u + \gamma_0(X_u)\right)du}dt}},
\end{split}
\end{equation*}
where
\begin{equation*}
R(t; x,\eps) \dfn \int_0^t\left(\gamma_1(X_u,m^{\eps}(x),m^{\eps}(X_u))-\gamma_1(X_u,m_0(x),m_0(X_u))\right)du.
\end{equation*}
We have already shown that $m^{\eps}(x)\rightarrow m_0(x)$. Since $m^{\eps}$ is continuous, $m^\eps$ converges to $m_0$ uniformly on compact subsets of $D$.  Since $\gamma_1$ is $C^2$ and compactly supported it thus follows by the dominated convergence theorem that $\lim_{\eps\downarrow 0} \ol{m}(x,\eps)-m_1(x) = 0$ with uniform convergence on compact subsets of $D$, finishing the result.

\end{proof}

\section{A Numerical Approximation}\label{S:numerical}

Theorem \ref{T:perturb} offers a natural numerical approximation for computing current coupon functions. Namely, for a given intensity function $\gamma$ we first identify if there is a decomposition
\begin{equation}\label{E:gamma_decomp}
\gamma(x,m,z) = \gamma_0(x) + \gamma_1(x,m,z),
\end{equation}
and then we compute $m_0$ from $\gamma_0$, define $m_1$ as in \eqref{E:m1} and output the approximation from Theorem \ref{T:perturb} at $\eps = 1$: i.e.
\begin{equation}\label{E:m_approx}
m(x)\approx m_0(x) + m_1(x).
\end{equation}
Note that this approximation is obtainable as long as $m_0,m_1$ are well defined, and does not necessarily require $\gamma_0,\gamma_1$ to satisfy the regularity and growth conditions in Assumption \ref{A:gamma}. Computationally, the advantage of this approximation over naive contraction is clear: there is only one Monte Carlo simulation (for each $x\in D$ along a give mesh) needed to compute $m_1$. 

Next,  we point out that a decomposition \eqref{E:gamma_decomp} is always possible since one may take $\gamma_0 =0$. In this instance, $m_0(x)$ from Proposition \ref{P:gamma_factor} solves
\begin{equation}\label{E:m0_0_gamma}
\frac{1-e^{-m_0(x)T}}{m_0(x)T} = \frac{1}{T}\int_0^T\espalt{\qprob^x}{e^{-\int_0^t r_udu}}dt;\qquad x\in D.
\end{equation}
For many models of interest (e.g. see \cite[Example 6.5.2]{MR2057928} for when $r\sim CIR$), the expectation on the right hand size is explicitly computable and $m_0$ is easily obtained by inverting the strictly decreasing function $y\mapsto (1-e^{-y})/y$. Alternatively, if there is some $\gamma>0$ so that $\gamma(x,m,z)\geq \gamma$ then one can take $\gamma_0(x) = \gamma$ and $\gamma_1(x,m,z) = \gamma(x,m,z) - \gamma$.  Here, for constant $\gamma_0 = \gamma$ calculation shows that $m_0$ satisfies
\begin{equation}\label{E:m0_const_gamma}
\frac{1-e^{-m_0(x)T}}{m_0(x)} = \int_0^Te^{-\gamma t}\espalt{\qprob^x}{e^{-\int_0^t r_udu}}\left(1+\gamma\frac{1-e^{-m_0(x)(T-t)}}{m_0(x)}\right)dt,
\end{equation}
which is easy to obtain numerically given an explicit formula for $\espalt{x}{e^{-\int_0^t r_u}}$. Once $m_0$ is known, one then may compute $m_1$ using Monte Carlo simulation.

\subsection{An Example}\label{SS:numerical_ex}

We now take an example similar to that in \cite[Section 6]{MR2352905} and assume $X$ is a CIR process (i.e. $d=1$, $D=(0,\infty)$ and $X^{(1)} = r$ is a CIR process) and $\gamma$ takes the form
\begin{equation}\label{E:ex_gamma}
\gamma(x,m,z) = \gamma + k(m-z)^+.
\end{equation}
Thus, there is a constant baseline prepayment intensity $\gamma$, and the full intensity is adjusted upwards by the difference between the contract rate $m$ and refinancing rate $z$, when this value is positive.  This adjustment is then scaled by a factor $k > 0$. As in \cite{MR2352905}, we will assume $k=5$ so this is not necessarily a small perturbation off the baseline case.  Here, we perform two approximations. The first sets $\gamma_0(x) = 0, \gamma_1(x) = \gamma + k(m-z)^+$, computes $m_0$ from \eqref{E:m0_0_gamma}, and then $m_1$ from \eqref{E:m1}. The second approximation takes $\gamma_0(x) = \gamma, \gamma_1(x,m,z) = k(m-z)^+$ computes $m_0$ from \eqref{E:m0_const_gamma} and then $m_1$ from \eqref{E:m1}.  For each approximation we compare $m_0 + m_1$ to the '`theoretical fixed point'' $m$ obtained by naive contraction, which in this instance converges rapidly (e.g. after approximately five iterations) to a fixed function for a given initial guess $m^{(0)}$. The model parameters are the same in \cite{MR2352905}: if $dr_t = \kappa(\theta-r_t)dt + \sigma\sqrt{r_t} dW_t$ then $\kappa = 0.25, \theta = 0.06$, $\sigma = 0.1$. Additionally, $\gamma = 0.045$ and $k=5$.

Figure \ref{F:ErrorPlotsGam0} compares $m_0+m_1$ to $m$ when $\gamma_0(x) = 0$.  As shown in the right plot, the approximation does very well, differing by less than $20$ basis points (for an absolute level of $4\%-12\%)$ within the $(2.5\%,97.5\%)$ percentiles of the CIR invariant distribution. In the ``middle'' of the invariant distribution, the approximation is virtually identical to the naive fixed point, with errors consistently between $0-5$ basis points.

Figure \ref{F:ErrorPlotsGam08} makes a similar comparison, using $\gamma_0(x) = \gamma$. Here, the performance is significantly improved with the $(2.5\%,90\%)$ percentiles in that the approximation $m_0 + m_1$ is nearly identical to the function $m$ obtained through niave contraction. Indeed, the difference between $m_0+m_1$ and $m$ is less than $3$ basis points.  However, for large values of $r$ the error is a bit larger than in the previous method, approaching approximately $7$ basis points. 

\begin{figure}
\epsfig{file=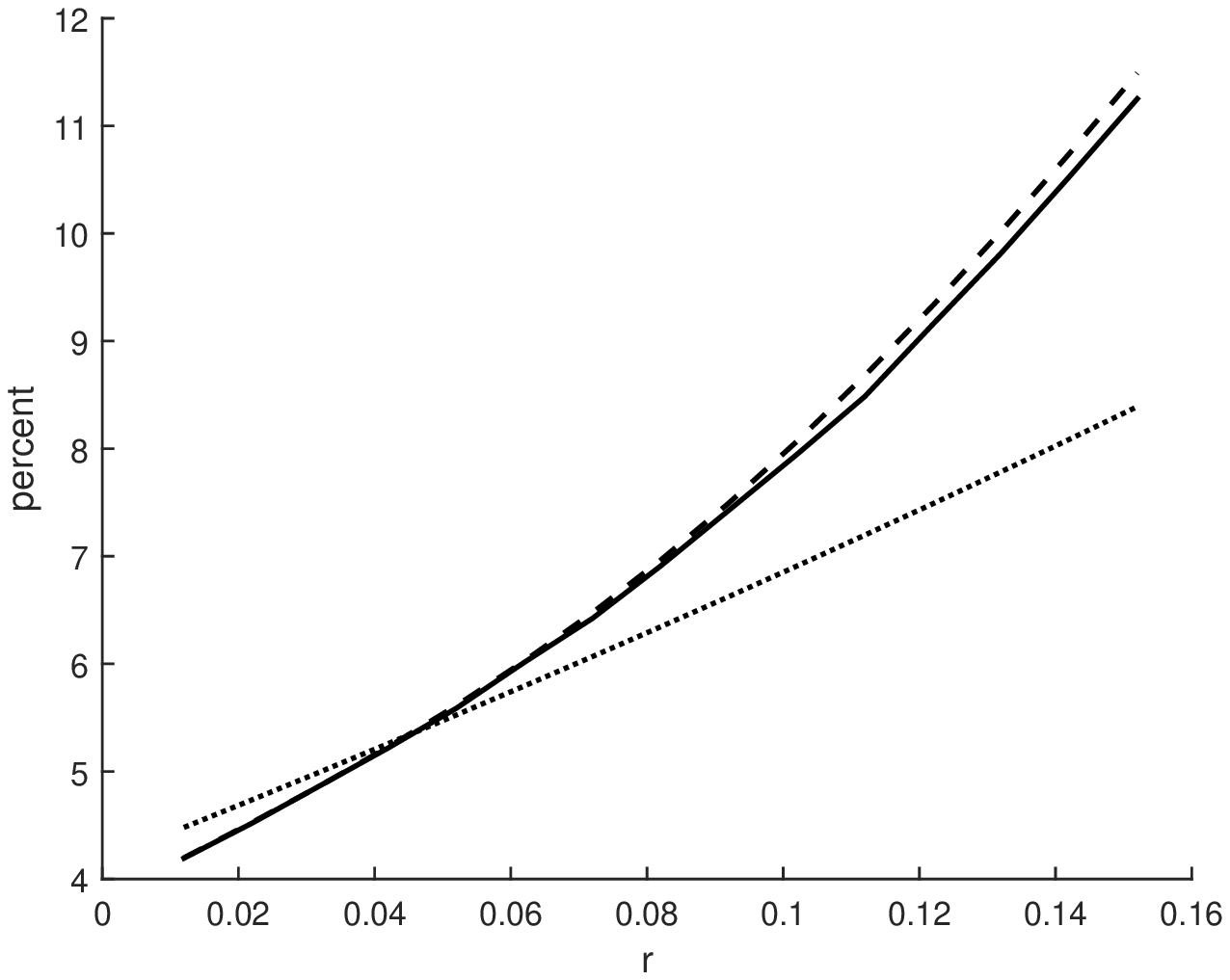,height=4.5cm,width=6.5cm}\quad \epsfig{file=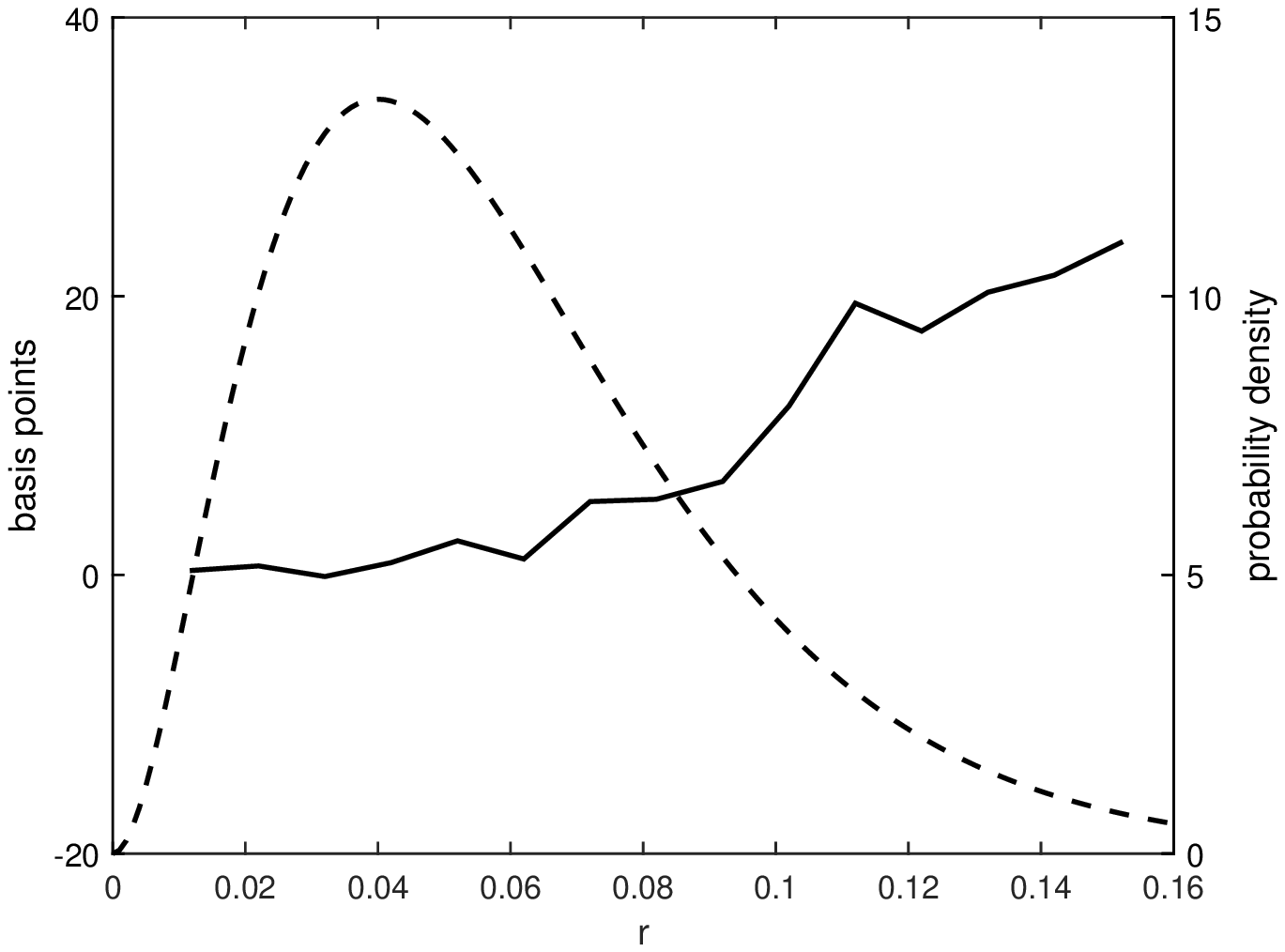,height=4.5cm,width=6.5cm}
\caption{Current coupon functions (left plot) and errors (right plot) as a function of the underlying CIR factor. In the left plot, the solid line is the current coupon function $m$ obtained through naive contraction.  The thick-dash plot is the approximation $m_0 + m_1$ while the thin dash plot is $m_0$.  Values are given in percentage points.  For the right plot, the error is the difference (in basis) points between $m$ and $m_0 + m_1$.  Also in the right plot is the invariant pdf for the CIR process $r$.  $m_0$ is calculated with $\gamma_0(x) = 0$ and $m_1$ is calculated with $\gamma_1(x,m,z) = \gamma + k(m-z)^+$. Parameters are $\kappa = 0.25, \theta = 0.06, \sigma = 0.1$, $T=30$, $k=5$ and $\gamma=0.045$. Computations were performed using \emph{Matlab}, \emph{Mathematica} and the code can be found on the author's website \emph{www.math.cmu.edu/users/scottrob/research}.} \label{F:ErrorPlotsGam0}
\end{figure}

\begin{figure}
\epsfig{file=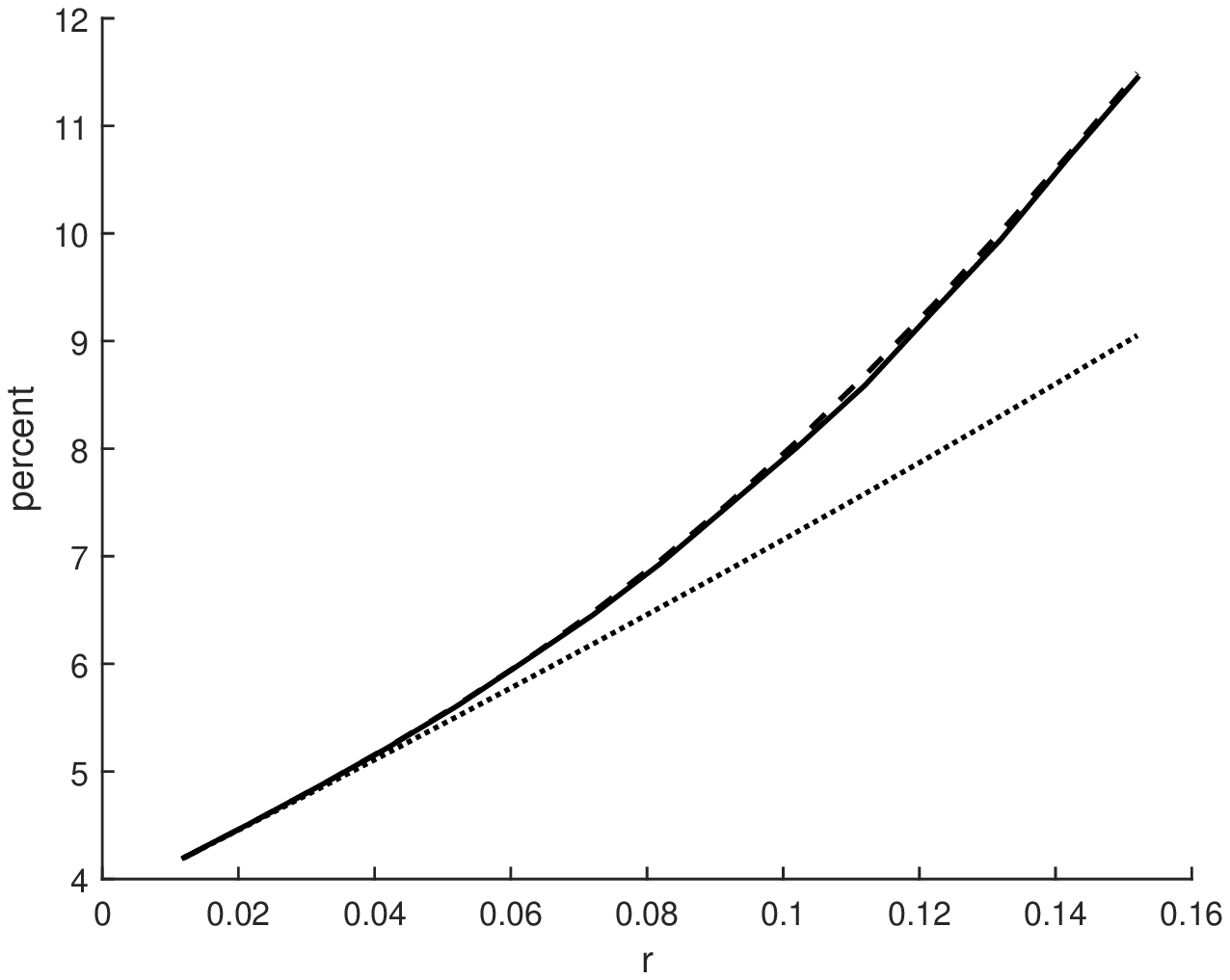,height=4.5cm,width=6.5cm}\quad \epsfig{file=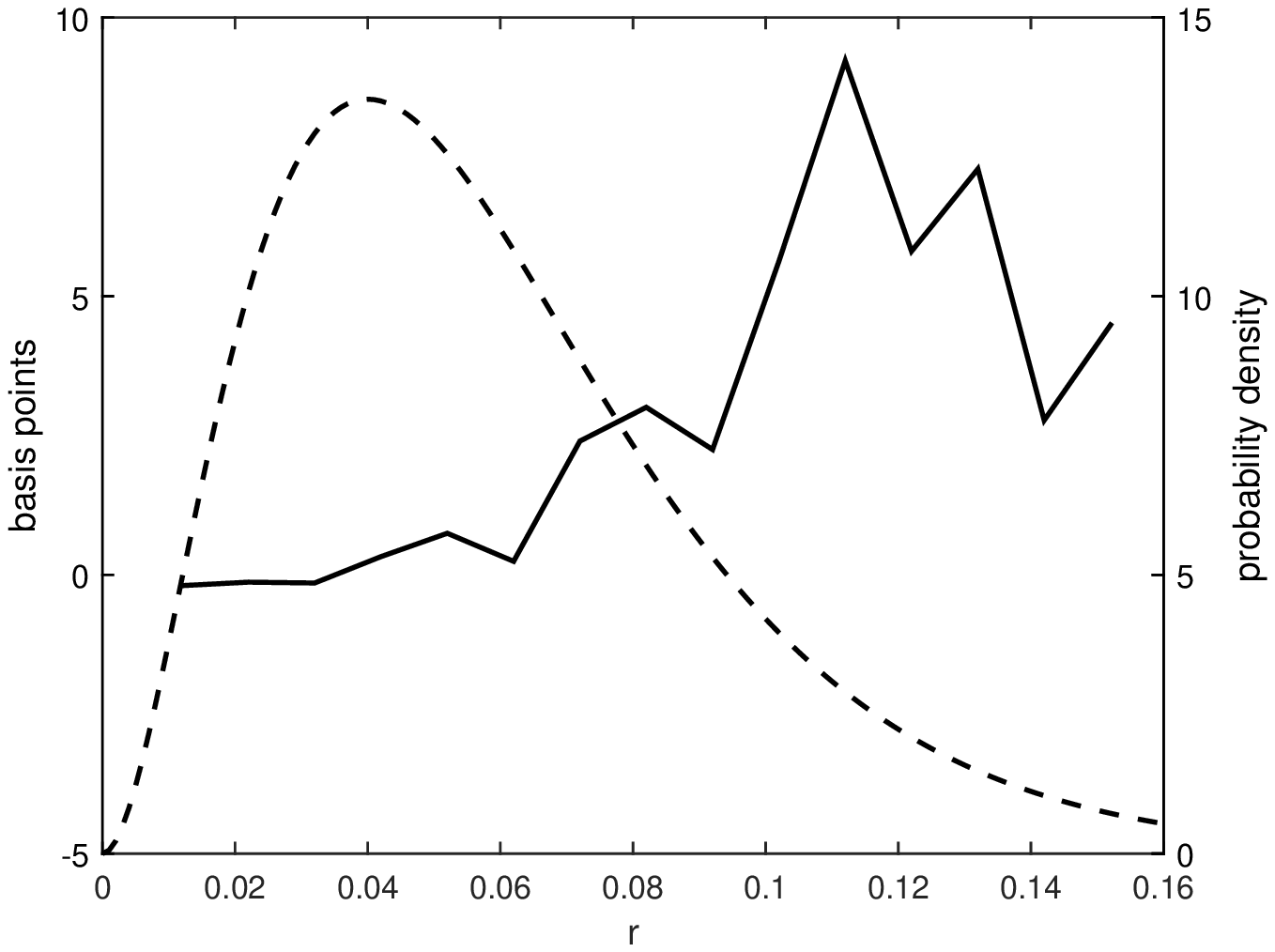,height=4.5cm,width=6.5cm}
\caption{Current coupon functions (left plot) and errors (right plot) as a function of the underlying CIR factor. In the left plot, the solid line is the current coupon function $m$ obtained through naive contraction.  The thick-dash plot is the approximation $m_0 + m_1$ while the thin dash plot is $m_0$.  Values are given in percentage points.  For the right plot, the error is the difference (in basis) points between $m$ and $m_0 + m_1$.  Also in the right plot is the invariant pdf for the CIR process $r$. $m_0$ is calculated with $\gamma_0(x) = \gamma$ and $m_1$ is calculated with $\gamma_1(x,m,z) = k(m-z)^+$.  Parameters are $\kappa = 0.25, \theta = 0.06, \sigma = 0.1$, $T=30$, $k=5$  and $\gamma=0.045$. Computations were performed using \emph{Matlab},\emph{Mathematica} and the code can be found on the author's website \emph{www.math.cmu.edu/users/scottrob/research}.} \label{F:ErrorPlotsGam08}
\end{figure}

\appendix

\section{Proof of Theorem \ref{T:main_result}}\label{S:proof_of_main}

\subsection{Outline of the Proof}\label{SS:proof_outline}

The goal is to show the existence of a function $m:D\mapsto (0,\infty)$ so that \eqref{E:m_goal_op} is satisfied. To do this, we will use Schaefer's Fixed Point Theorem, stated here for the convenience of the reader
\begin{theorem}[Schaefer: \cite{MR1625845}]\label{T:schaefer}
Let $K$ be a closed, convex subset of a Banach space $X$ with $0\in K$. Assume $\mathcal{A}:K\mapsto K$ is continuous, compact and such that $\cbra{u\in K\such u = \lambda \mathcal{A}[u], 0\leq \lambda\leq 1}$ is bounded. Then $\mathcal{A}$ has a fixed point in $K$.
\end{theorem}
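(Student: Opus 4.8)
The plan is to derive Schaefer's theorem from Schauder's fixed point theorem by composing $\mathcal{A}$ with a radial retraction onto a large ball. Since the set $S \dfn \cbra{u \in K \such u = \lambda\mathcal{A}[u] \text{ for some } \lambda\in[0,1]}$ is bounded, I would begin by fixing a radius $M > 0$ with $\norm{u} < M$ for every $u \in S$ (the strictness will matter at the very end). Put $K_M \dfn K\cap\cbra{u\in X\such\norm{u}\leq M}$, which is nonempty, closed, bounded, convex, and contains $0$. Introduce the radial retraction $r:X\to X$ given by $r(u)\dfn\frac{M}{\max\cbra{M,\norm{u}}}\,u$; this is continuous on $X$ (the scalar $M/\max\cbra{M,\norm{u}}\in(0,1]$ depends continuously on $u$), it satisfies $r(u) = u$ whenever $\norm{u}\leq M$, and for $u\in K$ the point $r(u) = tu + (1-t)\cdot 0$ with $t\dfn M/\max\cbra{M,\norm{u}}\in(0,1]$ lies in $K$ by convexity, so $r$ maps $K$ into $K_M$.

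Next I would set $\mathcal{B}\dfn r\circ\mathcal{A}$, which restricts to a self-map $K_M\to K_M$. It is continuous, being a composition of the continuous maps $\mathcal{A}$ and $r$, and it is compact: since $\mathcal{A}$ is compact, $\mathcal{A}(K_M)$ is relatively compact, hence so is its continuous image $\mathcal{B}(K_M) = r(\mathcal{A}(K_M))$. Thus $\mathcal{B}$ is a continuous, compact self-map of the nonempty closed bounded convex set $K_M$, and Schauder's fixed point theorem (see \cite{MR1625845}) yields $u^\ast\in K_M$ with $u^\ast = \mathcal{B}[u^\ast] = r\pare{\mathcal{A}[u^\ast]}$.

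Finally, I would show $\norm{\mathcal{A}[u^\ast]}\leq M$, for then $r(\mathcal{A}[u^\ast]) = \mathcal{A}[u^\ast]$ and $u^\ast = \mathcal{A}[u^\ast]$ is the desired fixed point. Suppose instead $\norm{\mathcal{A}[u^\ast]} > M$. Then $u^\ast = r(\mathcal{A}[u^\ast]) = \lambda\mathcal{A}[u^\ast]$ with $\lambda\dfn M/\norm{\mathcal{A}[u^\ast]}\in(0,1)$, so $u^\ast\in S$ and therefore $\norm{u^\ast} < M$; but in this regime $\norm{u^\ast} = \norm{r(\mathcal{A}[u^\ast])} = M$, a contradiction. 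The only points needing care are that the radial retraction keeps iterates inside $K$ — which is exactly where convexity and $0\in K$ are used — and the correct invocation of Schauder's theorem on $K_M$; beyond choosing $M$ strictly larger than $\sup_{u\in S}\norm{u}$ so that the concluding clash between $\norm{u^\ast} < M$ and $\norm{u^\ast} = M$ is clean, I do not anticipate any essential obstacle, as Schaefer's theorem is a routine consequence of Schauder's.
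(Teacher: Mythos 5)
Your proposal is correct, and it is essentially the canonical proof: the paper itself only quotes Schaefer's theorem from \cite{MR1625845} without proof, and the radial-retraction reduction to Schauder's fixed point theorem that you give is exactly the argument in that cited reference, so there is no divergence to report. The only delicate point --- choosing $M$ strictly larger than the norms of all solutions of $u=\lambda\mathcal{A}[u]$, $0\leq\lambda\leq 1$, so that the final clash between $\norm{u^\ast}<M$ and $\norm{u^\ast}=M$ is a genuine contradiction --- is one you handled explicitly.
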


It is thus necessary to define the Banach space $X$, closed convex subset $K$ and verify the given assumptions regarding $\mathcal{A}$.  For $X$ we would like to choose the space of $\alpha$-H\"{o}lder continuous functions on $D$ and have $K$ be the subspace of non-negative functions.  However, as $D$ is not necessarily bounded, and the covariance matrix $a$ is not necessarily uniformly elliptic on $D$, we will have a difficult verifying the requisite continuity and compactness of the operator $\mathcal{A}$.  Thus, we must first localize the problem.  At the localized level we will obtain a fixed point using Schaefer's theorem. We will then unwind the localization to get the result. As such, the plan is:
\begin{enumerate}[1)]
\item Define an operator $\mathcal{A}^n$ related to $\mathcal{A}$ and show that $\mathcal{A}^n$ has a fixed point $m^n>0$ defined on $D_n$ which is $\alpha$-H\"{o}lder continuous for all $\alpha\in (0,1)$.
\item For each $m$, obtain uniform (in $n$) H\"{o}lder norm estimates on $D_m$ for the fixed points $m^n,n\geq m+1$.
\item Show that $m^n$ has convergent subsequence with limit $m$ which solves the full fixed point problem.
\end{enumerate}

As a first step in the above plan, we need to obtain \textit{a-prioi} H\"{o}lder norm estimates on solutions to certain partial differential equations (PDE) which are defined through expectations.

\subsection{A Priori Estimates of H\"{o}lder norms}

We first recall the standard definitions of the elliptic and parabolic H\"{o}lder spaces. For a more thorough introduction to such spaces see \cite{MR1814364} for the elliptic case and \cite{MR0181836, lieberman1996second, MR1625845} for the parabolic case.

Fix $n\in\mathbb{N}$ and recall the domain (i.e. open connected region) $D_n$ is bounded with smooth boundary. For $k\in\mathbb{N}$, denote by $C^k(D_n)$ the collection of functions $u$ on $D_n$ such that all partial derivatives of order $\leq k$ are continuous, and by $C^k(\ol{D}_n)$ the subspace of functions with partial derivatives of order $\leq k$ that are continuously extendable to $\partial D_n$. Next, for a given function $u$ on $D_n$ and $\alpha\in (0,1]$ set
\begin{equation*}
\vert u\vert_{D_n} \dfn \sup\limits_{x\in D_n} \vert u(x)\vert;\qquad
[u]_{\alpha,D_n}\dfn \sup\limits_{x,y,\in D_n,x\neq y} \frac{|u(x)-u(y)|}{|x-y|^\alpha}.
\end{equation*}
The space $C^{k,\alpha}(\ol{D}_n)$ is defined as the subset of $C^{k}(\ol{D}_n)$ consisting of those functions $u$, whose partial derivatives of order $\leq k$ have finite $\vert\cdot\vert_{D_n}$ norm and whose partial derivatives of order $k$ have finite $[\cdot]_{\alpha, D_n}$ norm. On the space $C^{k,\alpha}(\ol{D}_n)$ define the norm
\begin{equation}\label{E:spatial_norms}
\Vert u\Vert_{k,\alpha,\ol{D}_n} \dfn \vert u\vert_{D_n} + \sum_{j=1}^k\sup_{|\beta|=j} \vert D^\beta u\vert_{D_n} + \sup_{|\beta|=k} [D^\beta u]_{\alpha,D_n},
\end{equation}
where $\beta$ is a multi-index consisting of $d$ non-negative integers $\beta_1,...,\beta_d$ and $|\beta| = \sum_{i=1}^d \beta_i$ and $D^\beta u = \partial^{|\beta|}_{\beta_1,...,\beta_d}$. It is well known that $C^{k,\alpha}(\ol{D}_n)$ with norm $\Vert\cdot\Vert_{k,\alpha,\ol{D}_n}$ is a Banach space. Lastly, when $k=0$ write $C^{\alpha}(\ol{D}_n)$ for $C^{0,\alpha}(\ol{D}_n)$ and $\Vert\cdot\Vert_{\alpha,\ol{D}_n}$ for $\Vert\cdot\Vert_{0,\alpha,\ol{D}_n}$.

For the parabolic H\"{o}lder norms, define the domain $Q_n:=(0,T) \times D_n$. A typical point $P\in Q_n$ takes the form $P=(t,x), 0<t<T, x\in D_n$. For $P_1=(t,x)$, $P_2=(\bar{t},\bar{x}) \in Q_n$, the parabolic distance between $P_1,P_2$ is $d(P_1,P_2)=(\vert x-\bar{x} \vert ^2 +\vert t-\bar{t} \vert)^{\frac{1}{2}}$. Now, let $\alpha\in (0,1]$. We recall the definitions of standard H\"{o}lder norms of a function $u$ defined on $Q_n$:
\begin{equation}\label{E:norms}
\begin{split}
\vert u \vert_{0,n} &\dfn \sup_{P \in Q_n}\vert u(P) \vert;\qquad \bra{u}_{\alpha,n}\dfn \sup_{P_1,P_2 \in Q_n,P_1\neq P_2}\dfrac{\vert u(P_1)-u(P_2) \vert}{d(P_1,P_2)^{\alpha}}; \\
\vert u \vert_{\alpha,n} &\dfn \vert u \vert_{0,n}+ \bra{u}_{\alpha,n}; \\
\vert u \vert_{2+\alpha,n}&= \vert u \vert_{0,n} + \sum_{i=1}^d \vert D_i u \vert_{0,n} + \sum_{i,j=1}^d\vert D^2_{ij} u\vert_{\alpha,n} + \vert D_t u \vert_{\alpha,n}.
\end{split}
\end{equation}
Above, $D_iu = D^1_{0,...,1,...,0}$ and $D^2_{ij}u = D^2_{0,...,1,...,1,...0}u$ with the ones at $i$ and $i,j$ respectively.

We now prove three lemmas which establish \textit{a priori} estimates (both local and global) for the $\alphanorm{\cdot}{n}$ norm and $\Vert \cdot \Vert_{2,\alpha,\ol{D}_n}$ norm of some conditional expectation expressions, which will be essential in the proofs below.  For each $n$, denote by $\tau_n$ the the first exit time of the process $X$ from $D_n$. Each of the lemmas below concern the function $u:D_n\mapsto\reals$ defined by
\begin{equation}\label{E:u_def}
u(x)\dfn\espalt{x}{\int_0^{T \wedge \tau_n}g(t,X_t)e^{-\int_0^t h(u,X_u)du}dt};\qquad x\in D_n,
\end{equation}
where $g(t,x)$ and $h(t,x)$ are functions defined on $Q_n$. To ease presentation, the bounding constants below may change from line to line, and the $n$ in the constants is assumed to absorb  $K_1(n)$,$K_2(n)$, $B_\gamma(n)$, $L_\gamma(n)$ of Assumptions \ref{A:region}--\ref{A:gamma}, as well as the dimension $d$, parabolic domain $Q_n$, and horizon $T$.  We will keep the dependence upon the H\"{o}lder parameter $\alpha$ explicit.


\begin{lemma}[Global $C^{2,\alpha}$ estimate]\label{L:sim_global_schauder}
Let $u: D_n \mapsto\reals$ be defined in \eqref{E:u_def} and assume for some $\alpha\in (0,1]$, $g$ and $h$ satisfy
\begin{equation*}
\begin{split}
\vert g \vert_{\alpha,n} < \infty;&\qquad\ \vert h \vert_{\alpha,n} \leq K_3(n),\\
\lim_{y\rightarrow x,t\rightarrow T} g(t,y) = 0;&\qquad x\in\partial D_n
\end{split}
\end{equation*}
for some positive constant $K_3(n)$. Then
\begin{equation*}
\Vert u \Vert_{2,\alpha,\ol{D}_n} \leq C(n,K_3(n),\alpha)\cdot \vert g \vert_{\alpha,n}.
\end{equation*}
\end{lemma}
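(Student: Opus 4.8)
The plan is to recognize $u$ as the time-zero slice $v(0,\cdot)$ of the solution to a linear parabolic terminal/lateral boundary value problem on $Q_n=(0,T)\times D_n$, to invoke the classical parabolic Schauder theory (existence and a priori bounds in H\"older spaces) for that problem, and finally to transfer the estimate from the parabolic norm of $v$ to the spatial norm of $u$. Concretely, let $L$ be the generator of the diffusion \eqref{E:factors}, $L\phi = \frac12\sum_{i,j}A_{ij}(x)\partial^2_{ij}\phi + \sum_i b_i(x)\partial_i\phi$, and consider
\begin{equation*}
\partial_t v + Lv - h\,v + g = 0 \ \text{ in } Q_n,\qquad v(T,\cdot)=0 \ \text{ on } D_n,\qquad v=0 \ \text{ on } (0,T)\times\partial D_n.
\end{equation*}
After the time reversal $s=T-t$ this is a forward Cauchy--Dirichlet problem with vanishing initial and lateral data, coefficients that are uniformly elliptic on $\ol D_n$ with constant $K_1(n)$ and, being Lipschitz by Assumption \ref{A:factor_coefficients}, lie in $C^{\alpha}(\ol D_n)$, smooth lateral boundary $\partial D_n$, and right-hand sides $g(T-s,\cdot),h(T-s,\cdot)\in C^{\alpha}$ with $\abs{h}_{\alpha,n}\le K_3(n)$. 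The zeroth-order compatibility condition at the parabolic corner $\{s=0\}\times\partial D_n$ is the trivial $0=0$, while the first-order compatibility condition collapses to $g(T,x)=0$ for $x\in\partial D_n$ --- this is exactly the hypothesis $\lim_{y\to x,\,t\to T}g(t,y)=0$. I would then quote the parabolic existence theory in H\"older spaces (\cite{MR0181836,lieberman1996second}) to obtain a unique $v\in C^{2+\alpha}(\ol Q_n)$ together with the Schauder estimate $\abs{v}_{2+\alpha,n}\le C(n,K_3(n),\alpha)\,\abs{g}_{\alpha,n}$; linearity in $g$ is what makes the right-hand side precisely $\abs{g}_{\alpha,n}$, and $K_3(n)$, the ellipticity/Lipschitz constants, the dimension $d$, the domain $Q_n$ and the horizon $T$ enter only the constant, consistent with the stated convention that $n$ absorbs $K_1(n),K_2(n),\dots$.

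Next I would verify the Feynman--Kac identification $u(x)=v(0,x)$. Fixing $x\in D_n$ and writing $\tau_n$ for the first exit time of $X$ from $D_n$, apply It\^o's formula to $M_s\dfn e^{-\int_0^s h(r,X_r)dr}v(s,X_s) + \int_0^s e^{-\int_0^r h(q,X_q)dq}g(r,X_r)\,dr$ on $[0,T\wedge\tau_n]$: using the PDE, the $ds$-terms cancel, so $M$ is a local martingale, and since $v$ is bounded on $\ol Q_n$, $g$ and $h$ are bounded on $Q_n$ (hence $e^{-\int_0^{\cdot}h}$ is bounded on $[0,T]$), it is in fact a bounded martingale. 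Taking expectations, $v(0,x)=\espalt{x}{M_{T\wedge\tau_n}}$; on $\{\tau_n<T\}$ one has $X_{\tau_n}\in\partial D_n$ and on $\{\tau_n\ge T\}$ the clock reaches $T$, so in either case the lateral/terminal conditions force $v(T\wedge\tau_n,X_{T\wedge\tau_n})=0$, leaving $v(0,x)=\espalt{x}{\int_0^{T\wedge\tau_n}e^{-\int_0^r h(q,X_q)dq}g(r,X_r)dr}=u(x)$, which is \eqref{E:u_def}. Since $u=v(0,\cdot)$, the spatial norm of $u$ is dominated by the full parabolic norm of $v$ on the slice $\{t=0\}$, namely $\norm{u}_{2,\alpha,\ol D_n}\le\abs{v}_{2+\alpha,n}$, and combining with the Schauder bound above gives $\norm{u}_{2,\alpha,\ol D_n}\le C(n,K_3(n),\alpha)\abs{g}_{\alpha,n}$.

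The hard part is not the estimate itself --- existence in parabolic H\"older spaces and the Schauder inequality are standard citations --- but the two verifications surrounding it. First, one must check the compatibility condition at the parabolic corner $\{T\}\times\partial D_n$: this is precisely where $\lim_{y\to x,\,t\to T}g(t,y)=0$ is used, and without it $v$ need not be $C^{2+\alpha}$ up to the closure, so the claimed bound would be false. Second, one must carry out the It\^o/stopping-time argument carefully: justifying that the local martingale $M$ is genuinely a martingale (from boundedness of $v$, $g$, $h$ on the relevant bounded sets) and that the stopped process lands on the part of the parabolic boundary where $v$ vanishes. A minor but necessary additional task is the bookkeeping confirming that the Schauder constant depends only on $n$, $K_3(n)$ and $\alpha$, and not, for instance, on $\abs{g}_{\alpha,n}$ itself or on pointwise data of $u$.
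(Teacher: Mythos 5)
Your proposal is correct and follows essentially the same route as the paper: identify $u$ as the time-zero slice of the solution to the Cauchy--Dirichlet problem $U_t+\mathcal{L}U-hU=-g$ with zero terminal and lateral data, observe that the hypothesis on $g$ at $\{T\}\times\partial D_n$ is exactly the compatibility condition, and apply the boundary Schauder estimate of \cite{MR0181836}. The only cosmetic difference is that the paper cites Friedman's stochastic-representation theorem to link the expectation to the PDE, whereas you verify that link directly via It\^o's formula and a stopped bounded martingale; both are standard and equivalent here.
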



\begin{proof}
Clearly $u(x)=U(0,x)$, where
\begin{equation*}
U(t,x)\dfn\espalt{x}{\int_t^{T \wedge \tau_n}g(s,X_s)e^{-\int_s^t h (\theta,X_\theta)d\theta}\ dt};\qquad t\leq T, x\in D_n.
\end{equation*}
Under the given regularity and ellipticity assumptions, \cite[Theorem 3.7]{MR0181836} implies $U$ is the unique solution to the Cauchy-Dirichlet problem
\begin{equation}\label{E:U_CD_prob}
\begin{cases}
&U_t+\mathcal{L}U-h(t,x)U=-g(t,x), \quad (t,x) \in Q_n, \\
&u(T,x)=0, \quad x \in D_n, \\
&u(t,x)=0, \quad (t,x) \in [0,T]\times \partial D_n.
\end{cases}
\end{equation}
The boundary Schauder estimate (see \cite[Theorems 3.6, 3.7]{MR0181836} and note the condition on $g$ as $t\uparrow T,y\rightarrow x$ is the compatibility condition therein) for parabolic equations yields
\begin{equation*}
\Vert u \Vert_{2,\alpha,\ol{D}_n} \leq \vert U \vert_{2+\alpha,n} \leq C(n,K_3(n),\alpha)\vert g \vert_{\alpha,n}.
\end{equation*}

\end{proof}


\begin{lemma}[Global $C^{\alpha}$ estimate]\label{L:sim_global_holder}
Let $u: D_n \mapsto \reals$ be defined in \eqref{E:u_def} and assume for some $\alpha_0\in (0,1]$ that $g,h$ satisfy
\begin{equation*}
\vert g \vert_{\alpha_0,n} < \infty,\ \vert h \vert_{\alpha_0,n} < \infty,\  \vert h \vert_{0,n} \leq K_4(n),
\end{equation*}
for some positive constant $K_4(n)$. Then for all $\alpha \in (0,1)$
\begin{equation*}
\Vert u \Vert_{\alpha,\ol{D}_n} \leq C(n,K_4(n),\alpha,\alpha_0)\cdot\vert g \vert_{0,n}.
\end{equation*}
\end{lemma}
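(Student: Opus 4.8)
The plan is to realize $u$ as the time-zero slice $u(\cdot)=U(0,\cdot)$ of the solution $U$ on $Q_n$ of the linear parabolic Cauchy--Dirichlet problem $U_t+\mathcal{L}U-hU=-g$ with zero terminal data $U(T,\cdot)=0$ on $D_n$ and zero lateral data $U=0$ on $[0,T]\times\partial D_n$, and then to read off the H\"older bound from the smoothing of the (uniformly, on $\bar{D}_n$) elliptic operator $\mathcal{L}$ over the fixed time interval $[0,T]$ --- a smoothing that is controlled by the supremum of $g$, not by its H\"older norm. Thus, exactly as in the proof of Lemma \ref{L:sim_global_schauder}, the final estimate will involve only $\vert g\vert_{0,n}$, while $\vert g\vert_{\alpha_0,n}<\infty$ and $\vert h\vert_{\alpha_0,n}<\infty$ enter only qualitatively, to guarantee that $U$ is a classical solution.

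First I would establish the representation $u=U(0,\cdot)$. Since $A$ and $b$ are Lipschitz on $\bar{D}_n$ and $g,h$ are $\alpha_0$-H\"older, \cite[Theorem 3.7]{MR0181836} provides a solution $U$ that is classical in $Q_n$ and continuous on $[0,T)\times\bar{D}_n$; note that, in contrast to Lemma \ref{L:sim_global_schauder}, we do not assume $g$ vanishes at the corner $\{T\}\times\partial D_n$, so we only obtain regularity up to $[0,T)\times\bar{D}_n$, which is all that is needed since $t=0$ is away from that corner. Applying It\^o's formula to $s\mapsto U(s\wedge\tau_n,X_{s\wedge\tau_n})\exp(-\int_0^{s\wedge\tau_n}h(\theta,X_\theta)\,d\theta)$ on $[0,T)$, letting $s\uparrow T$, and using $\vert h\vert_{0,n}\le K_4(n)$ together with $\vert g\vert_{0,n}<\infty$ to dominate and to justify the optional stopping (the lateral condition $U|_{\partial D_n}=0$ handling $\{\tau_n\le T\}$, the terminal condition handling $\{\tau_n>T\}$) yields $U(0,x)=u(x)$.

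Next I would promote this into the claimed H\"older bound. On the compact set $\bar{D}_n$ the matrix $A$ is uniformly elliptic and $A,b$ are bounded, so the parabolic $L^p$ estimate for the Cauchy--Dirichlet problem gives, for each $p\in(1,\infty)$,
\[
\Vert U\Vert_{W^{2,1}_p(Q_n)}\le C(n,K_4(n),p)\,\Vert g\Vert_{L^p(Q_n)}\le C(n,K_4(n),p)\,\vert g\vert_{0,n},
\]
the last step using that $Q_n$ is bounded. Taking $p=d+3>d+2$, the parabolic Sobolev embedding yields $U\in C^{1+\beta,(1+\beta)/2}(\bar{Q}_n)$ with $\beta=1-(d+2)/p\in(0,1)$; restricting to $t=0$ and using $C^1(\bar{D}_n)\hookrightarrow C^{\alpha}(\bar{D}_n)$ for every $\alpha\in(0,1)$ with embedding constant bounded by $1+\max(1,\mathrm{diam}\,D_n)$ uniformly in $\alpha$, we obtain $\Vert u\Vert_{\alpha,\bar{D}_n}\le C(n,K_4(n),\alpha,\alpha_0)\,\vert g\vert_{0,n}$. (Equivalently one may write $U(t,x)=\int_t^T\!\!\int_{D_n}\Gamma(t,x;s,y)g(s,y)\,dy\,ds$ for the Dirichlet Green's function $\Gamma$ of $\partial_t+\mathcal{L}-h$ --- whose existence and Gaussian gradient bound $\vert\partial_x\Gamma(t,x;s,y)\vert\le C(s-t)^{-(d+1)/2}e^{-c\vert x-y\vert^2/(s-t)}$ are classical for H\"older coefficients --- and bound $\vert\partial_xU(0,x)\vert\le C\vert g\vert_{0,n}\int_0^T s^{-1/2}\,ds<\infty$, again giving a $C^1$, hence $C^\alpha$, bound.)

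I expect the main obstacle to be the bookkeeping in the first step: rigorously identifying $u$ with $U(0,\cdot)$ while handling the lack of regularity at the terminal corner $\{T\}\times\partial D_n$ (which forces the It\^o argument to be run on $[0,T)$ and then passed to the limit) and the interaction of the exit time $\tau_n$ with the lateral boundary condition. The regularity inputs in the third step are entirely standard once one notes that on the fixed, bounded, smooth cylinder $Q_n$ the operator $\partial_t+\mathcal{L}-h$ has uniformly elliptic, bounded, (H\"older-)continuous coefficients, so the only role of the hypothesis $\vert g\vert_{\alpha_0,n}<\infty$ is to license the classical Feynman--Kac representation, after which the gain from $\vert g\vert_{\alpha_0,n}$ to $\vert g\vert_{0,n}$ is precisely the parabolic smoothing.
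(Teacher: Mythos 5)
Your proposal is correct and follows essentially the same route as the paper: identify $u(x)=U(0,x)$ with $U$ solving the Cauchy--Dirichlet problem \eqref{E:U_CD_prob}, apply the parabolic $W^{2,1}_p$ estimate (which is controlled by $\vert g\vert_{0,n}$ alone), and conclude via Sobolev/Morrey embedding on the Lipschitz cylinder $Q_n$. The only cosmetic difference is how the representation is licensed --- the paper cites a stochastic-representation theorem directly rather than running Friedman plus It\^o --- and that you embed through $C^{1+\beta}$ at fixed $p=d+3$ whereas the paper takes $p$ large depending on $\alpha$; both are immaterial.
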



\begin{proof}
Since $g,h$ are $\alpha_0$-H\"{o}lder continuous, we can invoke \cite[Theorem 5.2]{MR2295424} regarding stochastic representations of solutions to parabolic PDEs to write $u(x) = U(0,x)$ where $U$ satisfies the linear parabolic PDE in \eqref{E:U_CD_prob}. Using the boundary Boundary $W^{2,1}_p$ estimate for parabolic equations in \cite[Theorem 7.3.2]{lieberman1996second} we obtain for all $p>1$,
\begin{equation*}
\Vert U\Vert_{L^{p}(Q_n)}+\Vert DU\Vert_{L^{p}(Q_n)}+\Vert U_t\Vert_{L^{p}(Q_n)} \leq C(n,K_4(n),\alpha_0)\vert g\vert_{0,n}.
\end{equation*}
Now, let $\alpha \in (0,1)$. Since $Q_n$ is a Lipschitz domain  we can apply the Sobolev embedding (Morrey's inequality) to get, for a sufficiently large $p$ depending upon $\alpha$ (as well as the model coefficients, domain, $\alpha_0$, etc.)
\begin{equation*}
\Vert u \Vert_{\alpha,\ol{D}_n} \leq \vert U\vert_{\alpha,n} \leq C(n,K_4(n),\alpha,\alpha_0)\Vert U \Vert_{W^{1,p}(Q_n)} \leq C(n,K_4(n),\alpha,\alpha_0)\vert g\vert_{0,n}.
\end{equation*}
\end{proof}


\begin{lemma}[Interior $C^{\alpha}$ estimate]\label{L:sim_interior_holder}
Let $u:D_n\mapsto\reals$ be defined in \eqref{E:u_def} and assume for some $\alpha_0\in (0,1]$ that $g,h$ satisfy
\begin{equation*}
\vert g \vert_{\alpha_0,n} < \infty,\ \vert h \vert_{\alpha_0,n} < \infty,\  \vert h \vert_{0,n} \leq K_4(n),
\end{equation*}
for some positive constant $K_4(n)$. Let $\alpha\in (0,1)$. We then have for all $m<n$ that
\begin{equation*}
\Vert u \Vert_{\alpha,\ol{D}_m} \leq C(m,K_4(m+1),\alpha,\alpha_0)\cdot\left(\vert g \vert_{0,m+1}+\vert U \vert_{0,m+1}\right),
\end{equation*}
where $U$ is satisfies the linear parabolic PDE \eqref{E:U_CD_prob}.
\end{lemma}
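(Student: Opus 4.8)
The plan is to run the argument of Lemma \ref{L:sim_global_holder}, replacing its global boundary $W^{2,1}_p$ estimate by the corresponding interior estimate. As there, since $g$ and $h$ are $\alpha_0$-H\"older continuous, \cite[Theorem 5.2]{MR2295424} lets me write $u(x)=U(0,x)$ where $U$ solves the Cauchy--Dirichlet problem \eqref{E:U_CD_prob} on $Q_n=(0,T)\times D_n$, and (from the proof of Lemma \ref{L:sim_global_holder}) $U\in W^{2,1}_p(Q_n)$ for every $p>1$; this is the only place $\alpha_0$ enters. The essential observation is that \eqref{E:U_CD_prob} is a \emph{local} equation: on the smaller cylinder $Q_{m+1}=(0,T)\times D_{m+1}$ the function $U$ still satisfies $U_t+\mathcal{L}U-hU=-g$ in the interior together with the terminal condition $U(T,\cdot)=0$ on $D_{m+1}$, and on $Q_{m+1}$ the quantities that control interior estimates --- the ellipticity constant $K_1(m+1)$, the Lipschitz constant $K_2(m+1)$ of the coefficients of $\mathcal{L}$, and the sup-bound $\vert h\vert_{0,m+1}\leq K_4(m+1)$ --- depend only on the level $m+1$, not on $n$. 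This is precisely what will render the final constant $n$-free.

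First I would apply the interior parabolic $W^{2,1}_p$ estimate on the cylinder pair $Q_m\subset Q_{m+1}$ (interior in the space variable, with the terminal datum $U(T,\cdot)=0$ handling the time endpoint; see e.g. \cite[Chapter 7]{lieberman1996second}), which yields
\[
\Vert U\Vert_{W^{2,1}_p(Q_m)}\;\leq\;C\big(m,K_4(m+1),p\big)\left(\Vert g\Vert_{L^p(Q_{m+1})}+\Vert U\Vert_{L^p(Q_{m+1})}\right),
\]
with the constant depending further only on the fixed model coefficients restricted to $D_{m+1}$, the horizon $T$, the dimension $d$, and the positive distance $\mathrm{dist}(\ol{D}_m,\partial D_{m+1})$. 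Since $Q_{m+1}$ has finite volume, $\Vert g\Vert_{L^p(Q_{m+1})}\leq C\vert g\vert_{0,m+1}$ and $\Vert U\Vert_{L^p(Q_{m+1})}\leq C\vert U\vert_{0,m+1}$. Then, exactly as in Lemma \ref{L:sim_global_holder}, for a given $\alpha\in(0,1)$ I would pick $p$ large enough (depending on $\alpha$ and $d$) so that the parabolic Sobolev embedding / Morrey inequality gives $\vert U\vert_{\alpha,m}\leq C\,\Vert U\Vert_{W^{2,1}_p(Q_m)}$, and finally restrict to the time slice $t=0$, along which the parabolic distance reduces to the Euclidean one, so that $\Vert u\Vert_{\alpha,\ol{D}_m}=\Vert U(0,\cdot)\Vert_{\alpha,\ol{D}_m}\leq\vert U\vert_{\alpha,m}$. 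Chaining the displayed bounds gives the claim.

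The only genuinely delicate step is checking the applicability of the interior $L^p$ estimate: the coefficients of $\mathcal{L}$ are locally Lipschitz, hence continuous on $\ol{D}_{m+1}$ and uniformly elliptic there, $h$ is bounded on $Q_{m+1}$ by $K_4(m+1)$, and the equation carries a genuine (zero) terminal condition, so no compatibility condition is needed. I expect the main obstacle --- a matter of care rather than difficulty --- to be the bookkeeping: one must verify that every constant entering the chain depends only on $m$, on level-$(m+1)$ data, on $\alpha$, and (through the stochastic representation) on $\alpha_0$, but never on $n$, since this uniformity in $n$ is the whole purpose of isolating the interior estimate from the global one of Lemma \ref{L:sim_global_holder}, and it is exactly what the subsequence extraction outlined in Section \ref{SS:proof_outline} will use.
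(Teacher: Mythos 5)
Your proof is correct and follows essentially the same route as the paper: stochastic representation $u=U(0,\cdot)$, the interior parabolic $W^{2,1}_p$ estimate on a subcylinder over $D_m\subset\subset D_{m+1}$ with constants depending only on level-$(m+1)$ data, then Sobolev/Morrey embedding for $p$ large and restriction to the slice $t=0$. The only cosmetic difference is the localization in time: the paper works on $Q_m'=(0,T/2)\times D_m$ so as to stay away from the terminal time entirely, whereas you keep the full interval $(0,T)$ and invoke the zero terminal datum, which is equally valid.
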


\begin{proof}
Again  $u(x)=U(0,x)$, where $U$ satisfies \eqref{E:U_CD_prob}. Set
\begin{equation*}
Q_m^\prime:=\left(0,\frac{T}{2}\right)\times D_m.
\end{equation*}
For $p\geq 2$, the interior $W^{2,1}_p$ estimate for parabolic equations \cite[Theorem 7.22]{lieberman1996second} yields
\begin{equation*}
\Vert U\Vert_{L^{p}(Q_m^\prime)}+\Vert DU\Vert_{L^{p}(Q_m^\prime)}+\Vert U_t\Vert_{L^{p}(Q_m^\prime)} \leq C(m,K_4(m+1),\alpha_0)\left(\vert g\vert_{0,m+1}+\vert U\vert_{0,m+1}\right).
\end{equation*}
Since $Q_m^\prime$ is a Lipschitz domain, Sobolev embedding yields for any $\alpha\in (0,1)$ by taking $p$ large enough that
\begin{equation*}
\begin{split}
\Vert u \Vert_{\alpha,\ol{D}_m} &\leq \Vert U\Vert_{\alpha, Q^\prime_m} \leq C(m,K_4(m+1),\alpha,\alpha_0)\Vert U \Vert_{W^{1,p}(Q_m^\prime)}\\
&\leq C(m,K_4(m+1),\alpha,\alpha_0)\left(\vert g\vert_{0,m+1}+\vert U\vert_{0,m+1}\right),
\end{split}
\end{equation*}
where above we have set $\|\cdot\|_{\alpha,Q^\prime_m}$ as the $\alpha$-H\"{o}lder norm on the region $Q^\prime_m$.

\end{proof}

\subsection{The localized problem}\label{SS:local}

Throughout this section, Assumptions \ref{A:region}--\ref{A:gamma} are in force. We first seek functions $m = m^n$ on $D_n$ satisfying (compare with \eqref{E:m_goal_x}), for each $x\in D_n$:
\begin{equation}\label{E:n_fp}
\espalt{x}{\int_0^{T\wedge\tau_n}(m(x) - r_t)p(t,m(x))e^{-\int_0^t(r_u+\gamma(X_u,m(x),m(X_u)))du}dt} + \frac{m(x)^2}{n(1-e^{-m(x)T})}=0.
\end{equation}
The second term above is a correction term introduced to establish local regularity of solutions $m$, and will vanish as $n\uparrow\infty$. To establish existence of solutions, let $\alpha\in (0,1)$ and fix a function $\eta\in \Kn$ where
\begin{equation}\label{E:Kn_def}
\Kn\dfn \cbra{\eta \in C^{\alpha}(\overline{D}_n) : \eta\geq 0},
\end{equation}
and look for functions $m = m^{n,\eta}$ solving, for $x\in D_n$:
\begin{equation}\label{E:n_eta_fp}
\espalt{x}{\int_0^{T\wedge\tau_n}(m(x) - r_t)p(t,m(x))e^{-\int_0^t(r_u+\gamma(X_u,m(x),\eta(X_u)))du}dt} + \frac{m(x)^2}{n(1-e^{-m(x)T})}=0.
\end{equation}
I.e. we substitute $\eta(X_t)$ for $m^n(X_t)$ in $\gamma$. Since $\lim_{m\downarrow 0} m^2/(1-e^{-mT}) = 0$ we define the second term above to be $0$ when $m(x) = 0$.   Proposition \ref{P:full_lemma1} below establishes existence and uniqueness of such functions $m^{n,\eta}$.  This defines the map $\mathcal{A}^n[\eta] \dfn m^{n,\eta}$.  Using the \textit{a-prioi} estimates established in the previous section we then verify this map satisfies the hypotheses of Schaefer's theorem \ref{T:schaefer} and hence there is a fixed point $m^n$ satisfying $m^n = \mathcal{A}^n[m^n]$ which is equivalent to $m^n$ solving \eqref{E:n_fp}.

Before proving Proposition \ref{P:full_lemma1} we state two technical lemmas, proved in Appendix \ref{Ap:sup_pf}. First, define
\begin{equation}\label{E:bound_for_r_n}
C^{(1)}_n\dfn \sup\cbra{x^{(1)}\ :\ x\in D_n};\qquad C_n \dfn \sup\cbra{ |x|\ :\ x\in D_n},
\end{equation}
and note that any solution of \eqref{E:n_fp} must \textit{a-priori} satisfy $0\leq m^n(x) < C^{(1)}_n$. Additionally, as in the previous section, the bounding constants below may change from line to line and their dependence on $n$ is understood to absorb the dependence upon the constants $K_1(n),K_2(n), L_\gamma(n), B_\gamma(n)$ of Assumptions \ref{A:factor_coefficients}, \ref{A:gamma}, as well as the region $D_n$, dimension $d$ and maturity $T$.  To state the lemmas, for $\eta\in K_n$ define the function $k^n(m,x;\eta)$ for $x\in D_n,m>0$ by
\begin{equation}\label{E:sim_eqn_0}
k^{n}(m,x;\eta)\dfn \frac{1}{m}\espalt{x}{\int_0^{T \wedge \tau_n} \left(m-r_t\right) \left(1-e^{-m(T-t)}\right)e^{-\int_0^t (r_u + \gamma(X_u,m,\eta(X_u)))du}dt} + \frac{m}{n},
\end{equation}
and note from \eqref{E:balance_closed_form_P01} that \eqref{E:n_eta_fp} holds if for each $x\in D_n$ we can find $m=m(x)=m^{n,\eta}(x)>0$ so that $k^n(m,x;\eta) = 0$. The first technical lemma establishes regularity of $k^n$ in $(x,m)$ for a fixed $\eta$.


\begin{lemma}\label{L:k_n_tech_lemma}
Let $\alpha\in (0,1)$ and $\eta\in\Kn$ and define $k^n$ as in \eqref{E:sim_eqn_0}. Then
\begin{enumerate}[1)]
\item For a fixed $x\in D_n$, $k^n(\cdot,x;\eta)$ is continuously differentiable on $(0,\infty)$. Furthermore, there exists a constant $A(n)$ such that for all $\eta\in\Kn$, $m>0$ and $x\in D_n$:
\begin{equation}\label{E:k_n_m_deriv_bdd}
\frac{1}{n}\leq \partial_{m}k^n(x,m;\eta)\leq A(n).
\end{equation}
\item For a fixed $m>0$, $k^n(m,\cdot;\eta)\in C^{2,\alpha}(\ol{D}_n)$ and there exists a constant $\Lambda(n,\alphanorm{\eta}{n})$ such that for all $0<m\leq C^{(1)}_n$
\begin{equation}\label{E:k_n_x_deriv_bdd}
\Vert k^n(m,\cdot;\eta)\Vert_{2,\alpha,\ol{D}_n} \leq \Lambda(n,\alphanorm{\eta}{n}).
\end{equation}
For $R>0$, $\Lambda(n,\alphanorm{\eta}{n})$ can be made uniform (i.e. depending only upon $n,R$) for $\alphanorm{\eta}{n}\leq R$.

\end{enumerate}
\end{lemma}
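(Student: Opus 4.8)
The plan is to rewrite $k^n$ by dividing \eqref{E:sim_eqn_0} through by $m$, so that $k^n(m,x;\eta)=\espalt{x}{\int_0^{T\wedge\tau_n}(m-r_t)q(t,m)\,e^{-\int_0^t(r_u+\gamma(X_u,m,\eta(X_u)))du}dt}+\tfrac mn$ with $q(t,m):=\tfrac{1-e^{-m(T-t)}}{m}=\int_0^{T-t}e^{-mv}\,dv$, and then to treat the $m$-regularity (part 1) and the $x$-regularity (part 2) separately. For part 1: with $x$ fixed the integrand is pathwise $C^1$ in $m$ on $(0,\infty)$ (since $\gamma\in C^1$), and on a compact subinterval $[\underline m,\overline m]\subset(0,\infty)$ its $m$-derivative is dominated uniformly — up to $\tau_n$ one has $r\le C^{(1)}_n$ and $0\le\gamma\le B_\gamma(n)$, while differentiating the exponential produces $\int_0^t\gamma_m(X_u,m,\eta(X_u))\,du$, which by \eqref{E:gamma_m_bdd} and the monotonicity of $\Xi$ is at most $T\Xi(\underline mT)<\infty$ — so dominated convergence gives $k^n(\cdot,x;\eta)\in C^1((0,\infty))$ and
\[ \partial_m k^n(m,x;\eta)=\frac1n+\espalt{x}{\int_0^{T\wedge\tau_n}\!\Big(D(t,m)-(m-r_t)q(t,m)\,\Gamma'_t(m)\Big)e^{-\int_0^t(r_u+\gamma(X_u,m,\eta(X_u)))du}dt}, \]
where $\Gamma'_t(m):=\int_0^t\gamma_m(X_u,m,\eta(X_u))\,du\ge0$ and $D(t,m):=\partial_m\{(m-r_t)q(t,m)\}=(T-t)e^{-m(T-t)}+r_t\int_0^{T-t}ve^{-mv}\,dv\ge0$.

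The lower bound $\partial_m k^n\ge\tfrac1n$ is the heart of the matter: since the exponential factor is nonnegative it suffices to show the bracket $D(t,m)-(m-r_t)q(t,m)\Gamma'_t(m)$ is $\ge0$. If $m<r_t$ this is immediate because $(m-r_t)q(t,m)\Gamma'_t(m)\le0$. If $m\ge r_t$ then $(m-r_t)q(t,m)\Gamma'_t(m)\le mq(t,m)\Gamma'_t(m)=(1-e^{-m(T-t)})\Gamma'_t(m)$, and here I would invoke the bound \eqref{E:gamma_m_bdd} \emph{not} via the crude $\Xi(mT)\le\tfrac1{mT}$ but with the tailored choice $\beta=(T-t)/T$ in the infimum \eqref{E:Xi_def}: this yields $\gamma_m(\cdot,m,\cdot)\le\Xi(mT)\le\tfrac{(T-t)e^{-m(T-t)}}{t(1-e^{-m(T-t)})}$, hence $\Gamma'_t(m)\le\tfrac{(T-t)e^{-m(T-t)}}{1-e^{-m(T-t)}}$ and so $(1-e^{-m(T-t)})\Gamma'_t(m)\le(T-t)e^{-m(T-t)}$, which cancels the leading term of $D(t,m)$ and leaves the bracket $\ge r_t\int_0^{T-t}ve^{-mv}\,dv\ge0$. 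For the upper bound, $D(t,m)\le(T-t)+\tfrac12 r_t(T-t)^2\le T+\tfrac12 C^{(1)}_nT^2$; the term $-(m-r_t)q(t,m)\Gamma'_t(m)$ is $\le0$ when $m\ge r_t$, and when $m<r_t\le C^{(1)}_n$ one controls $\Gamma'_t(m)$ by \eqref{E:gamma_m_ub_0} together with \eqref{E:gamma_m_bdd} (near $m=0$ using that $\gamma_m(\cdot,0,\cdot)\le B_\gamma(n)$ and the $C^1$-regularity), so the bracket is bounded by a constant depending only on $n$; integrating against the exponential factor ($\le1$) over $[0,T]$ and adding $\tfrac1n$ gives $\partial_m k^n\le A(n)$.

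For part 2, the plan is to write $k^n(m,x;\eta)=\tfrac1m u(x)+\tfrac mn$ with $u$ of the form \eqref{E:u_def}, taking $g(t,x)=g_m(t,x):=(m-x^{(1)})(1-e^{-m(T-t)})$ and $h(t,x)=h_{m,\eta}(x):=x^{(1)}+\gamma(x,m,\eta(x))$ (time-independent), and to verify the hypotheses of Lemma \ref{L:sim_global_schauder}: $g_m$ is smooth in $(t,x)$ with $|g_m|_{\alpha,n}\le C(n)\,m\,(C^{(1)}_n+T)$ for $0<m\le C^{(1)}_n$, it satisfies the compatibility condition since $g_m(T,\cdot)\equiv0$, and since $\gamma$ is Lipschitz on $\overline{E}_{n'}$ for $n'$ depending on $n$ and $\alphanorm{\eta}{n}$ while $\eta\in C^\alpha(\overline{D}_n)$, the composition $x\mapsto\gamma(x,m,\eta(x))$ lies in $C^\alpha(\overline{D}_n)$ with norm controlled by $L_\gamma$, $B_\gamma$ and $\alphanorm{\eta}{n}$, so $|h_{m,\eta}|_{\alpha,n}\le K_3(n,\alphanorm{\eta}{n})$. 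Lemma \ref{L:sim_global_schauder} then gives $\|u\|_{2,\alpha,\overline{D}_n}\le C(n,K_3(n,\alphanorm{\eta}{n}),\alpha)\,|g_m|_{\alpha,n}$, and since the additive constant $\tfrac mn$ only affects the sup-norm term one obtains $\|k^n(m,\cdot;\eta)\|_{2,\alpha,\overline{D}_n}\le\tfrac1m\|u\|_{2,\alpha,\overline{D}_n}+\tfrac mn\le\Lambda(n,\alphanorm{\eta}{n})$ for $0<m\le C^{(1)}_n$; because $K_3$ (hence $\Lambda$) is monotone in $\alphanorm{\eta}{n}$, it may be taken to depend only on $(n,R)$ once $\alphanorm{\eta}{n}\le R$.

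The main obstacle is the lower bound $\partial_m k^n\ge\tfrac1n$: everything else (the dominated-convergence justification, the upper bound, and the Schauder bookkeeping in part 2) is routine given Lemma \ref{L:sim_global_schauder}. The lower bound is delicate precisely because one cannot afford the crude estimate $\gamma_m\le\tfrac1{mT}$; the point is that the pointwise-optimal value $\beta=(T-t)/T$ in \eqref{E:Xi_def} produces a bound on $\Gamma'_t(m)$ that \emph{exactly} absorbs the dominant term $(T-t)e^{-m(T-t)}$ of $D(t,m)$ — which is the whole design behind condition \eqref{E:gamma_m_bdd} (cf.\ Remark \ref{R:gamma}).
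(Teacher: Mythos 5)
Your proposal is correct and follows essentially the same route as the paper's proof: differentiate under the expectation by dominated convergence, obtain the lower bound $\partial_m k^n\ge 1/n$ by evaluating the infimum in \eqref{E:Xi_def} at $\beta=(T-t)/T$ so that $(1-e^{-m(T-t)})\int_0^t\gamma_m\,du\le (T-t)e^{-m(T-t)}$ exactly cancels the leading term (the paper's \eqref{E:k_n_m_lb}), and prove part 2 by casting $k^n(m,\cdot;\eta)$ in the form \eqref{E:u_def} and applying the global Schauder estimate of Lemma \ref{L:sim_global_schauder} with the same $g$ and $h$ up to a harmless factor of $m$. The only differences are cosmetic bookkeeping (your case split $m\gtrless r_t$ versus the paper's grouping of the nonnegative $r_t$-terms, and pulling $1/m$ outside of $g$).
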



The second lemma establishes regularity of $k^n$ with respect to changes in both $m$ and $\eta$.

\begin{lemma}\label{L:k_n_tech_lemma_2}
For $\eta_1,\eta_2\in \mathbb{K}_n$ and $0 < m_1,m_2\leq C^{(1)}_n$ there exists a constant $\Lambda'(n,\alphanorm{\eta_1}{n},\alphanorm{\eta_2}{n})$ so that
\begin{equation}\label{E:k_n_eta_m_reg}
\begin{split}
&\Vert k^n(m_1,\cdot;\eta_1) - k^n(m_2,\cdot;\eta_2)\Vert_{2,\alpha,\ol{D}_n}\\
&\qquad \leq \Lambda'(n,\alphanorm{\eta_1}{n},\alphanorm{\eta_2}{n})\left(\alphanorm{\eta_1-\eta_2}{n} + |m_1-m_2| + \alphanorm{\eta_1-\eta_2}{n}|m_1-m_2|\right).
\end{split}
\end{equation}
and
\begin{equation}\label{E:k_n_eta_m_reg_2}
\begin{split}
&\sup_{x\in D_n}\left| \partial_m k^n(m_1,x;\eta_1) - \partial_m k^n(m_2,x;\eta_2)\right|\\
&\qquad \leq \Lambda'(n,\alphanorm{\eta_1}{n},\alphanorm{\eta_2}{n})\left(|m_1-m_2| + \alphanorm{\eta_1-\eta_2}{n}\right).
\end{split}
\end{equation}
The constant $\Lambda'$ can be made uniform for all $\alphanorm{\eta_1}{n},\alphanorm{\eta_2}{n}\leq R$ for $R>0$.
\end{lemma}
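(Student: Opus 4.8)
The plan is to apply, to \emph{differences} of the relevant quantities, the stochastic-representation and parabolic-Schauder machinery already behind Lemma~\ref{L:k_n_tech_lemma}. For fixed $m>0$ and $\eta\in\Kn$, set $\Phi(t,x;m)\dfn\frac1m(m-x^{(1)})(1-e^{-m(T-t)})$ and $h(x;m,\eta)\dfn x^{(1)}+\gamma(x,m,\eta(x))$; then, exactly as in \eqref{E:u_def},
\begin{equation*}
k^n(m,x;\eta)-\frac mn=\espalt{x}{\int_0^{T\wedge\tau_n}\Phi(t,X_t;m)\,e^{-\int_0^t h(X_u;m,\eta)\,du}\,dt}=U^{m,\eta}(0,x),
\end{equation*}
where $U^{m,\eta}$ solves the Cauchy--Dirichlet problem \eqref{E:U_CD_prob} with data $(\Phi(\cdot;m),h(\cdot;m,\eta))$. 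Here $\Phi(\cdot;m)$ is smooth, vanishes as $t\uparrow T$, and $\vert\Phi(\cdot;m)\vert_{\alpha,n}$ is bounded uniformly for $m\in(0,C^{(1)}_n]$, while $\vert h(\cdot;m,\eta)\vert_{\alpha,n}$ is bounded in terms of $n$ and $\alphanorm{\eta}{n}$ (using Assumption~\ref{A:gamma} and $\eta\in C^\alpha(\ol D_n)$); hence Lemma~\ref{L:sim_global_schauder} applies, giving in particular $\vert U^{m,\eta}\vert_{2+\alpha,n}\le C(n,\alphanorm{\eta}{n},\alpha)$ uniformly in $m\le C^{(1)}_n$.

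For \eqref{E:k_n_eta_m_reg}, the idea is that $W\dfn U^{m_1,\eta_1}-U^{m_2,\eta_2}$ solves $W_t+\mathcal{L}W-h(\cdot;m_1,\eta_1)W=-\tilde g$ with zero terminal and lateral data, where $\tilde g\dfn(\Phi(\cdot;m_1)-\Phi(\cdot;m_2))-(h(\cdot;m_1,\eta_1)-h(\cdot;m_2,\eta_2))\,U^{m_2,\eta_2}$. Since $\Phi(T,\cdot;m_i)\equiv 0$ and $U^{m_2,\eta_2}(T,\cdot)\equiv 0$, the compatibility condition of Lemma~\ref{L:sim_global_schauder} holds, so the boundary Schauder estimate gives $\Vert W(0,\cdot)\Vert_{2,\alpha,\ol D_n}\le C(n,\alphanorm{\eta_1}{n},\alpha)\vert\tilde g\vert_{\alpha,n}$, reducing \eqref{E:k_n_eta_m_reg} to an estimate of $\vert\tilde g\vert_{\alpha,n}$. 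For the latter: $(i)$ $\vert\Phi(\cdot;m_1)-\Phi(\cdot;m_2)\vert_{\alpha,n}\le C(n)\vert m_1-m_2\vert$, since $m\mapsto\Phi(t,x;m)$ extends smoothly to $m\ge 0$ with $m$-derivatives bounded on $[0,T]\times\ol D_n$; $(ii)$ $\vert h(\cdot;m_1,\eta_1)-h(\cdot;m_2,\eta_2)\vert_{\alpha,n}\le C(n,\alphanorm{\eta_1}{n},\alphanorm{\eta_2}{n})(\vert m_1-m_2\vert+\alphanorm{\eta_1-\eta_2}{n})$, obtained by splitting $\gamma(\cdot,m_1,\eta_1(\cdot))-\gamma(\cdot,m_2,\eta_2(\cdot))$ into its $m$- and $z$-variations and applying the H\"older product rule together with the $C^2$ regularity and local Lipschitz bounds of $\gamma$ (Assumption~\ref{A:gamma}) and $\eta_i\in C^\alpha(\ol D_n)$; $(iii)$ $\vert U^{m_2,\eta_2}\vert_{\alpha,n}\le C(n,\alphanorm{\eta_2}{n},\alpha)$ from the first paragraph. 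Combining, and adding the trivial contribution $\vert m_1-m_2\vert/n$ of the $m/n$ terms of $k^n$, yields \eqref{E:k_n_eta_m_reg} (the cross term there being $\ge 0$, hence free).

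For \eqref{E:k_n_eta_m_reg_2}, part $1)$ of Lemma~\ref{L:k_n_tech_lemma} permits differentiating under the expectation, giving, with $A^{m,\eta}(t)\dfn\int_0^t\gamma_m(X_u,m,\eta(X_u))\,du$,
\begin{equation*}
\partial_m k^n(m,x;\eta)=\frac1n+\espalt{x}{\int_0^{T\wedge\tau_n}\big(\partial_m\Phi(t,X_t;m)-\Phi(t,X_t;m)A^{m,\eta}(t)\big)\,e^{-\int_0^t h(X_u;m,\eta)\,du}\,dt}.
\end{equation*}
One then subtracts this at $(m_1,\eta_1)$ and $(m_2,\eta_2)$ and estimates the integrand pathwise on $\{t<\tau_n\}$ (so $X_t,X_u\in D_n$), decomposing one argument at a time and using: $\vert e^{-a}-e^{-b}\vert\le\vert a-b\vert$ for $a,b\ge 0$; the boundedness, with constants depending on $(n,R)$, of $\partial_m\Phi$, $\Phi$, and $\gamma_m(X_u,m,\eta_i(X_u))$ for $\alphanorm{\eta_i}{n}\le R$, together with the Lipschitz-in-$m$ bounds on $\partial_m\Phi$ and $\Phi$; the local Lipschitz bounds on $\gamma$ and $\gamma_m$, which make $\vert h(X_u;m_1,\eta_1)-h(X_u;m_2,\eta_2)\vert$ and $\vert A^{m_1,\eta_1}(t)-A^{m_2,\eta_2}(t)\vert$ of order $\vert m_1-m_2\vert+\alphanorm{\eta_1-\eta_2}{n}$; and $\vert\eta_i(X_u)-\eta_j(X_u)\vert\le\alphanorm{\eta_i-\eta_j}{n}$. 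Taking expectations and $\sup_{x\in D_n}$ then gives \eqref{E:k_n_eta_m_reg_2}. In all of the above, the constants depend on $\eta_1,\eta_2$ only monotonically through $\alphanorm{\eta_1}{n},\alphanorm{\eta_2}{n}$, so $\Lambda'$ can be taken uniform on $\{\alphanorm{\cdot}{n}\le R\}$.

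I expect the difficulty to be careful bookkeeping rather than a single conceptual hurdle. The one genuinely new point over Lemma~\ref{L:k_n_tech_lemma} is that a difference of two expectations with \emph{different} discount exponents is not itself of the form \eqref{E:u_def}, so Lemma~\ref{L:sim_global_schauder} cannot be applied to $k^n(m_1,\cdot;\eta_1)-k^n(m_2,\cdot;\eta_2)$ directly; instead one passes to the Duhamel-type representation in which $W\dfn U^{m_1,\eta_1}-U^{m_2,\eta_2}$ solves a Cauchy--Dirichlet problem whose source carries, besides the $\Phi$-difference, the additional term $(h(\cdot;m_1,\eta_1)-h(\cdot;m_2,\eta_2))U^{m_2,\eta_2}$, and one must control that term in the parabolic $C^\alpha$ norm --- which is where the $C^2$ regularity of $\gamma$ and the H\"older product rule are essential. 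For \eqref{E:k_n_eta_m_reg_2} the analogous subtlety is justifying the differentiation in $m$ under the expectation, which is precisely part $1)$ of Lemma~\ref{L:k_n_tech_lemma} (and ultimately the $\Xi$-bound of condition $3)$ of Assumption~\ref{A:gamma}), together with the pathwise bookkeeping needed to reproduce the exact right-hand side of \eqref{E:k_n_eta_m_reg_2}.
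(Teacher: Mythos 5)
Your proposal is correct and follows essentially the same route as the paper: for \eqref{E:k_n_eta_m_reg} you pass to the Cauchy--Dirichlet problem solved by the difference $U^{m_1,\eta_1}-U^{m_2,\eta_2}$, bound the source $\tilde g$ (the $g$-difference plus the $h$-difference times $U^{m_2,\eta_2}$) in the parabolic $C^\alpha$ norm, and invoke the boundary Schauder estimate with the compatibility condition, while for \eqref{E:k_n_eta_m_reg_2} you differentiate under the expectation and estimate the difference pathwise, exactly as the paper does with its $A_i,B,C_i,D_i$ telescoping. The only deviations are cosmetic: your first-order Taylor/product-rule bound for the $h$-difference omits the cross term $|m_1-m_2|\,\alphanorm{\eta_1-\eta_2}{n}$ that the paper's second-order expansion in Lemma \ref{L:long_h_calc} produces (which only strengthens the statement), and you track the $m/n$ contribution explicitly.
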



Having established regularity $k^n$ we now present:

\begin{proposition}\label{P:full_lemma1}
For $\alpha\in (0,1)$ and $\eta \in \Kn$, there exists a unique function $m=m^{n,\eta}$ that is strictly positive in $D_n$ and solves \eqref{E:n_eta_fp} in $D_n$. $m^{n,\eta}$ is continuously differentiable in $D_n$ with gradient
\begin{equation}\label{E:m_n_eta_deriv}
\nabla_x m^{n,\eta}(x) = -\frac{\nabla_x k^n(m,x;\eta)}{\partial_m k^n(m,x;\eta)}\bigg|_{m=m^{n,\eta}(x)}.
\end{equation}
Furthermore, $\forall \beta \in (\alpha,1)$, $m$ satisfies the following \textit{a priori} estimate of the $\beta$-H\"{o}lder norm:
\begin{equation}\label{E:m_norm_est_n}
\Vert m^{n,\eta} \Vert_{\beta,\ol{D}_n} \leq C(n,\beta),
\end{equation}
where $C(n,\beta)$ \emph{does not} depend upon $\eta$.
\end{proposition}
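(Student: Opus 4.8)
The plan is to prove the three assertions in order: pointwise existence and uniqueness of $m^{n,\eta}(x)$, the $C^1$ regularity together with the gradient formula \eqref{E:m_n_eta_deriv}, and finally the \emph{$\eta$-uniform} a priori H\"older bound \eqref{E:m_norm_est_n}, which is the delicate part.

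First I would fix $x\in D_n$ and use \eqref{E:balance_closed_form_P01} to rephrase \eqref{E:n_eta_fp} as the scalar equation $k^n(m,x;\eta)=0$ with $k^n$ as in \eqref{E:sim_eqn_0}. By Lemma \ref{L:k_n_tech_lemma}(1), $m\mapsto k^n(m,x;\eta)$ is $C^1$ and, since $\partial_m k^n\geq 1/n>0$, strictly increasing. Writing $k^n(m,x;\eta)=I(m,x;\eta)+m/n$ with $I$ the expectation term in \eqref{E:sim_eqn_0}, I would check by dominated convergence, the strict positivity of $r$ (Assumption \ref{A:pos_ir}), and $T\wedge\tau_n>0$ a.s.\ that $\lim_{m\downarrow 0}k^n(m,x;\eta)=\espalt{x}{\int_0^{T\wedge\tau_n}(-r_t)(T-t)e^{-\int_0^t(r_u+\gamma(X_u,0,\eta(X_u)))du}dt}<0$, and that $|I(m,x;\eta)|\leq(1+C^{(1)}_n)T$ for $m\geq 1$ so that $k^n(m,x;\eta)\to+\infty$; strict monotonicity then gives a unique root $m^{n,\eta}(x)>0$. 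Since $m\geq C^{(1)}_n$ forces $m-r_t\geq 0$ on $[0,\tau_n]$, hence $I(m,x;\eta)\geq 0$ and $k^n(m,x;\eta)\geq m/n>0$, one in fact gets $0<m^{n,\eta}(x)<C^{(1)}_n$ for all $x\in D_n$ and all $\eta\in\Kn$. For the $C^1$ regularity I would invoke the implicit function theorem: $k^n$ is jointly $C^1$ on $(0,\infty)\times D_n$ --- regularity in $x$ at frozen $m$ is Lemma \ref{L:k_n_tech_lemma}(2), and joint continuity of $\nabla_x k^n$ and of $\partial_m k^n$ (at fixed $\eta$) follows from \eqref{E:k_n_eta_m_reg}--\eqref{E:k_n_eta_m_reg_2} --- and $\partial_m k^n\geq 1/n\neq 0$; this yields $m^{n,\eta}\in C^1(D_n)$ with gradient \eqref{E:m_n_eta_deriv}.

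The crux is \eqref{E:m_norm_est_n} with a constant \emph{not depending on $\eta$}. The naive route --- bounding $|\nabla m^{n,\eta}|$ through \eqref{E:m_n_eta_deriv} and the $C^{2,\alpha}$ estimate of Lemma \ref{L:k_n_tech_lemma}(2) --- only gives a constant depending on $\alphanorm{\eta}{n}$, so it is useless here. Instead I would not differentiate $m^{n,\eta}$ at all and route the $x$-regularity through the frozen-$m$ maps. By \eqref{E:gamma_loc_x_bdd}, for every $\eta\in\Kn$ and $m>0$ the coefficient $h(t,y)\dfn y^{(1)}+\gamma(y,m,\eta(y))$ satisfies $|h|_{0,n}\leq C^{(1)}_n+B_\gamma(n)$ uniformly in $\eta$ and $m$ (it is also $\alpha$-H\"older on $Q_n$, with norm depending on $\alphanorm{\eta}{n}$, which will not matter), while $g(t,y)\dfn(m-y^{(1)})(1-e^{-m(T-t)})/m$ obeys $|g|_{0,n}\leq 2C^{(1)}_nT$ uniformly in $m\in(0,C^{(1)}_n]$ and is $\alpha$-H\"older. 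Since $I(m,\cdot\,;\eta)$ is exactly of the form \eqref{E:u_def} with these $g,h$, and the estimate of Lemma \ref{L:sim_global_holder} depends on $g,h$ only through $|g|_{0,n}$ and $|h|_{0,n}$ (the H\"older continuity being used merely to license the stochastic representation of the associated parabolic PDE), I obtain $\|I(m,\cdot\,;\eta)\|_{\beta,\ol{D}_n}\leq C(n,\beta)$ for all $\beta\in(0,1)$, uniformly over $\eta\in\Kn$ and $m\in(0,C^{(1)}_n]$.

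To transfer this to $m^{n,\eta}$ I would fix $x,y\in D_n$, put $m_x=m^{n,\eta}(x)$, $m_y=m^{n,\eta}(y)$, and subtract the defining identities:
\[
0=\big(k^n(m_x,x;\eta)-k^n(m_y,x;\eta)\big)+\big(k^n(m_y,x;\eta)-k^n(m_y,y;\eta)\big).
\]
By the mean value theorem and Lemma \ref{L:k_n_tech_lemma}(1) the first bracket is $\partial_m k^n(m^*,x;\eta)(m_x-m_y)$ with $\partial_m k^n(m^*,x;\eta)\geq 1/n$, and the second is $I(m_y,x;\eta)-I(m_y,y;\eta)$ (the $m_y/n$ terms cancel), which is bounded by $C(n,\beta)|x-y|^\beta$ by the previous step since $m_y\leq C^{(1)}_n$. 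Hence $|m^{n,\eta}(x)-m^{n,\eta}(y)|\leq nC(n,\beta)|x-y|^\beta$, and combining with $|m^{n,\eta}|_{D_n}\leq C^{(1)}_n$ (so that $m^{n,\eta}$ extends $\beta$-H\"older continuously to $\ol{D}_n$) yields \eqref{E:m_norm_est_n} with a constant depending only on $n$ and $\beta$. I expect the main obstacle to be exactly this $\eta$-uniformity; it is resolved by the observation that Lemma \ref{L:sim_global_holder} sees the coefficients only through their sup-norms, which \eqref{E:gamma_loc_x_bdd} controls uniformly in $\eta$, so that no H\"older-norm bound on $\eta$ ever enters the estimate.
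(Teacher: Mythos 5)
Your proposal is correct and follows essentially the same route as the paper: pointwise existence and uniqueness via the strict monotonicity $\partial_m k^n\geq 1/n$ and the sign change of $k^n$ at $m\downarrow 0$ and $m\uparrow\infty$, the gradient formula via the implicit function theorem using Lemmas \ref{L:k_n_tech_lemma} and \ref{L:k_n_tech_lemma_2}, and the $\eta$-uniform H\"older bound by subtracting the two defining identities, applying the mean value theorem in $m$, and invoking the global $C^\alpha$ estimate of Lemma \ref{L:sim_global_holder} at frozen $m$, whose constant sees $h$ only through $\vert h\vert_{0,n}\leq C^{(1)}_n+B_\gamma(n)$, which \eqref{E:gamma_loc_x_bdd} controls independently of $\eta$. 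This last observation is exactly the mechanism the paper uses to obtain a constant independent of $\alphanorm{\eta}{n}$.
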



\begin{proof}[Proof of Proposition \ref{P:full_lemma1}]
As mentioned above, it suffices for each $x\in D_n$ to find $m=m(x)=m^{n,\eta}(x)$ so that $k^n(m,x;\eta) = 0$. From Lemma \ref{L:k_n_tech_lemma} we know that $k^n$ is strictly increasing in $m$.  Additionally, by the dominated convergence theorem and that $\gamma\geq 0$, $r_t\leq C^{(1)}_n, t\leq \tau_n$ we have
\begin{equation*}
\begin{split}
\lim_{m\downarrow 0} k^n(m,x;\eta) &= -\espalt{x}{\int_0^{T\wedge\tau_n} r_t(T-t)e^{-\int_0^t(r_u + \gamma(X_u,0,\eta(X_u)))du}dt}<0;\\
\lim_{m\uparrow\infty}k^n(m,x;\eta) &= \infty.
\end{split}
\end{equation*}
So for any $x \in D_n$ there exists an unique $m(x)>0$ such that $k^n(m(x),x;\eta)=0$ and this defines the map $m = m^{n,\eta}:D_n\mapsto (0,\infty)$. We next show the \textit{a priori} estimate for the H\"{o}lder norm of $m$ in \eqref{E:m_norm_est_n}. By definition, $\forall x$, $y\in D_n$,
\begin{equation}\label{E:full_lemma_0}
k^n(m(x),x;\eta)=k^n(m(y),y;\eta)=0,
\end{equation}
which implies
\begin{equation}\label{E:full_lemma_1}
k^n(m(y),y;\eta)-k^n(m(x),y;\eta)=k^n(m(x),x;\eta)-k^n(m(x),y;\eta).
\end{equation}
Since $y$ is fixed, the mean value theorem applied to $m\mapsto k^n(m,y;\eta)$ (which is $C^1$ in $m$ from Lemma \ref{L:k_n_tech_lemma}) asserts the existence of $\xi$ between $m(x)$ and $m(y)$ such that
\begin{equation}\label{E:mvt_m}
\partial_m k^n(\xi,y;\eta)\cdot(m(y)-m(x))=k^n(m(x),x;\eta)-k^n(m(x),y;\eta).
\end{equation}
By Lemma \ref{L:k_n_tech_lemma} we thus have
\begin{equation}\label{E:m_beta_ub}
\vert m(x)-m(y) \vert \leq n \vert k^n(m(x),x;\eta)-k^n(m(x),y;\eta)\vert.
\end{equation}
Now, fix $x$ (think of this as a parameter) and note that $k^n(m(x),\cdot;\eta) = u^{m(x),\eta}$ where $u^{m,\eta}$ is defined in \eqref{E:u_m_eta_def} below.  Noting that $m(x)\leq C^{(1)}_n$ it follows from \eqref{E:gm_hm_def}, \eqref{E:gm_bdd}, \eqref{E:hm_bdd} below, as well as $0\leq y^{(1)} + \gamma(y,m(x),\eta(y) \leq C^{(1)}_n + B_\gamma(n)$ on $D_n$ that we may apply Lemma \ref{L:sim_global_holder} to obtain for all $\beta\in (\alpha,1)$ that
\begin{equation*}
\Vert u^{m(x),\eta} \Vert_{\beta,\ol{D}_n} \leq C(n,K_4(n),\beta,\alpha)\sup_{(t,y)\in Q_n}\left|m(x)-y^{(1)}\right|\frac{1-e^{-m(x)(T-t)}}{m(x)} \leq C(n,K_4(n),\beta,\alpha),
\end{equation*}
where the constant $K_4(n)$ does not depend upon $\eta$.  Thus, from \eqref{E:m_beta_ub} we obtain
\begin{equation*}
\vert m(x)-m(y)\vert \leq n\vert k^n(x,m(x);\eta)-k^n(y,m(x);\eta) \vert \leq C(n,K_4(n),\beta,\alpha_0) \vert x-y \vert^\beta.
\end{equation*}
Since it is clear from \eqref{E:n_eta_fp} that $m^{n,\eta}< C^{(1)}_n$, the estimate in \eqref{E:m_norm_est_n} holds. Lastly, \eqref{E:m_n_eta_deriv} follows immediately from the implicit function theorem since Lemmas \ref{L:k_n_tech_lemma}, \ref{L:k_n_tech_lemma_2} imply that for a fixed $\eta\in\mathbb{K}_n$, $k^n(m,x;\eta)$ is $C^1$ in $(0,C^{(1)}_n)\times D_n$.
\end{proof}


In light of Proposition \ref{P:full_lemma1} we define the map $\mathcal{A}^n:\Kn\mapsto\Kn$ by
\begin{equation}\label{E:A_n_map}
\mathcal{A}^n[\eta] = m^{n,\eta};\qquad \eta\in \Kn.
\end{equation}

The following lemma will be needed in the proof of the continuity of the operator $\sn$.

\begin{lemma}\label{L:sim_continuity}
Let $\alpha\in (0,1)$ and  $\eta_1,\eta_2(x) \in \Kn$. Let $m_1 = \mathcal{A}^n[\eta_1]$, $m_2=\mathcal{A}^n[\eta_2]$. Then, there is a constant $\tilde{\Lambda}(n,\alphanorm{\eta_1}{n},\alphanorm{\eta_2}{n})$ which can be bade uniform for $\alphanorm{\eta_1}{n},\alphanorm{\eta_2}{n}\leq R$ such that
\begin{equation*}
\begin{split}
\sup\limits_{x \in D_n} \vert m_1(x)-m_2(x) \vert &\leq \tilde{\Lambda}(n,\alphanorm{\eta_1}{n},\alphanorm{\eta_2}{n})\alphanorm{\eta_1-\eta_2}{n},\\
\sup\limits_{x \in D_n} \left\vert \nabla_{x}k^n(x,m_1(x);\eta_1)-\nabla_xk^n(x,m_2(x);\eta_2) \right \vert &\leq \tilde{\Lambda}(n,\alphanorm{\eta_1}{n},\alphanorm{\eta_2}{n})\alphanorm{\eta_1-\eta_2}{n},\\
\sup\limits_{x \in D_n} \left\vert \partial_m k^n\left(x,m_1(x);\eta_1\right)-\partial_m k^n\left(x,m_2(x);\eta_2 \right) \right \vert &\leq \tilde{\Lambda}(n,\alphanorm{\eta_1}{n},\alphanorm{\eta_2}{n})\alphanorm{\eta_1-\eta_2}{n}.
\end{split}
\end{equation*}
\end{lemma}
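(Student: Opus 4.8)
The plan is to obtain all three estimates as bookkeeping consequences of Lemmas \ref{L:k_n_tech_lemma} and \ref{L:k_n_tech_lemma_2}, exploiting that $m^{n,\eta}$ is defined implicitly through $k^n(m^{n,\eta}(x),x;\eta)=0$ together with the uniform lower bound $\partial_m k^n\geq 1/n$ from \eqref{E:k_n_m_deriv_bdd}, and the a priori bound $0\leq m^{n,\eta}<C^{(1)}_n$ of Proposition \ref{P:full_lemma1}.

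First I would prove the bound on $\sup_x|m_1(x)-m_2(x)|$. Fix $x\in D_n$; by Proposition \ref{P:full_lemma1} both $m_1(x)$ and $m_2(x)$ lie in $(0,C^{(1)}_n)$, and they satisfy $k^n(m_1(x),x;\eta_1)=0=k^n(m_2(x),x;\eta_2)$. From the telescoping identity
\[
0 = \left(k^n(m_1(x),x;\eta_1)-k^n(m_2(x),x;\eta_1)\right) + \left(k^n(m_2(x),x;\eta_1)-k^n(m_2(x),x;\eta_2)\right),
\]
the mean value theorem applied to $m\mapsto k^n(m,x;\eta_1)$, which is $C^1$ by Lemma \ref{L:k_n_tech_lemma}(1), together with $\partial_m k^n\geq 1/n$, yields $|m_1(x)-m_2(x)|\leq n\,|k^n(m_2(x),x;\eta_1)-k^n(m_2(x),x;\eta_2)|$. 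The right-hand side is dominated by the $C^{2,\alpha}(\ol{D}_n)$ norm of $k^n(m_2(x),\cdot;\eta_1)-k^n(m_2(x),\cdot;\eta_2)$, and since both $m$-arguments equal $m_2(x)\leq C^{(1)}_n$, estimate \eqref{E:k_n_eta_m_reg} of Lemma \ref{L:k_n_tech_lemma_2} (with the $|m_1-m_2|$ terms vanishing) bounds it by $\Lambda'(n,\alphanorm{\eta_1}{n},\alphanorm{\eta_2}{n})\,\alphanorm{\eta_1-\eta_2}{n}$. Taking the supremum over $x$ gives the first inequality with $\tilde{\Lambda}=n\Lambda'$.

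For the second estimate I would note that $\nabla_x k^n(m_i(x),x;\eta_i)$ is the spatial gradient of $y\mapsto k^n(m_i(x),y;\eta_i)$, with the scalar $m_i(x)$ frozen, evaluated at $y=x$, hence controlled by the $C^{2,\alpha}(\ol{D}_n)$ norm. Applying \eqref{E:k_n_eta_m_reg} of Lemma \ref{L:k_n_tech_lemma_2} to $k^n(m_1(x),\cdot;\eta_1)-k^n(m_2(x),\cdot;\eta_2)$ (again valid since $m_i(x)\leq C^{(1)}_n$) and specializing to $y=x$ bounds the spatial-gradient difference by $\Lambda'(\alphanorm{\eta_1-\eta_2}{n}+|m_1(x)-m_2(x)|+\alphanorm{\eta_1-\eta_2}{n}|m_1(x)-m_2(x)|)$; inserting the first-step bound for $|m_1(x)-m_2(x)|$ and absorbing $\alphanorm{\eta_1-\eta_2}{n}\leq\alphanorm{\eta_1}{n}+\alphanorm{\eta_2}{n}$ into the constant kills the cross-term, and taking the supremum over $x$ finishes it. The third estimate is handled identically, using \eqref{E:k_n_eta_m_reg_2} of Lemma \ref{L:k_n_tech_lemma_2} with $m_1=m_1(x)$, $m_2=m_2(x)$ to bound $\sup_y|\partial_m k^n(m_1(x),y;\eta_1)-\partial_m k^n(m_2(x),y;\eta_2)|$ by $\Lambda'(|m_1(x)-m_2(x)|+\alphanorm{\eta_1-\eta_2}{n})$ and again substituting the first-step estimate. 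The uniform version follows on restricting to $\alphanorm{\eta_i}{n}\leq R$, where $\Lambda'$ is uniform by Lemma \ref{L:k_n_tech_lemma_2} and $\alphanorm{\eta_1-\eta_2}{n}\leq 2R$.

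I do not expect a genuine obstacle: the lemma is essentially a corollary of the quantitative regularity already packaged into Lemmas \ref{L:k_n_tech_lemma} and \ref{L:k_n_tech_lemma_2}. The only points requiring care are making sure the scalar arguments $m_1(x),m_2(x)$ fed into those lemmas stay in $(0,C^{(1)}_n]$ (guaranteed by Proposition \ref{P:full_lemma1}) and tracking how the cross-term $\alphanorm{\eta_1-\eta_2}{n}\,|m_1-m_2|$ and the $\eta$-dependence of $\Lambda'$ collapse into the single constant $\tilde{\Lambda}$ once one passes to a bounded range of $\eta$'s.
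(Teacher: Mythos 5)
Your proposal is correct and follows essentially the same route as the paper: the first estimate via the telescoping identity, the mean value theorem in $m$, the lower bound $\partial_m k^n\geq 1/n$ from Lemma \ref{L:k_n_tech_lemma}, and estimate \eqref{E:k_n_eta_m_reg}; the second and third estimates by feeding the first back into \eqref{E:k_n_eta_m_reg} and \eqref{E:k_n_eta_m_reg_2} respectively. The only (immaterial) difference is that you apply the mean value theorem to $m\mapsto k^n(m,x;\eta_1)$ where the paper uses $\eta_2$, and you spell out the absorption of the cross-term more explicitly than the paper does.
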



\begin{proof}[Proof of Lemma \ref{L:sim_continuity}]
By definition of $m_1,m_2$ we have for all $x\in D_n$ that $0 = k^n(m_1(x),x;\eta_1) = k^n(m_2(x),x;\eta_2)$ and hence
\begin{equation*}
k^n(m_2(x),x;\eta_2) - k^n(m_1(x),x;\eta_2) = k^n(m_1(x),x;\eta_1) - k^n(m_1(x),x;\eta_2).
\end{equation*}
By the mean value theorem applied to the map $m\mapsto k^n(m,x;\eta_2)$ (which is $C^1$ from Lemma \ref{L:k_n_tech_lemma}) there is some $\xi$ between $m_1(x),m_2(x)$ so that $\partial_m k^n(\xi,x;\eta_2)(m_2(x)-m_1(x)) = k^n(m_2(x),x;\eta_2) - k^n(m_1(x),x;\eta_2)$.  It thus follows that
\begin{equation*}
\begin{split}
\vert m_2(x)-m_1(x)\vert &= \frac{\vert k^n(m_1(x),x;\eta_2)-k^n(m_1(x),x;\eta_1)\vert}{\vert \partial_m k^n(\xi,x;\eta_2)\vert},\\
&\leq n \Lambda'\left(n,\alphanorm{\eta_1}{n},\alphanorm{\eta_2}{n}\right)\alphanorm{\eta_1-\eta_2}{n},\\
&= \tilde{\Lambda}(n,\alphanorm{\eta_1}{n},\alphanorm{\eta_2}{n})\alphanorm{\eta_1-\eta_2}{n}.
\end{split}
\end{equation*}
where the inequality follows from \eqref{E:k_n_eta_m_reg} in Lemma \ref{L:k_n_tech_lemma_2} since $0< m_1(x)< C^{(1)}_n$ on $D_n$. The second inequality follows immediately from the first by \eqref{E:k_n_eta_m_reg} of Lemma \ref{L:k_n_tech_lemma_2}. Similarly, the third inequality follows from the first by \eqref{E:k_n_eta_m_reg_2} of Lemma \ref{L:k_n_tech_lemma_2}.
\end{proof}


The following Proposition establishes a fixed point in $\Kn$:

\begin{proposition}\label{P:sim_local_fixed_point}
Let $\alpha\in (0,1)$. There exists $m^n\in \Kn$ that is strictly positive for $x \in D_n$ and solves the fixed point equation $m^n = \mathcal{A}^n[m^n]$ in $D_n$. Equivalently, $m^n$ satisfies \eqref{E:n_fp}. Furthermore, $\forall \beta \in (\alpha,1)$, $m^n$ satisfies the following \textit{a priori} estimate of the $\beta$-H\"{o}lder norm on $D_n$:
\begin{equation*}
\Vert m \Vert_{\beta,\ol{D}_n} \leq C(n, \beta).
\end{equation*}
\end{proposition}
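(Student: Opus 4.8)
The plan is to apply Schaefer's Fixed Point Theorem (Theorem \ref{T:schaefer}) with Banach space $X = C^\alpha(\ol{D}_n)$ and $K = \Kn$. First I would record that $\Kn$ is closed and convex with $0\in\Kn$, and that Proposition \ref{P:full_lemma1} already shows $\mathcal{A}^n:\Kn\mapsto\Kn$ is well defined: for each $\eta\in\Kn$ the function $m^{n,\eta}$ is strictly positive and continuous, and by \eqref{E:m_norm_est_n} satisfies $\Vert m^{n,\eta}\Vert_{\beta,\ol{D}_n}\leq C(n,\beta)$ for every $\beta\in(\alpha,1)$ with $C(n,\beta)$ independent of $\eta$; since $\beta>\alpha$ this places $m^{n,\eta}$ in $\Kn$. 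It then remains to verify that $\mathcal{A}^n$ is compact, continuous, and that $\cbra{u\in\Kn\such u=\lambda\mathcal{A}^n[u],\ 0\leq\lambda\leq1}$ is bounded.

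Compactness and boundedness of the Leray--Schauder set I expect to be immediate from the uniform a priori bound \eqref{E:m_norm_est_n}. Fixing $\beta\in(\alpha,1)$, the embedding $C^{\beta}(\ol{D}_n)\hookrightarrow C^{\alpha}(\ol{D}_n)$ is compact because $D_n$ is bounded with smooth boundary; since \eqref{E:m_norm_est_n} confines the entire image $\mathcal{A}^n[\Kn]$ to the ball of radius $C(n,\beta)$ in $C^{\beta}(\ol{D}_n)$, that image is precompact in $C^{\alpha}(\ol{D}_n)$ and $\mathcal{A}^n$ is compact. Likewise, any $u$ with $u=\lambda\mathcal{A}^n[u]$, $0\leq\lambda\leq1$, satisfies $\Vert u\Vert_{\alpha,\ol{D}_n}\leq\Vert\mathcal{A}^n[u]\Vert_{\alpha,\ol{D}_n}\leq\Vert\mathcal{A}^n[u]\Vert_{\beta,\ol{D}_n}\leq C(n,\beta)$, so that set is bounded.

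The step I expect to be the real obstacle is continuity of $\mathcal{A}^n$ in the $C^\alpha$ norm, because Lemma \ref{L:sim_continuity} only delivers Lipschitz continuity in the sup norm. The bridge I would use is an interpolation-by-compactness argument: given $\eta_k\to\eta$ in $C^{\alpha}(\ol{D}_n)$, the $C^\alpha$ norms of the $\eta_k$ are bounded, so Lemma \ref{L:sim_continuity} (whose constant is uniform on such bounded sets) gives $\mathcal{A}^n[\eta_k]\to\mathcal{A}^n[\eta]$ uniformly on $D_n$; meanwhile \eqref{E:m_norm_est_n} bounds $\cbra{\mathcal{A}^n[\eta_k]}$ in $C^{\beta}(\ol{D}_n)$, so by the compact embedding every subsequence of $\cbra{\mathcal{A}^n[\eta_k]}$ has a further subsequence converging in $C^{\alpha}(\ol{D}_n)$, and the uniform limit identifies each such sublimit as $\mathcal{A}^n[\eta]$. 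Hence the full sequence converges in $C^{\alpha}(\ol{D}_n)$, proving continuity.

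Finally, Schaefer's theorem produces $m^n\in\Kn$ with $m^n=\mathcal{A}^n[m^n]$. Unwinding the definition through Proposition \ref{P:full_lemma1}, $m^n=m^{n,m^n}$ is strictly positive on $D_n$ and solves \eqref{E:n_eta_fp} with $\eta=m^n$, i.e. $m^n(X_u)$ replaces $\eta(X_u)$ inside $\gamma$, which is precisely \eqref{E:n_fp}. The claimed estimate $\Vert m^n\Vert_{\beta,\ol{D}_n}\leq C(n,\beta)$ for all $\beta\in(\alpha,1)$ is then \eqref{E:m_norm_est_n} specialized to $\eta=m^n$.
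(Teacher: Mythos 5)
Your proposal is correct and reaches the same conclusion via Schaefer's theorem, with the compactness and Leray--Schauder steps handled exactly as in the paper (uniform $C^{\beta}$ bound from Proposition \ref{P:full_lemma1} plus the compact embedding $C^{\beta}(\ol{D}_n)\hookrightarrow C^{\alpha}(\ol{D}_n)$). Where you genuinely diverge is the continuity step. The paper proves continuity of $\mathcal{A}^n$ in the $C^{\alpha}$ norm quantitatively: it differentiates the implicit relation using the gradient formula \eqref{E:m_n_eta_deriv}, and combines the bounds of Lemma \ref{L:k_n_tech_lemma} with the second and third estimates of Lemma \ref{L:sim_continuity} to show $\sup_{x\in D_n}\left|\nabla_x\left(\mathcal{A}^n[\eta_1]-\mathcal{A}^n[\eta_2]\right)(x)\right|\leq C\,\alphanorm{\eta_1-\eta_2}{n}$, i.e. a local Lipschitz modulus for $\mathcal{A}^n$ on bounded subsets of $\Kn$. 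You instead use only the sup-norm estimate (the first inequality of Lemma \ref{L:sim_continuity}) and upgrade uniform convergence to $C^{\alpha}$ convergence by the standard subsequence argument: precompactness of $\cbra{\mathcal{A}^n[\eta_k]}$ in $C^{\alpha}$ from the uniform $C^{\beta}$ bound, identification of every sublimit with $\mathcal{A}^n[\eta]$ via the uniform limit, hence convergence of the full sequence. This is softer but perfectly adequate for Schaefer, and it buys economy: the gradient identity \eqref{E:m_n_eta_deriv} and the derivative-comparison estimates of Lemma \ref{L:sim_continuity} are not needed for this proposition, whereas the paper's route yields the stronger quantitative (Lipschitz-on-bounded-sets) continuity of $\mathcal{A}^n$. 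One cosmetic remark: your inequality $\Vert\mathcal{A}^n[u]\Vert_{\alpha,\ol{D}_n}\leq\Vert\mathcal{A}^n[u]\Vert_{\beta,\ol{D}_n}$ holds only up to a factor $\max\pare{1,\mathrm{diam}(D_n)^{\beta-\alpha}}$, which is harmless since it can be absorbed into $C(n,\beta)$.
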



\begin{proof}[Proof of Proposition \ref{P:sim_local_fixed_point}]
The existence of a fixed point $m^n$ will follow from Theorem \ref{T:schaefer} by verifying the steps below.  Here, the Banach space is $X = C^{\alpha}(\ol{D}_n)$, the closed convex subset containing $0$ is $\Kn$ and the operator $\mathcal{A}$ is $\mathcal{A}^n$ from \eqref{E:A_n_map}.
\begin{enumerate}[1)]
\item \emph{The mapping $\mathcal{A}^n:\Kn \mapsto \Kn$ is continuous.} For any $\eta_1,\eta_2\in \Kn$, let $m_1=\mathcal{A}^n[\eta_1]$ and $m_2=\mathcal{A}^n[\eta_2]$. In light of the first part of Lemma \ref{L:sim_continuity}, we need only consider the $\bra{m_1-m_2}_{\alpha,n}$ semi-norm, and clearly, it suffices to show that $\sup_{x\in D_n}|\nabla_x(m_1(x)-m_2(x))| \leq \calphanorm\alphanorm{\eta_1-\eta_2}{n}$. To this end, we have from Proposition \ref{P:full_lemma1} that for $i=1,...,d$ and $x\in D_n$:
\begin{equation*}
\begin{split}
&\partial_{x_i}\left(m_1(x)-m_2(x)\right) = -\Biggr(\frac{\partial_{x_i}k^n\left(m_1(x),x;\eta_1\right)}{\partial_m k^n(m_1(x),x;\eta_1)}-\frac{\partial_{x_i}k^n\left(m_2(x),x;\eta_2\right)}{\partial_m k^n(m_2(x),x;\eta_2)}\Biggr), \\
&\qquad =-\frac{\partial_{x_i}k^n\left(m_1(x),x;\eta_1\right)-\partial_{x_i}k^n\left(m_2(x),x;\eta_2\right)}{\partial_mk^n(m_1(x),x;\eta_1)} \\
&\qquad\qquad +\frac{\partial_{x_i}k^n(m_2(x),x;\eta_2)\times\left(\partial_m k^n\left(m_1(x),x;\eta_1\right)-\partial_m k^n\left(m_2(x),x;\eta_2\right)\right)}{\partial_m k^n(m_1(x),x;\eta_1)\partial_m k^n(m_2(x),x;\eta_2)},
\end{split}
\end{equation*}
and so from Lemmas \ref{L:k_n_tech_lemma}, \ref{L:sim_continuity} we have
\begin{equation*}
\begin{split}
\vert \partial_{x_i}\left(m_1(x)-m_2(x)\right)\vert &\leq n \left\vert \partial_{x_i}k^n\left(m_1(x),x;\eta_1\right)-\partial_{x_i}k^n\left(m_2(x),x;\eta_2\right) \right\vert \\
&\qquad +n^2 \Lambda_n(n,\alphanorm{\eta_2}{n})\left\vert \partial_m k^n\left(m_1(x),x;\eta_1\right)-\partial_m k^n\left(m_2(x),x;\eta_2\right) \right\vert,\\
&\leq \tilde{\Lambda}(n,\alphanorm{\eta_1}{n},\alphanorm{\eta_2}{n})\left(n+n^2\Lambda(n,\alphanorm{\eta_2}{n}\right)\alphanorm{\eta_1-\eta_2}{n},
\end{split}
\end{equation*}
proving continuity.

\item \emph{The mapping $\sn: \Kn \to \Kn$ is compact.} Let us fix some $\beta \in (\alpha,1)$. Given any bounded sequence $\{\eta_i\}_{i\in \mathbb{N}}$ in $\mathbb{K}_n$, Proposition \ref{P:full_lemma1} yields, $\forall i \in \mathbb{N}$,
\begin{equation*}
\Vert \mathcal{A}^n[\eta_i] \Vert_{C^\beta(\overline{D_n})} \leq C(n,\beta).
\end{equation*}
By the standard compact embeddings of H\"{o}lder spaces, there exists a subsequence $\{\sn[\eta_{i_k}]\}_{k \in \mathbb{N}}$ of $\{\sn[\eta_{i}]\}_{i \in \mathbb{N}}$ such that $\{\sn[\eta_{i_k}]\}_{k \in \mathbb{N}}$ converges in $\Vert \cdot \Vert_{C^{\alpha}(\overline{D}_n)}$ norm to some limit in $\mathbb{K}_n$.

\item \emph{The set $\{m \in \mathbb{K}_n \  : \  m=\lambda\sn[m] \  \text{for some } 0 \leq \lambda \leq 1\}$ is bounded.}  Suppose $m \in \mathbb{K}_n$ satisfies $m=\lambda\sn[m]$  for some $0 \leq \lambda \leq 1$. We have from Proposition \ref{P:full_lemma1}
\begin{equation*}
\Vert m \Vert_{C^\alpha(\overline{D_n})} =\lambda \Vert \sn[m] \Vert_{C^\alpha(\overline{D_n})} \leq  C(n,\alpha).
\end{equation*}
\end{enumerate}
Schaefer's Theorem thus asserts that the operator $\sn$ has a fixed point $m^n$ in $\Kn$. By Proposition \ref{P:full_lemma1}, $m^n$ is strictly positive. Moreover, $m^n$ satisfies the following \textit{a priori} estimate of the $\beta$-H\"{o}lder norm on $D_n$:
\begin{equation*}\Vert m \Vert_{C^\beta(\overline{D_n})} \leq C(n,\beta),\ \forall \beta \in (\alpha,1).
\end{equation*}
\end{proof}

\subsection{Global existence of a fixed point.}\label{SS:global}

For an arbitrary $\alpha\in (0,1)$ and $n \in \mathbb{N}$ we now choose $m^n \in \Kn$ such that $m^n$ is a fixed point of the operator $\sn$ in $\Kn$, where $\sn$ is from \eqref{E:A_n_map}. Let us now fix an arbitrary $\nst \in \mathbb{N}$. The following lemma establishes \textit{a priori} estimates for the $\alpha$-H\"{o}lder norms of $\{m^n(x)\}_{n>\nst}$ in $D_{\nst}$. We adopt the notation $\Lambda(\nst)$ to denote some positive constant that changes from line to line and may depend on the dimension $d$, the model coefficients $K_1(\nst+1), K_2(\nst+1)$ from Assumption \ref{A:factor_coefficients}, the local Lipschitz constant $L_\gamma(\nst+1)$ and local bounded constant $B_\gamma(\nst+1)$ from Assumption \ref{A:gamma}, and the time horizon $T$ and domains $D_{\nst}, D_{\nst+1}$.  If additionally, the constant depends upon the H\"{o}lder exponent $\beta$ we will write $\Lambda(\nst,\beta)$ to stress this dependence.  As such when we write $\Lambda(\nst)$ the constant \emph{does not} depend upon $\beta$.


\begin{lemma}\label{L:sim_local_bound}
Let $\beta\in (0,1)$. For any $\nst \in \mathbb{N}$ there exists a positive constant $\Lambda(\nst,\beta)$ such that $\forall n>\nst$, $\Vert m^n \Vert_{C^{\beta}(\overline{D_{\nst}})}\leq \Lambda(\nst,\beta)$.
\end{lemma}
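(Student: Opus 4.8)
The plan is to establish, for all $n>\nst$, two estimates on $D_{\nst}$ that are uniform in $n$: first a sup bound $\|m^n\|_{C^0(\overline{D_{\nst}})}\le\Lambda(\nst)$, and then a H\"older semi-norm bound obtained by re-running the mean value theorem argument from the proof of Proposition \ref{P:full_lemma1}, but with the \emph{interior} estimate Lemma \ref{L:sim_interior_holder} replacing the global Lemma \ref{L:sim_global_holder}. The whole point is that Lemma \ref{L:sim_interior_holder} produces a constant depending only on the $(\nst+1)$-level data ($K_1,K_2$ and $L_\gamma,B_\gamma$ at $\nst+1$, the dimension $d$, the horizon $T$, and the domains $D_{\nst},D_{\nst+1}$), which is exactly the type of constant $\Lambda(\nst,\beta)$ the statement allows.

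For the sup bound I would drop the nonnegative correction term in \eqref{E:n_fp} and rearrange it as $m^n(x)E_1(x)\le E_2(x)$, where $E_1(x)=\espalt{x}{\int_0^{T\wedge\tau_n}p(t,m^n(x))e^{-\int_0^t(r_u+\gamma(X_u,m^n(x),m^n(X_u)))du}dt}$ and $E_2(x)$ is the same integral with an extra factor $r_t$. Using $p\le1$, $\gamma\ge0$, and $r_te^{-\int_0^t r_udu}=-\frac{d}{dt}e^{-\int_0^t r_udu}$ gives $E_2(x)\le\espalt{x}{\int_0^\infty r_te^{-\int_0^t r_udu}dt}\le1$. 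For the lower bound on $E_1$, restrict to $A:=\{\tau_{\nst+1}>T/2\}$; on $A$, for $x\in D_{\nst}$ and $t\le T/2$ one has $r_u\le C^{(1)}_{\nst+1}$, $\gamma\le B_\gamma(\nst+1)$ and $p(t,m^n(x))\ge1/2$, and since $\tau_n\ge\tau_{\nst+1}$ for $n>\nst$ this yields $E_1(x)\ge\frac14 Te^{-(C^{(1)}_{\nst+1}+B_\gamma(\nst+1))T/2}\qprob^x(A)$, with $\inf_{x\in D_{\nst}}\qprob^x(A)>0$ because $\overline{D_{\nst}}\subset D_{\nst+1}$ and $X$ is a non-degenerate diffusion on $D$ (the occupation-time/hitting estimate already used for Theorem \ref{T:perturb}, via the elliptic Harnack inequality). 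Hence $m^n(x)\le\Lambda(\nst)$ for all $n>\nst$, $x\in D_{\nst}$. The same integration-by-parts trick also shows, for frozen $m_0\in(0,\Lambda(\nst)]$, that the parabolic function $U$ attached via \eqref{E:u_def}--\eqref{E:U_CD_prob} to the expectation defining $k^n(m_0,\cdot;m^n)$, together with its ``$g$'', has sup norm on $Q_{\nst+1}$ controlled by $(\nst+1)$-level constants --- this is what feeds into Lemma \ref{L:sim_interior_holder} next.

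Next I would fix $\beta\in(0,1)$ and $x,y\in D_{\nst}$, and from $k^n(m^n(x),x;m^n)=k^n(m^n(y),y;m^n)=0$ apply the mean value theorem in $m$ (legitimate by Lemma \ref{L:k_n_tech_lemma}) to get $|m^n(x)-m^n(y)|=|k^n(m^n(x),x;m^n)-k^n(m^n(x),y;m^n)|/\partial_m k^n(\xi,y;m^n)$ for some $\xi$ between $m^n(x)$ and $m^n(y)$, hence $\xi\in(0,\Lambda(\nst)]$ by the sup bound. For the numerator, $z\mapsto k^n(m^n(x),z;m^n)$ equals $m^n(x)/n$ plus a function of the form \eqref{E:u_def} whose ``$g$'' and ``$h$'' are bounded on $Q_{\nst+1}$ by $(\nst+1)$-level constants (using $m^n(x)\le\Lambda(\nst)$ and the local bounds on $\gamma$ in Assumption \ref{A:gamma}), so Lemma \ref{L:sim_interior_holder}, together with the uniform sup bound on the associated $U$ just noted, gives $\|k^n(m^n(x),\cdot;m^n)\|_{C^\beta(\overline{D_{\nst}})}\le\Lambda(\nst,\beta)$ uniformly in $n>\nst$ and in the base point $x$.

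The hard part is the denominator: I need $\partial_m k^n(\xi,y;m^n)\ge c(\nst)>0$ uniformly in $n$, since Lemma \ref{L:k_n_tech_lemma} only provides the weak bound $\partial_m k^n\ge1/n$, which would re-introduce the $n$-dependence. Writing the expectation in \eqref{E:sim_eqn_0} as $\Phi^n(m,x;\eta)$, so that $k^n=m^{-1}\Phi^n+m/n$, one has $\partial_m k^n=m^{-2}(m\partial_m\Phi^n-\Phi^n)+n^{-1}$, and a direct computation gives $m\partial_m\Phi^n-\Phi^n=\espalt{x}{\int_0^{T\wedge\tau_n}\big[r_t(1-e^{-m(T-t)})+m(m-r_t)(T-t)e^{-m(T-t)}-m(m-r_t)(1-e^{-m(T-t)})\int_0^t\gamma_m(X_u,m,\eta(X_u))du\big]e^{-\int_0^t(r_u+\gamma_u)du}dt}$. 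Here Assumption \ref{A:gamma}(3) enters decisively: using the defining infimum of $\Xi$ in \eqref{E:Xi_def} with $\beta=(T-t)/T$ --- so that $\gamma_m\le\Xi(mT)\le (T-t)e^{-m(T-t)}/(t(1-e^{-m(T-t)}))$ --- the bracket is seen to be $\ge0$ for every $t$, and an elementary estimate of its behaviour as $t\downarrow0$ bounds it below there by a positive multiple of $\min(1,m)\,y^{(1)}$. Combining this with $y^{(1)}\ge\inf_{x\in\overline{D_{\nst}}}x^{(1)}>0$ (as $\overline{D_{\nst}}$ is a compact subset of $D$, where $r>0$) and $\qprob^y(A)\ge c(\nst)>0$ yields $m\partial_m\Phi^n-\Phi^n\ge c(\nst)\min(1,m)$ on $D_{\nst}$, hence $\partial_m k^n(\xi,y;m^n)\ge c(\nst)\Lambda(\nst)^{-2}>0$. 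Substituting back gives $|m^n(x)-m^n(y)|\le\Lambda(\nst,\beta)|x-y|^\beta$, and with the sup bound this is the asserted $\|m^n\|_{C^\beta(\overline{D_{\nst}})}\le\Lambda(\nst,\beta)$. I expect the verification that the $\Xi$-bound forces $m\partial_m\Phi^n-\Phi^n$ to be nonnegative with a quantitatively positive near-$t=0$ contribution to be the only genuinely delicate step --- it is the one place the full strength of Assumption \ref{A:gamma}(3), rather than merely $\gamma\ge0$ and local regularity, is needed.
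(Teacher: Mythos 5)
Your proposal is correct and follows essentially the same route as the paper's proof: the sup bound via the fraction rearrangement, Lemma \ref{L:p_prop1} and Harnack; the numerator via the mean value theorem combined with the interior estimate of Lemma \ref{L:sim_interior_holder}; and, crucially, the uniform-in-$n$ lower bound on $\partial_m k^n$ obtained by using the $\Xi$-bound of Assumption \ref{A:gamma}(3) to neutralize the $\gamma_m$-terms, leaving the residual $r_t\,m^{-2}\bigl(1-e^{-m(T-t)}-m(T-t)e^{-m(T-t)}\bigr)$, which is positive and decreasing in $m$ and hence bounded below via the sup bound on $m^n$ (this is exactly \eqref{E:k_n_m_lb} and the displayed chain of inequalities in the paper). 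The only cosmetic slips are that your near-$t=0$ lower bound should scale like $m^2$ rather than $\min(1,m)$ (harmless, since you divide by $m^2$ and use $m\leq\Lambda(\nst)$ anyway), and that nonnegativity of your bracket requires regrouping it as $r_t[\cdots]+m^2[\cdots]$ rather than as written.
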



\begin{proof}[Proof of Lemma \ref{L:sim_local_bound}]
Let $\alpha\in (0,\beta)$. Since $m^n$ solves \eqref{E:n_eta_fp} we have, for $m^n(x)>0$, rearranging terms that for all $n\geq \nst+1$ and $x \in D_{\nst}$:
\begin{equation}\label{E:mn_in_nst_bdd}
\begin{split}
m^n(x)&=\dfrac{\espalt{x}{\int_0^{T \wedge \tau_n} r_t p(t,m^n(x))e^{-\int_0^t\left(r_u + \gamma(X_u,m^n(x),m^n(X_u))\right)du}dt}}{\espalt{x}{\int_0^{T \wedge \tau_n} p(t,m^n(x))e^{-\int_0^t\left(r_u + \gamma(X_u,m^n(x),m^n(X_u))\right)du}dt}+\frac{m^n(x)}{n\left(1-e^{-m^n(x)T}\right)}}, \\
&\leq \dfrac{2}{\inf\limits_{x \in D_{\nst}}\espalt{x}{\int_0^{T/2\wedge\tau_{\nst+1}}e^{-\int_0^t (r_u du +C_\gamma(\nst+1))du}dt}}\leq \Lambda(\nst).
\end{split}
\end{equation}
Above, the second inequality has used \eqref{E:num_ub}, Lemma \ref{L:p_prop1} and the elliptic Harnack inequality. We next turn to the $\beta$-H\"{o}lder semi-norm. From \eqref{E:mvt_m}, for all $x,y\in D_{\nst}$ we have
\begin{equation}\label{E:full_local_holder_est}
\vert m^n(x)-m^n(y) \vert = \left\vert \dfrac{k^n(m^n(x),x;m^n)-k^n(m^n(x),y;m^n)}{\partial_m k^n(\xi,y;m^n)}\right\vert,
\end{equation}
where $\xi$ is some number between $m^n(x)$ and $m^n(y)$. From \eqref{E:k_n_tech_lemma_0}, \eqref{E:k_n_tech_lemma_1} and \eqref{E:k_n_m_lb} below, we obtain
\begin{equation*}
\begin{split}
&\frac{\partial k^n}{\partial m}(\xi,y;m^n)\\
&\geq\espalt{\qprob^y}{\int_0^{T\wedge \tau_n}r_te^{-\int_0^t\left(r_u + \gamma(X_u,m^n(x),m^n(X_u))\right)du}\frac{1-e^{-\xi(T-t)}-\xi(T-t)e^{-\xi(T-t)}}{\xi^2}dt}; \\
&\geq\espalt{\qprob^y}{\int_0^{T/2\wedge \tau_{\nst+1}}r_te^{-\int_0^t\left(r_u + \gamma(X_u,m^n(x),m^n(X_u))\right)du}\frac{1-e^{-m(T-t)}-m(T-t)e^{-m(T-t)}}{m^2}\bigg|_{m=m^n(x)\vee m^n(y)}dt}; \\
&\geq\frac{1-e^{-mT/2}-m(T/2)e^{-m(T/2)}}{m^2}\bigg|_{m=m^n(x)\vee m^n(y)}\espalt{\qprob^y}{\int_0^{T/2\wedge \tau_{\nst+1}}r_te^{-\int_0^t\left(r_u + \gamma(X_u,m^n(x),m^n(X_u))\right)du}dt}; \\
&\geq\Lambda(\nst)\espalt{\qprob^y}{\int_0^{T/2\wedge \tau_{\nst+1}}r_t e^{-\int_0^t r_u du}dt}; \\
&\geq\Lambda(\nst).
\end{split}
\end{equation*}
Above, the second and third inequalities follow since  $m\mapsto m^{-2}(1-e^{-m(T-u)}-m(T-u)e^{-m(T-u)})$ is strictly positive and decreasing in $m$. The fourth inequality uses \eqref{E:mn_in_nst_bdd} and that $\gamma(X_u,m^n(x),m^n(X_u))\leq B_\gamma(\nst+1)$ almost surely for $t\leq T/2\wedge\tau_{\nst+1}$. The last inequality follows by taking the infimum of $\espalt{\qprob^y}{\int_0^{T/2\wedge\tau_{\nst+1}} r_t e^{-\int_0^t r_u du}dt}$ over $y\in D_{\nst}$ and noting that by Harnack's inequality this value is strictly positive given $D_{\nst}$ is strictly contained in $D_{\nst+1}$. For the numerator in \eqref{E:full_local_holder_est} we have
\begin{equation*}
k^n(m^n(x),x;m^n)-k^n(m^n(x),y;m^n) = u^{m^n(x),m^n}(x)-u^{m^n(x),m^n}(y),
\end{equation*}
where $u^{m,\eta}$ is from \eqref{E:u_m_eta_def} below.  Note that $u^{m^n(x),m^n}$ is of the form \eqref{E:u_def} with $g = g^{m^n(x)}$ and $h=h^{m^n(x),m^n}$ from \eqref{E:gm_hm_def} below. Specifically, we have
\begin{equation*}
g^{m^n(x)}(t,y) = (m^n(x)-y^{(1)})\frac{1-e^{-m^n(x)(T-t)}}{m^n(x)};\qquad h^{m^n(x),m^n}(y) = y^{(1)} + \gamma(y,m^n(x),m^n(y)).
\end{equation*}
Since $0<m^n(x)< C^{(1)}_n$ we have from \eqref{E:gm_bdd} and \eqref{E:hm_bdd} that the assumptions of Lemma \ref{L:sim_interior_holder} are satisfied (with $\alpha_0 = \alpha$ since $m^n\in C^{\alpha}(\ol{D}_n)$ for the given, arbitrary $\alpha\in (0,\beta)$) and hence for all $\beta\in (0,1)$ by taking $\alpha\in (0,1), \alpha < \beta$:
\begin{equation*}
\begin{split}
\Vert u^{m^n(x),m^n} \Vert_{C^{\beta}(\ol{D}_{\nst})} &\leq \Lambda(\nst,\beta)\left(\vert g^{m^n(x)}\vert_{0,\nst+1} + \vert u^{m^n(x),m^n}\vert_{0,\ol{D}_{\nst + 1}}\right)\\
&\leq\Lambda(\nst,\beta)\left(\Lambda(\nst+1) + C^{(1)}_{\nst+1} + \vert u^{m^n(x),m^n}\vert_{0,\ol{D}_{\nst+1}}\right).
\end{split}
\end{equation*}
Now, for $y\in D_{\nst}$:
\begin{equation*}
\begin{split}
\vert u^{m^n(x),m^n}(y)\vert &= \vert k^n(m^n(x),y;m^n \vert\\
&\leq \int_0^T \left(1-e^{-m^n(x)(T-t)}\right)\espalt{\qprob^y}{1_{t\leq \tau_n}e^{-\int_0^t\left(r_u+\gamma(X_u,m^n(x),m^n(X_u))\right)du}}dt\\
&\qquad + \espalt{\qprob^y}{\int_0^{T\wedge\tau_n}r_t\frac{1-e^{-m^n(x)(T-t)}}{m^n(x)}e^{-\int_0^t\left(r_u + \gamma(X_u,m^n(x),m^n(X_u))\right)du}dt} + \frac{m^n(x)}{n}\\
&\leq T + T\espalt{\qprob^y}{\int_0^{T\wedge\tau_n}r_te^{-\int_0^t r_udu}dt} + \frac{\Lambda(\nst+1)}{n}\\
&\leq 2T + \frac{\Lambda(\nst+1)}{\nst}\\
&=\Lambda(\nst+1).
\end{split}
\end{equation*}
Hence we conclude that $\vert u^{m^n(x),m^n}\vert_{0,\ol{D}_{\nst+1}}\leq \Lambda(\nst,\beta)$ and thus
\begin{equation*}
\vert k^n(m^n(x),x;m^n)-k^n(m^n(x),y;m^n) \vert \leq \Lambda(\nst,\beta)\vert x-y \vert^\beta.
\end{equation*}
Putting these two estimates together in \eqref{E:full_local_holder_est} gives
\begin{equation*}
\vert m^n(x)-m^n(y)\vert \leq  \Lambda(\nst,\beta)\vert x-y \vert^\beta,\ \forall x,\ y\in D_{\nst},
\end{equation*}
finishing the proof, in view of \eqref{E:mn_in_nst_bdd}.
\end{proof}

With all these preparations, we are now ready to prove Theorem \ref{T:main_result}.

\begin{proof}[Proof of Theorem \ref{T:main_result}]
Note that \eqref{E:m_goal_x} is equivalent to
\begin{equation*}
m(x)=
\dfrac{\espalt{x}{\int_0^{T} r_tp(t,m(x))e^{-\int_0^t\left(r_u +\gamma(X_u,m(x),m(X_u))\right)du}dt}}{\espalt{x}{\int_0^{T} p(t,m(x))e^{-\int_0^t\left(r_u +\gamma(X_u,m(x),m(X_u))\right)du}dt}};\qquad x\in D.
\end{equation*}
Let $\alpha\in (0,1)$. From Lemma \ref{L:sim_local_bound}, there exists a positive constant $\Lambda(1,\alpha)$ such that $\forall n>1$, we have $\Vert m^n \Vert_{\alpha,\ol{D}_1}\leq \Lambda(1,\alpha)$. The Arzel\`{a}-Ascoli theorem asserts the existence of a subsequence of $\{m^n(x)\}_{n>1}$, which we denote by  $\cbra{m^{n_k^{(1)}}(x)}_{k \in \mathbb{N}}$, and some $m^{(1)} \in \mathbb{K}_1$ such that for each $n_k^{(1)}$, $m^{n_k^{(1)}}$ satisfies the equality in \eqref{E:mn_in_nst_bdd} for $x\in D_1$ and such that  $m^{n_k^{(1)}}(x)$ converge to $m^{(1)}(x)$ uniformly in $D_1$ as $k \to \infty$, with $\Vert m^{(1)} \Vert_{\alpha,\ol{D}_1}\leq \Lambda(1,\alpha).$

Applying Lemma \ref{L:sim_local_bound} again, we have that there exists a positive constant $\Lambda(2,\alpha)$ such that $\forall n_k^{(1)}>2$, we have $\Vert m^{n_k^{(1)}} \Vert_{\alpha,\ol{D}_2}\leq \Lambda(2,\alpha)$. The Arzel\`{a}-Ascoli theorem again assures the existence of a subsequence of $\cbra{m^{n_k^{(2)}}(x)}_{k\in\mathbb{N}}$ and some $m^{(2)} \in \mathbb{K}_2$ such that $m^{n_k^{(2)}}$ converge to $m^{(2)}$ uniformly in $D_2$ as $k \to \infty$, with $\Vert m^{(2)} \Vert_{\alpha,\ol{D}_2}\leq \Lambda(2,\alpha).$ Note that by construction, $m^{(2)}(x)= m^{(1)}(x)$ for $x\in D_1$.

The above procedure can be carried out iteratively and we conclude that $\forall l \in \mathbb{N}$, there exists a subsequence of $\{m^{n_k^{(l)}}\}_{k>1}$, denoted by  $\{m^{n_k^{(l+1)}}\}_{k \in \mathbb{N}}$, and function $m^{(l+1)} \in \mathbb{K}_{l+1}$, such that $m^{n_k^{(l+1)}}$ converge to $m^{(l+1)}$ uniformly in $D_{l+1}$ as $k \to \infty$, and $\Vert m^{(l+1)} \Vert_{\alpha,\ol{D}_{l+1}}\leq \Lambda(l+1,\alpha)$. Moreover, by construction,  $m^{(l+1)}(x) =  m^{(l)}(x)$ for $x\in D_l$.

Now, for all $x \in D$, there is some $l \in \mathbb{N}$ such that $x \in D_k$, $\forall k \geq l$. We define $m: D \to [0,\infty)$ by
\begin{equation}\label{E:the_m_def}
m(x):=m^{(l)}(x),
\end{equation}
and note that by construction, $m$ is well defined and $m(x) \in C_{loc}^{\alpha}(D)$, $\forall \alpha \in (0,1)$. We claim that $m$ is the desired fixed point. Indeed, fix $l$ and note that for $x\in D_l$ we have that $m(x) = \lim_{k\to\infty} m^{n_k^{(l')}}(x)$ for any $l'\geq l$. Thus, for any $l'\geq l$ we can write, using \eqref{E:mn_in_nst_bdd},
\begin{equation}\label{E:m_frac_eq}
\begin{split}
m(x)&=\dfrac{\lim\limits_{k \to \infty}\espalt{x}{\int_0^{T \wedge \tau_{n_k^{(l')}}} r_t p(t,m^{n_k^{(l')}}(x))e^{-\int_0^t\left(r_u +\gamma(X_u,m^{n_k^{(l')}}(x),m^{n_k^{(l')}}(X_u))\right)du}dt}}{\lim\limits_{k \to \infty}\espalt{x}{\int_0^{T \wedge \tau_{n_k^{(l')}}} p(t,m^{n_k^{(l')}}(x))e^{-\int_0^t\left(r_u+\gamma(X_u,m^{n_k^{(l')}}(x),m^{n_k^{(l')}}(X_u))\right)du}dt} + \frac{m^{n_{k}^{(l')}}(x)}{n_k^{(l')}\left(1-e^{-m^{n_k^{(l')}}(x)}\right)}} \\
&=:\frac{\bold{A}(l')}{\bold{B}(l')},
\end{split}
\end{equation}
where, (recall $x\in D_l$ and $l$ is fixed)
\begin{equation*}
\begin{split}
\bold{A}(l')&=\lim\limits_{k \to \infty}\espalt{x}{\int_0^{T \wedge \tau_{l'}} r_t p(t,m^{n_k^{(l')}}(x))e^{-\int_0^t\left(r_u +\gamma(X_u,m^{n_k^{(l')}}(x),m^{n_k^{(l')}}(X_u))\right)du}} \\
&\qquad +\lim\limits_{k \to \infty}\espalt{x}{\int_{T \wedge \tau_{l'}}^{T \wedge \tau_{n_k^{(l')}}}r_t p(t,m^{n_k^{(l')}}(x))e^{-\int_0^t\left(r_u +\gamma(X_u,m^{n_k^{(l')}}(x),m^{n_k^{(l')}}(X_u))\right)du}dt} \\
&=\espalt{x}{\int_0^{T \wedge \tau_{l'}} r_t p(t,m(x))e^{-\int_0^t\left(r_u +\gamma(X_u,m(x),m(X_u))\right)du}dt} \\
&\qquad+\lim\limits_{k \to \infty}\espalt{x}{\int_{T \wedge \tau_{l'}}^{T \wedge \tau_{n_k^{(l')}}} r_t p(t,m^{n_k^{(l')}}(x))e^{-\int_0^t\left(r_u +\gamma(X_u,m^{n_k^{(l')}}(x),m^{n_k^{(l')}}(X_u))\right)du}dt},
\end{split}
\end{equation*}
The second equality above follows from the bounded convergence theorem since $0\leq p\leq 1$, $0\leq r_t\leq C^{(1)}_{l'}$, $\gamma\geq 0$ and since $m^{n_k^{(l')}}(X_u) \rightarrow m(X_u)$ almost surely for $u\leq \tau_{l'}$, and also, since $l'\geq l$, from $x\in D_l\subset D_{l'}$ so $m^{n_k^{(l')}}(x)\rightarrow m(x)$.  As for the second term we have
\begin{equation*}
\begin{split}
0&\leq \espalt{x}{\int_{T\wedge\tau_{l'}}^{T\wedge \tau_{n_k^{(l')}}} r_t p(t,m^{n_k^{(l')}}(x)) e^{-\int_0^t\left(r_u + \gamma(X_u,m^{n_k^{(l')}}(x),m^{n_k^{(l')}}(X_u))\right)du}dt},\\
&\leq \espalt{x}{\int_{T\wedge\tau_{l'}}^{T} r_t e^{-\int_0^t r_udu}dt}.
\end{split}
\end{equation*}
Taking $l'\uparrow\infty$ and using the non-explosivity of $X$ along with the monotone convergence theorem it thus follows that
\begin{equation*}
\lim_{l'\uparrow\infty} \bold{A}(l') = \espalt{x}{\int_0^{T} r_t p(t,m(x))e^{-\int_0^t\left(r_u +\gamma(X_u,m(x),m(X_u))\right)du}dt}.
\end{equation*}
Repeating the same calculation for $\bold{B}(l')$ and noting the only difference is a) the absence of $r_t$ which is bounded for $t\leq \tau_{l'}$, and b) the fraction $m^{n_k^{(l')}}(x)/(n_k^{(l')}(1-e^{-m^{n_k^{(l')}}(x)})$ which clearly goes away as $k\uparrow\infty$, it similarly follows that for $x\in D_l$:
\begin{equation*}
\lim_{l'\uparrow\infty} \bold{B}(l') = \espalt{x}{\int_0^{T} p(t,m(x))e^{-\int_0^t\left(r_u +\gamma(X_u,m(x),m(X_u))\right)du}dt}.
\end{equation*}
Thus, since $m(x)$ on the left hand side of \eqref{E:m_frac_eq} did not depend upon $l'$ the result follows.

\end{proof}

\section{Supplementary Proofs from Section \ref{SS:local}}\label{Ap:sup_pf}

\begin{proof}[Proof of Lemma \ref{L:k_n_tech_lemma}]

Note that $r_t,\gamma(X_t,m,\eta(X_t)$ are non-negative and uniformly bounded above by $C^{(1)}_n + B_\gamma(n)$ for $t\leq \tau_n$. Additionally, from \eqref{E:gamma_loc_x_bdd} and \eqref{E:gamma_m_bdd} we have that for all $x\in D_n,m,z\geq 0$ that
\begin{equation}\label{E:gamma_m_ub_00}
\gamma_m(x,m,z) \leq \min\cbra{B_\gamma(n) + L_\gamma(n)m, \Xi(mT)} \leq \begin{cases} B_\gamma(n) + L_\gamma(n) & m\leq 1\\ \Xi(T) & m>1\end{cases}\ := \ol{M}(n),
\end{equation}
so that $\gamma_m(X_t,m,\eta(X_t))$ is almost surely bounded above on $t\leq\tau_n$ by a constant depending only upon $n$. It thus follows by the bounded convergence theorem that we may pull the differential operator (with respect to $m$) within the expected value and integral in \eqref{E:sim_eqn_0} to obtain
\begin{equation}\label{E:k_n_tech_lemma_0}
\begin{split}
\partial_m k^n(m,x,T;\eta) &= \espalt{x}{\int_0^{T\wedge\tau_n}\partial_m\left(\left(1-\frac{r_t}{m}\right)\left(1-e^{-m(T-t)}\right)e^{-\int_0^t\left(r_u + \gamma(X_u,m,\eta(X_u))\right)du}\right)dt} + \frac{1}{n}.\\
\end{split}
\end{equation}
By differentiating and collecting terms (again all interchanges of the integral and derivative are allowed given the current hypotheses) we obtain
\begin{equation}\label{E:k_n_tech_lemma_1}
\begin{split}
&e^{\int_0^t\left(r_u +\gamma(X_u,m,\eta(X_u))\right)du} \times \partial_m\left(\left(1-\frac{r_t}{m}\right)\left(1-e^{-m(T-t)}\right)e^{-\int_0^t\left(r_n + \gamma(X_u,m,\eta(X_u))\right)du}\right)\\
&\qquad = r_t\left(\frac{1-e^{-m(T-t)}}{m^2} - \frac{(T-t)e^{-m(T-t)}}{m} + \frac{1-e^{-m(T-t)}}{m}\int_0^t\gamma_m(X_u,m,\eta(X_u))du\right)\\
&\qquad\qquad + (T-t)e^{-m(T-t)} - (1-e^{-m(T-t)})\int_0^t\gamma_m(X_u,m,\eta(X_u))du.
\end{split}
\end{equation}

For all $m>0,t\leq T$ calculation shows
\begin{equation}\label{E:m_t_basics}
0\leq \frac{1-e^{-m(T-t)}}{m^2} - \frac{(T-t)e^{-m(T-t)}}{m} \leq \frac{1}{2}(T-t)^2;\qquad 0\leq \frac{1-e^{-m(T-t)}}{m} \leq (T-t).
\end{equation}
Since $0\leq \gamma_m(x,m,z)\leq \ol{M}(n)$ and $0\leq r_t \leq C^{(1)}_n$ almost surely in $D_n$ it follows that the right hand side of \eqref{E:k_n_tech_lemma_1} is bounded below by
\begin{equation}\label{E:k_n_tech_lemma_2}
(T-t)e^{-m(T-t)} - (1-e^{-m(T-t)})\int_0^t\gamma_m(X_u,m,\eta(X_u))du,
\end{equation}
and from above by
\begin{equation*}
C^{(1)}_n\left(\frac{1}{2}(T-t)^2 + (T-t)t\ol{M}(n)\right) + (T-t).
\end{equation*}
The upper bound in \eqref{E:k_n_m_deriv_bdd} readily follows. As for the lower bound, from \eqref{E:gamma_m_bdd} we have
\begin{equation}\label{E:k_n_m_lb}
\begin{split}
(T-t)e^{-m(T-t)} &- (1-e^{-m(T-t)})\int_0^t\gamma_m(X_u,m,\eta(X_u))du\\
&\geq (T-t)e^{-m(T-t)} - \Xi(mT)t(1-e^{-m(T-t)});\\
&\geq 0.
\end{split}
\end{equation}
To see the third inequality note that (writing $\beta = 1-t/T$ and multiplying numerator and denominator by $T$)
\begin{equation*}
\begin{split}
\Xi(mT) = \inf_{\beta\in (0,1)}\frac{\beta e^{-\beta mT}}{(1-\beta)(1-e^{-\beta mT})} = \inf_{t\in (0,T)}\frac{(T-t)e^{-m(T-t)}}{t(1-e^{-m(T-t)})},
\end{split}
\end{equation*}
It thus follows from \eqref{E:k_n_tech_lemma_1} that almost surely for all $m>0$ and $t\leq T\wedge\tau_n$ that
\begin{equation*}
\partial_m\left(\left(1-\frac{r_t}{m}\right)\left(1-e^{-m(T-t)}\right)e^{-\int_0^t\left(r_n + \gamma(X_u,m,\eta(X_u))\right)du}\right)\geq 0
\end{equation*}
which yields the upper bound in \eqref{E:k_n_m_deriv_bdd}. Lastly, it is evident from \eqref{E:k_n_tech_lemma_1} that the map
\begin{equation*}
m\mapsto \partial_m\left(\left(1-\frac{r_t}{m}\right)\left(1-e^{-m(T-t)}\right)e^{-\int_0^t\left(r_n + \gamma(X_u,m,\eta(X_u))\right)du}\right)
\end{equation*}
is almost surely continuous in $m$ and non-negative with upper bound
\begin{equation*}
C^{(1)}_n\left(\frac{1}{2}T^2 + T^2\ol{M}(n)\right) + T,
\end{equation*}
and hence by the bounded convergence theorem the map $m\mapsto \partial_m k^n(m,x;\eta)$ is continuous and each $m >0$. Turning to \eqref{E:k_n_x_deriv_bdd}, write $k^n(m,\cdot;\eta) = u^{m,\eta}$ where
\begin{equation}\label{E:u_m_eta_def}
\begin{split}
u^{m,\eta}(x)&\dfn \espalt{x}{\int_0^{T \wedge \tau_n}(m-r_t)\frac{1-e^{-m(T-t)}}{m}e^{-\int_0^t\left(r_u + \gamma(X_u,m,\eta(X_u))\right)du}dt};\qquad x\in D_n
\end{split}
\end{equation}
$u^{m,\eta}$ is of the form \eqref{E:u_def} with
\begin{equation}\label{E:gm_hm_def}
\begin{split}
g^{m}(t,x) &\dfn (m-x^{(1)})\frac{1-e^{-m(T-t)}}{m}\\
h^{m,\eta}(t,x) = h^{m,\eta}(x)&\dfn x^{(1)} + \gamma(x,m,\eta(x)).
\end{split}
\end{equation}
Calculation shows for $0<m\leq C^{(1)}_n$ that
\begin{equation}\label{E:gm_bdd}
\lim_{t\uparrow T,y\rightarrow x} g^m(t,y) = 0, x\in \partial D_n;\qquad \vert g^m\vert_{0,n} \leq C^{(1)}_nT;\qquad \bra{g^m}_{\alpha,n} \leq C^{(1)}_n T^{1-\alpha/2} + T(C^{(1)}_n)^{1-\alpha}
\end{equation}
and
\begin{equation}\label{E:hm_bdd}
\begin{split}
\vert h^{m,\eta}\vert_{0,n} &\leq C^{(1)}_n + B_\gamma(n);\\
\bra{h^{m,\eta}}_{\alpha,n} &\leq (C^{(1)}_n)^{1-\alpha} + L_\gamma(n\vee C^{(1)}_n\vee\alphanorm{\eta}{n})\left((2C_n)^{1-\alpha} + \alphanorm{\eta}{n}\right),
\end{split}
\end{equation}
Note that the above can be made uniform for all $\alphanorm{\eta}{n}\leq R$ for any $R>0$. Thus, Lemma \ref{L:sim_global_schauder} yields the upper bound in \eqref{E:k_n_x_deriv_bdd}.

\end{proof}

\begin{proof}[Proof of Lemma \ref{L:k_n_tech_lemma_2}]

We have $k^n(m_1,\cdot;\eta_1) - k^n(m_2,\cdot;\eta_2) = u^{m_1,\eta_1} - u^{m_2,\eta_2}$ where $u^{m,\eta}$ is from \eqref{E:u_m_eta_def}.  For $0<m_1,m_2\leq C^{(1)}_n$, from \eqref{E:gm_bdd}, \eqref{E:hm_bdd} (applied for the respective $m_i,\eta_i$), it follows from Lemma \ref{L:sim_global_schauder} that for $u^{m_i,\eta_i} = U^{m_i,\eta_i}(0,\cdot)$ where $U^{m_i,\eta_i}$ solves the linear parabolic PDE given in \eqref{E:U_CD_prob}. Furthermore, $\ol{\vert U^{m_i,\eta_i}\vert}_{2,\alpha,\ol{D}_n} \leq C(n,\alphanorm{\eta_i}{n})$ where the bounded constant can be made uniform for $\alphanorm{\eta_i}{n}\leq R$.

Define $V\dfn U^{m_1,\eta_1}-U^{m_2,\eta_2}$. Then $V$ solves the linear parabolic PDE
\begin{equation}\label{E:V_CD_prob_1}
\begin{cases}
&V_t+\mathcal{L}V-h^{m_1,\eta_1}V=-\tilde{g}, \quad (t,x) \in Q_n, \\
&V(T,x)=0, \quad x \in D_n, \\
&V(t,x)=0, \quad (t,x) \in [0,T]\times \partial D_n,
\end{cases}
\end{equation}
where we have set (recall \eqref{E:gm_hm_def}):
\begin{equation}\label{E:tilde_g_here}
\begin{split}
\tilde{g}(t,x)&\dfn g^{m_1}(t,x) - g^{m_2}(t,x) + U^{m_2,\eta_2}(t,x)(h^{m_2,\eta_2}-h^{m_1,\eta_1})(x).
\end{split}
\end{equation}
From \eqref{E:hm_bdd} we have that $\ol{\vert h^{m_1,\eta_1}\vert}_{\alpha,n}$ is bounded from above by a constant which only depends upon $n,\alphanorm{\eta_1}{n}$ (which can be made uniform if $\alphanorm{\eta_1}{n}\leq R$). A lengthy, though direct, calculation shows
\begin{equation*}
\begin{split}
\vert g^{m_1}-g^{m_2}\vert_{0,n} &\leq \left(T+ \frac{1}{2}C^{(1)}_nT^2\right)|m_2-m_1|,\\
\vert h^{m_1,\eta_1}-h^{m_2,\eta_2}\vert_{0,n} &\leq L_\gamma(n\vee C^{(1)}_n \vee \alphanorm{\eta_1}{n}\vee \alphanorm{\eta_2}{n})\left(\alphanorm{\eta_2-\eta_1}{n} + |m_2 - m_1|\right).
\end{split}
\end{equation*}
Note the above, again, can be made uniform for $\alphanorm{\eta_i}{n}\leq R$. Lemma \ref{L:long_h_calc} below shows that there is a constant $\tilde{\Lambda}(n,\alphanorm{\eta_1}{n},\alphanorm{\eta_2}{n})$ (uniform for $\alphanorm{\eta_1}{n},\alphanorm{\eta_2}{n}\leq R$) so that
\begin{equation}\label{E:long_h_calc}
\begin{split}
\bra{g^{m_1}-g^{m_2}}_{\alpha,n} &\leq \left((1+2TC^{(1)}_n)T^{1-\alpha/2} + \frac{1}{2}T^2(C^{(1)}_n)^{1-\alpha}\right)|m_2-m_1|,\\
\bra{ h^{m_1,\eta_1}-h^{m_2,\eta_2}}_{\alpha,n} &\leq \tilde{\Lambda}(n,\alphanorm{\eta_1}{n},\alphanorm{\eta_2}{n})\left(|m_1-m_2| + \alphanorm{\eta_2-\eta_1}{n} + |m_1-m_2|\alphanorm{\eta_2-\eta_1}{n}\right).
\end{split}
\end{equation}
From \eqref{E:tilde_g_here}, it easily follows since $\ol{\vert U^{m_2,\eta_2}\vert}_{2,\alpha,\ol{D}_n}\leq C(n,\alphanorm{\eta_2}{n})$ that (by potentially enlarging $\Lambda''$)
\begin{equation*}
\vert \tilde{g}\vert_{\alpha,n} \leq \tilde{\Lambda}(n,\alphanorm{\eta_1}{n},\alphanorm{\eta_2})\left(|m_1-m_2| + \alphanorm{\eta_1-\eta_2}{n} + |m_1-m_2|\alphanorm{\eta_1-\eta_2}{n}\right).
\end{equation*}
The result then follows from Lemma \ref{L:sim_global_schauder} since $g^m$ and $U^{m_2,\eta_2}$ take the value zero on $t=T,x\in\partial D_n$, and hence the compatibility condition holds.

We next prove \eqref{E:k_n_eta_m_reg_2}.  As follows from \eqref{E:k_n_tech_lemma_0} and \eqref{E:k_n_tech_lemma_1} we have
\begin{equation}\label{E:partial_k_n_m_compare}
\begin{split}
&\partial_m k^n(m_1,x;\eta_1) - \partial_m k^n(k_2,x;\eta_2)\\
&= \espalt{x}{\int_0^{T\wedge\tau_n}\left(A_1(t)\left(B(t)C_1(t) + D_1(t)\right) - A_2(t)\left(B(t)C_2(t) + D_2(t)\right)\right)dt},
\end{split}
\end{equation}
where for $i=1,2$
\begin{equation*}
\begin{split}
A_i(t) &= e^{-\int_0^t\left(r_u + \gamma(X_u,m_i,\eta_i(X_u))\right)du};\qquad B(t) = r_t,\\
C_i(t) &= \frac{1}{m_i^2}\left(1-e^{-m_i(T-t)}-m_i(T-t)e^{-m_i(T-t)}\right) + \frac{1-e^{-m_i(T-t)}}{m_i}\int_0^t\gamma_m(X_u,m_i,\eta_i(X_u))du,\\
D_i(t)) &= (T-t)e^{-m_i(T-t)} - (1-e^{-m_i(T-t)})\int_0^t\gamma_m(X_u,m_i,\eta_i(X_u))du.
\end{split}
\end{equation*}
Using the elementary estimate
\begin{equation*}
\left|A_1(BC_1+D_1) - A_2(BC_2+D_2)\right| \leq |A_1||B||C_1-C_2| + (|B||C_2| + |D_2|)|A_1-A_2| + |A_1||D_1-D_2|,
\end{equation*}
we will obtain the upper bound in \eqref{E:k_n_eta_m_reg_2}.  First, we have the almost sure inequalities
\begin{equation*}
\begin{split}
|A_1(t)| &\leq 1;\qquad |B(t)| \leq C^{(1)}_n,\\
|C_2(t)| &\leq T^2\left(\frac{1}{2} + \ol{M}(n)\right);\qquad |D_2(t)| \leq T(1 + \ol{M}(n)).
\end{split}
\end{equation*}
Above, we have used that $\gamma\geq 0$, $0\leq r_t \leq C^{(1)}_n$ on $t\leq \tau_n$, \eqref{E:m_t_basics}, and \eqref{E:gamma_m_ub_00}.  Next, we have
\begin{equation*}
\begin{split}
&\left|C_1(t)-C_2(t)\right|\\
&\qquad \leq \left|\frac{1-e^{-m_1(T-t)}-m_1(T-t)e^{-m_1(T-t)}}{m_1^2} - \frac{1-e^{-m_2(T-t)}-m_2(T-t)e^{-m_2(T-t)}}{m_2^2}\right|\\
&\qquad \qquad + \frac{1-e^{-m_1(T-t)}}{m_1}\int_0^T\left|\gamma_m(X_u,m_1,\eta_1(X_u))-\gamma_m(X_u,m_2,\eta_2(X_u))\right|du\\
&\qquad\qquad + \int_0^T\gamma_m(X_u,m_2,\eta_2(X_u))du\left|\frac{1-e^{-m_1(T-t)}}{m_1} - \frac{1-e^{-m_2(T-t)}}{m_2}\right|.
\end{split}
\end{equation*}
The map $m\mapsto (1-e^{-m(T-t)}-m(T-t)e^{-m(T-t)})/m^2$ has derivative $-(2/m^3)(1-e^{-m(T-t)}-m(T-t)e^{-m(T-t)} - (1/2)m^2(T-t)^2e^{-m(T-t)})$ which is non-positive and is bounded in absolute value of $(T-t)^3/3 \leq T^3/3$.  Thus,
\begin{equation*}
\left|\frac{1-e^{-m_1(T-t)}-m_1(T-t)e^{-m_1(T-t)}}{m_1^2} - \frac{1-e^{-m_2(T-t)}-m_2(T-t)e^{-m_2(T-t)}}{m_2^2}\right| \leq \frac{T^3}{3}|m_1-m_2|.
\end{equation*}
For the second term we have
\begin{equation*}
\begin{split}
&\frac{1-e^{-m_1(T-t)}}{m_1}\int_0^T\left|\gamma_m(X_u,m_1,\eta_1(X_u))-\gamma_m(X_u,m_2,\eta_2(X_u))\right|du\\
&\qquad \leq T^2 L_\gamma(n\vee C^{(1)}_n\vee\alphanorm{\eta_1}{n}\vee\alphanorm{\eta_2}{n})\left(|m_1-m_2| + \alphanorm{\eta_1-\eta_2}{n}\right).
\end{split}
\end{equation*}
For the third term we have
\begin{equation*}
\begin{split}
&\int_0^T\gamma_m(X_u,m_2,\eta_2(X_u))du\left|\frac{1-e^{-m_1(T-t)}}{m_1} - \frac{1-e^{-m_2(T-t)}}{m_2}\right|\\
&\qquad \leq \frac{1}{2}T^3\ol{M}(n)|m_1-m_2|,
\end{split}
\end{equation*}
since $m\mapsto (1-e^{-m(T-t)})/m$ has a derivative bounded by $(T-t)^2/2$. Thus, we can find a constant $\calphanorm$ so that almost surely for $t\leq T$
\begin{equation*}
|C_1(t)-C_2(t)| \leq \calphanorm\left(|m_1-m_2| + \alphanorm{\eta_1-\eta_2}{n}\right).
\end{equation*}
We next have, by the non-negativity of $r,\gamma$ and the fact that $|e^{-a}-e^{-b}| \leq |a-b|$ for $a,b\geq 0$, that almost surely for $t\leq T\wedge\tau_n$:
\begin{equation*}
\begin{split}
\left|A_1(t) - A_2(t)\right| &\leq \int_0^T\left|\gamma(X_u,m_1,\eta_1(X_u))-\gamma(X_u,m_2,\eta_2(X_u))\right|du,\\
&\leq T L_\gamma(n\vee C^{(1)}_n\vee \alphanorm{\eta_1}{n}\vee\alphanorm{\eta_2}{n})\left(|m_1-m_2| + \alphanorm{\eta_1-\eta_2}{n}\right),\\
&=\calphanorm\left(|m_1-m_2| + \alphanorm{\eta_1-\eta_2}{m}\right).
\end{split}
\end{equation*}
Lastly, we have
\begin{equation*}
\begin{split}
&|D_1(t)-D_2(t)|\\
&\qquad \leq  T\left|e^{-m_1(T-t)}-e^{-m_2(T-t)}\right| + (1-e^{-m_2(T-t)})\int_0^T\left|\gamma_m(X_u,m_1,\eta_1(X_u))-\gamma_m(X_u,m_2,\eta_2(X_u))\right|du\\
&\qquad\qquad + \int_0^T\gamma_m(X_u,m_2,\eta_2(X_u))du\left|e^{-m_2(T-t)}-e^{-m_1(T-t)}\right|,\\
&\qquad \leq T^2|m_1-m_2| + TL_\gamma(n\vee C^{(1)}_n\vee\alphanorm{\eta_1}{n}\vee\alphanorm{\eta_2}{n})\left(|m_1-m_2| + \alphanorm{\eta_1-\eta_2}{n}\right)\\
&\qquad\qquad + \ol{M}(n)T^2|m_1-m_2|,\\
&\qquad \leq \calphanorm \left(|m_1-m_2| + \alphanorm{\eta_1-\eta_2}{n}\right).
\end{split}
\end{equation*}
Putting this all together in \eqref{E:partial_k_n_m_compare} gives for all $x\in D_n$ that
\begin{equation*}
\left|\partial_m k^n(m_1,x;\eta_1) - \partial_m k^n(m_2,x;\eta_2)\right| \leq \calphanorm\left(|m_1-m_2| + \alphanorm{\eta_1-\eta_2}{n}\right),
\end{equation*}
which is the desired result.
\end{proof}


\begin{lemma}\label{L:long_h_calc}

For $0<m_1,m_2\leq C^{(1)}_n$, $\eta_1,\eta_2\in\mathbb{K}_n$ and $g^m,h^m$ as in \eqref{E:gm_hm_def} the inequalities in \eqref{E:long_h_calc} hold.

\end{lemma}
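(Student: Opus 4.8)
The plan is to reduce both inequalities in \eqref{E:long_h_calc} (only the $\alpha$-H\"older \emph{seminorms} are at issue in this lemma, the $|\,\cdot\,|_{0,n}$ parts already having been bounded above) to elementary one-variable calculus on the explicit functions of $m$ appearing in $g^m$ and $h^{m,\eta}$; see \eqref{E:gm_hm_def}. The one preliminary device I would record is the passage from Lipschitz estimates to parabolic H\"older seminorm estimates on the bounded cylinder $Q_n=(0,T)\times D_n$: if $\phi$ on $Q_n$ satisfies $|\phi(t,x)-\phi(t,y)|\le L_x|x-y|$ and $|\phi(t,x)-\phi(\bar t,x)|\le L_t|t-\bar t|$, then, splitting $\phi(P_1)-\phi(P_2)$ through the point $(t,\bar x)$ and using $|x-\bar x|\le(2C_n)^{1-\alpha}\,d(P_1,P_2)^{\alpha}$ (since $|x-\bar x|\le 2C_n$ and $|x-\bar x|\le d(P_1,P_2)$) together with $|t-\bar t|\le T^{1-\alpha/2}\,d(P_1,P_2)^{\alpha}$ (since $|t-\bar t|\le T$ and $|t-\bar t|\le d(P_1,P_2)^2$), one gets
\begin{equation*}
[\phi]_{\alpha,n}\ \le\ L_x\,(2C_n)^{1-\alpha}+L_t\,T^{1-\alpha/2};
\end{equation*}
when $\phi$ depends on $x$ only through $x^{(1)}$ one may replace $(2C_n)^{1-\alpha}$ by $(C^{(1)}_n)^{1-\alpha}$, and when $\phi$ is $t$-independent the $L_t$ term drops.

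For the $g$-estimate I would write $g^m(t,x)=(1-e^{-m(T-t)})-x^{(1)}q(m,T-t)$ with $q(m,s):=(1-e^{-ms})/m$, so that $\phi:=g^{m_1}-g^{m_2}$ is affine in $x^{(1)}$ and independent of the other coordinates. By the mean value theorem in $m$ its spatial Lipschitz constant is $\le|m_1-m_2|\sup_{s\le T}|\partial_m q(m,s)|\le\tfrac12 T^2|m_1-m_2|$, the bound $|\partial_m q(m,s)|\le s^2/2$ being precisely \eqref{E:m_t_basics}. Differentiating in $t$ and again invoking the mean value theorem, the time-Lipschitz constant is controlled by $|m_1-m_2|\sup_{s\le T}|\partial_m(me^{-ms})|=|m_1-m_2|\sup_{s\le T}|e^{-ms}(1-ms)|\le|m_1-m_2|$ and by $C^{(1)}_n|m_1-m_2|\sup_{s\le T}|\partial_m e^{-ms}|=C^{(1)}_n|m_1-m_2|\sup_{s\le T}se^{-ms}\le C^{(1)}_n T|m_1-m_2|$, hence is $\le(1+2TC^{(1)}_n)|m_1-m_2|$ after a harmless enlargement. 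Substituting into the conversion inequality (with $(C^{(1)}_n)^{1-\alpha}$ for the $x^{(1)}$-affine part) yields the first line of \eqref{E:long_h_calc}; this is entirely routine.

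For the $h$-estimate, $h^{m,\eta}$ is $t$-independent and the $x^{(1)}$ term cancels in $h^{m_1,\eta_1}-h^{m_2,\eta_2}$, so it suffices to bound the spatial seminorm $[\Delta]_{\alpha,D_n}$ of $\Delta(x):=\gamma(x,m_1,\eta_1(x))-\gamma(x,m_2,\eta_2(x))$. Interpolating along the segment $\theta\mapsto(m_\theta,\eta_\theta):=(m_2,\eta_2)+\theta((m_1,\eta_1)-(m_2,\eta_2))$ and using $\gamma\in C^2(E)$ (Assumption \ref{A:gamma}), I would write
\begin{equation*}
\Delta(x)=(m_1-m_2)\int_0^1\gamma_m(x,m_\theta,\eta_\theta(x))\,d\theta+\Big(\int_0^1\gamma_z(x,m_\theta,\eta_\theta(x))\,d\theta\Big)\big(\eta_1(x)-\eta_2(x)\big),
\end{equation*}
then apply the product rule $[uv]_{\alpha}\le[u]_{\alpha}|v|_0+|u|_0[v]_{\alpha}$ for H\"older seminorms, pulling seminorms under the $\theta$-integrals. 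For the composed factors $x\mapsto\gamma_m(x,m_\theta,\eta_\theta(x))$ and $x\mapsto\gamma_z(x,m_\theta,\eta_\theta(x))$, joint Lipschitz continuity of $\gamma$'s first derivatives gives $[\,\cdot\,]_{\alpha,D_n}\le L_\gamma\big((2C_n)^{1-\alpha}+[\eta_\theta]_{\alpha,D_n}\big)$ with $L_\gamma=L_\gamma\big(n\vee C^{(1)}_n\vee\alphanorm{\eta_1}{n}\vee\alphanorm{\eta_2}{n}\big)$, the point being that $0<m_\theta\le C^{(1)}_n$ and $0\le\eta_\theta(x)\le\alphanorm{\eta_1}{n}\vee\alphanorm{\eta_2}{n}$ confine $(m_\theta,\eta_\theta(x))$ to a fixed bounded box on which $\gamma_m,\gamma_z$ are bounded and Lipschitz; moreover $[\eta_\theta]_{\alpha,D_n}\le\alphanorm{\eta_2}{n}+\alphanorm{\eta_1-\eta_2}{n}$ and $[\eta_1-\eta_2]_{\alpha,D_n},|\eta_1-\eta_2|_0\le\alphanorm{\eta_1-\eta_2}{n}$. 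Collecting, the first summand contributes at most $\tilde{\Lambda}(n,\alphanorm{\eta_1}{n},\alphanorm{\eta_2}{n})(|m_1-m_2|+|m_1-m_2|\alphanorm{\eta_1-\eta_2}{n})$ (the cross term coming from the $[\eta_\theta]_{\alpha}$ piece), and the second at most $\tilde{\Lambda}(n,\alphanorm{\eta_1}{n},\alphanorm{\eta_2}{n})\alphanorm{\eta_1-\eta_2}{n}$, which together give the second line of \eqref{E:long_h_calc}. I expect the main (and essentially only) obstacle to be the bookkeeping: confirming that every constant that appears depends only on $n,\alphanorm{\eta_1}{n},\alphanorm{\eta_2}{n}$ — hence can be taken uniform for $\alphanorm{\eta_1}{n},\alphanorm{\eta_2}{n}\le R$ — which rests on the confinement of $(m_\theta,\eta_\theta(x))$ above and on Assumption \ref{A:gamma} supplying Lipschitz bounds for $\gamma$'s first derivatives on each $\overline{E}_N$.
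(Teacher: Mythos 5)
Your proposal is correct, and for the $g$-estimate it is essentially the paper's argument: the paper integrates $\partial_m g^m$ over $m\in[m_2,m_1]$ and bounds the resulting integrand differences by exactly the same elementary facts you invoke (the bound $0\le m^{-2}(1-e^{-ms}-mse^{-ms})\le s^2/2$ from \eqref{E:m_t_basics}, the time-derivative bounds $|e^{-ms}(1-ms)|\le 1$ and $se^{-ms}\le T$), then converts $|t-s|$ and $|x^{(1)}-y^{(1)}|$ into powers of the parabolic distance precisely as in your preliminary device, arriving at the same constant $(1+2TC^{(1)}_n)T^{1-\alpha/2}+\tfrac12 T^2(C^{(1)}_n)^{1-\alpha}$. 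For the $h$-estimate your route genuinely differs: the paper expands $\gamma(\bold{a}_1(x))-\gamma(\bold{a}_2(x))$ to \emph{second} order about $\bold{a}_2(x)$, with integral remainders $R_{mm},R_{zz},R_{mz}$, and then estimates five terms of the difference at $x$ and $y$ separately, which forces it to use boundedness and Lipschitz continuity of the second derivatives $\gamma_{mm},\gamma_{mz},\gamma_{zz}$ on $\bar{E}_{M_n}$ (as supplied by Assumption \ref{A:gamma}). You instead write the difference as a first-order interpolation $\int_0^1\bigl((m_1-m_2)\gamma_m+(\eta_1-\eta_2)\gamma_z\bigr)(x,m_\theta,\eta_\theta(x))\,d\theta$ and apply the H\"older product rule $\bra{uv}_{\alpha,n}\le \bra{u}_{\alpha,n}|v|_{0,n}+|u|_{0,n}\bra{v}_{\alpha,n}$, so only boundedness and Lipschitz continuity of the \emph{first} derivatives on the same confined box are needed; the confinement argument ($0<m_\theta\le C^{(1)}_n$, $0\le\eta_\theta(x)\le\alphanorm{\eta_1}{n}\vee\alphanorm{\eta_2}{n}$) is the same as the paper's \eqref{E:bold_a_region}, so the uniformity of the constant for $\alphanorm{\eta_1}{n},\alphanorm{\eta_2}{n}\le R$ goes through unchanged. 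Your bound is in fact slightly sharper (the cross term $|m_1-m_2|\alphanorm{\eta_1-\eta_2}{n}$ is not even needed, though producing it is harmless since \eqref{E:long_h_calc} is an upper bound), and the argument is shorter and uses weaker regularity; what the paper's second-order expansion buys is an explicit identification of the linear-in-$(m_1-m_2,\eta_1-\eta_2)$ terms with quadratic remainders, but for the purposes of this lemma that extra structure is not used, so your approach is a legitimate simplification.
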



\begin{proof}

The proof is a lengthy calculation based off of Taylor's formula, using the fact that $\gamma$ is both $C^2$, with derivatives of order $\leq 2$ which can be continuously extended to $D\times\cbra{0}\times\cbra{0}$, as well as such that all derivatives of order $\leq 2$ are Lipschitz continuous in $\bar{D}_n\times [0,n]\times [0,n]$ with Lipschitz constant $L_\gamma(n)$.  In particular, for any partial derivative $u$ of $\gamma$ with order $\leq 2$, any $n$ and constants $m_n,z_n>0$
\begin{equation*}
\begin{split}
\sup_{x\in D_n,m\leq m_n,z\leq z_n} |u(x,m,z)| &< \infty,\\
\sup_{x,x'\in D_n;m,m'\leq m_n; z,z'\leq z_n} |u(x,m,z)-u(x',m',z')| &\leq L_\gamma(n\vee m_n\vee z_n)\left(|x-x'| + |m-m'| + |z-z'|\right).
\end{split}
\end{equation*}
The above inequalities are used repeatedly in the sequel. Also, $C(n,\alphanorm{\eta_1}{n},\alphanorm{\eta_2}{n})$ is a constant which may change from line to line and can always be made uniform in $\eta_1,\eta_2$ for $\alphanorm{\eta_1}{n},\alphanorm{\eta_2}{n}\leq R$. Now, for $s,t<T,x,y\in D_n$ we have
\begin{equation*}
\begin{split}
&g^{m_1}(t,x)-g^{m_2}(t,x) - \left(g^{m_1}(s,y) - g^{m_2}(s,y)\right)\\
&= (m_1-x^{(1)})\frac{1-e^{-m_1(T-t)}}{m_1} - (m_2 - x^{(1)})\frac{1-e^{-m_2(T-t)}}{m_2}\\
&\qquad -\left( (m_1-y^{(1)})\frac{1-e^{-m_1(T-s)}}{m_1} - (m_2-y^{(1)})\frac{1-e^{-m_2(T-s)}}{m_2}\right),\\
&= \int_{m_2}^{m_1}\left((T-t)e^{-m(T-t)}+\frac{x^{(1)}}{m^2}\left(1-e^{-m(T-t)}-m(T-t)e^{-m(T-t)}\right)\right)dm\\
&\qquad -\int_{m_2}^{m_1}\left((T-s)e^{-m(T-s)}+\frac{y^{(1)}}{m^2}\left(1-e^{-m(T-s)}-m(T-s)e^{-m(T-s)}\right)\right)dm.
\end{split}
\end{equation*}
We have
\begin{equation*}
\begin{split}
&\left|\int_{m_2}^{m_1}\left((T-t)e^{-m(T-t)} - (T-s)e^{-m(T-s)}\right)dm\right| = \left|\int_{m_2}^{m_1}\int_s^t e^{-m(T-\tau)}(m(T-\tau)-1)d\tau dm\right|\\
&\qquad \leq (1+C^{(1)}_nT)|t-s||m_1-m_2|.
\end{split}
\end{equation*}
Next, we have
\begin{equation*}
\begin{split}
&\left|\frac{x^{(1)}}{m^2}\left(1-e^{-m(T-t)}-m(T-t)e^{-m(T-t)}\right)-\frac{y^{(1)}}{m^2}\left(1-e^{-m(T-s)}-m(T-s)e^{-m(T-s)}\right)\right|\\
&\qquad \leq x^{(1)}\left|\frac{1-e^{-m(T-t)}-m(T-t)e^{-m(T-t)}}{m^2} - \frac{1-e^{-m(T-t)}-m(T-t)e^{-m(T-t)}}{m^2}\right|\\
&\qquad\qquad + |x^{(1)}-y^{(1)}|\frac{1-e^{-m(T-s)}-m(T-s)e^{-m(T-s)}}{m^2}.
\end{split}
\end{equation*}
For any $k\geq 0$ the function $m\mapsto m^{-2}\left(1-e^{-km}-km e^{-km}\right)$ is non-negative and deceasing in $m>0$ with limit as $m\rightarrow 0$ of $(1/2)k^2$. Using this we have
\begin{equation*}
|x^{(1)}-y^{(1)}|\frac{1-e^{-m(T-s)}-m(T-s)e^{-m(T-s)}}{m^2} \leq \frac{1}{2}(T-s)^2|x^{(1)}-y^{(1)}|\leq \frac{T^2}{2}|x^{(1)}-y^{(1)}|.
\end{equation*}
Next, for any $m>0$ the map $m\mapsto m^{-2}\left(1-e^{-m(T-\tau)} - m(T-\tau)e^{-m(T-\tau)}\right)$ has derivative $-(T-\tau)e^{-m(T-\tau)}$ which is bounded above in absolute value on $\tau\leq T$ by $T$.  This implies
\begin{equation*}
x^{(1)}\left|\frac{1-e^{-m(T-t)}-m(T-t)e^{-m(T-t)}}{m^2} - \frac{1-e^{-m(T-t)}-m(T-t)e^{-m(T-t)}}{m^2}\right| \leq C^{(1)}_n T |t-s|.
\end{equation*}
Putting these two terms together gives
\begin{equation*}
\begin{split}
&\left| \int_{m_2}^{m_1}\left(\frac{x^{(1)}}{m^2}\left(1-e^{-m(T-t)}-m(T-t)e^{-m(T-t)}\right) - \frac{y^{(1)}}{m^2}\left(1-e^{-m(T-s)}-m(T-s)e^{-m(T-s)}\right)\right)dm\right|\\
&\qquad \leq \left(\frac{T^2}{2}|x^{(1)}-y^{(1)}| + C^{(1)}_nT|t-s|\right)|m_1-m_2|.
\end{split}
\end{equation*}
Therefore
\begin{equation*}
\begin{split}
&\left|g^{m_1}(t,x)-g^{m_2}(t,x) - \left(g^{m_1}(s,y) - g^{m_2}(s,y)\right)\right|\\
&\qquad \leq |m_1-m_2|\left((1+2C^{(1)}_nT)|t-s| + \frac{T^2}{2}|x^{(1)}-y^{(1)}|\right),
\end{split}
\end{equation*}
and hence
\begin{equation*}
\bra{g^{m_1}-g^{m_2}}_{\alpha,n}\leq |m_1-m_2|\left((1+2C^{(1)}_nT)T^{1-\alpha/2} + \frac{T^2}{2}(C^{(1)}_n)^{1-\alpha}\right),
\end{equation*}
which is \eqref{E:long_h_calc} for $g$. Turning to $h$, write $\bold{a}_i(x) \dfn (x,m_i,\eta_i(x))$ for $i=1,2$ and $x\in D_n$. Set
\begin{equation}\label{E:bold_a_big_bound}
M_n \dfn n\vee C^{(1)}_n \vee \alphanorm{\eta_1}{n}\vee \alphanorm{\eta_2}{n},
\end{equation}
and note that
\begin{equation}\label{E:bold_a_region}
\bold{a}_i(x) \in \bar{E}_{M_n} = \bar{D}_{M_n}\times \bra{0,M_n}\times \bra{0,M_n};\qquad x\in D_n.
\end{equation}
We have, from the second order Taylor formula
\begin{equation}\label{E:h_taylor1}
\begin{split}
&h^{m_1,\eta_1}(x) - h^{m_2,\eta_2}(x) - \left(h^{m_1,\eta_1}(y) - h^{m_1,\eta_1}(y)\right)\\
&\ =\gamma(\bold{a}_1(x)) - \gamma(\bold{a}_2(x)) -\left(\gamma(\bold{a}_1(y)) - \gamma(\bold{a}_2(y))\right),\\
&\ = (m_1-m_2)\left(\gamma_m(\bold{a}_2(x)) - \gamma_m(\bold{a}_2(y))\right)\\
&\qquad + \gamma_z(\bold{a}_2(x))(\eta_1(x)-\eta_2(x)) - \gamma_z(\bold{a}_2(y))(\eta_1(y)-\eta_2(y))\\
&\qquad + (m_1-m_2)^2\left(R_{mm}(\bold{a}_1(x)\big|\bold{a}_2(x)) - R_{mm}(\bold{a}_1(y)\big|\bold{a}_2(y))\right)\\
&\qquad + R_{zz}(\bold{a}_1(x)\big|\bold{a}_2(x))(\eta_1(x)-\eta_2(x))^2 - R_{zz}(\bold{a}_1(y)\big| \bold{a}_2(y))(\eta_1(y)-\eta_2(y))^2\\
&\qquad + 2(m_1-m_2)\left(R_{mz}(\bold{a}_1(x)\big|\bold{a}_2(x))(\eta_1(x)-\eta_2(x)) - R_{mz}(\bold{a}_1(y)\big| \bold{a}_2(y))(\eta_1(y)-\eta_2(y))\right).
\end{split}
\end{equation}
Here, for $\bold{a}_1(x),\bold{a}_2(x)$, $x\in D_n$ we have set
\begin{equation*}
\begin{split}
R_{mm}(\bold{a}_1(x)\big|\bold{a_2}(x)) &= \int_0^1(1-u)\gamma_{mm}\left(\bold{a}_1(x) + u(\bold{a}_2(x)-\bold{a}_1(x))\right)du,\\
&=\int_0^1 (1-u)\gamma_{mm}\left(x, m_2 + u(m_1-m_2), \eta_2(x) + u(\eta_1(x)-\eta_2(x))\right)du,
\end{split}
\end{equation*}
with analogous formulas for $R_{zz}$ and $R_{mz}$. Since $m_2+u(m_1-m_2)$ is in between $m_1$ and $m_2$, and $\eta_2(x) + u(\eta_1(x)-\eta_2(x))$ is in between $\eta_1(x)$ and $\eta_2(x)$ this formula immediately gives (recall \eqref{E:bold_a_region})
\begin{equation}\label{E:h_taylor2}
\begin{split}
&\left|R_{mm}(\bold{a}_1(x)\big|\bold{a}_2(x))\right| \leq \frac{1}{2}\sup_{(x,m,z)\in E_n} |\gamma_{mm}(x,m,z)| = C(n,\alphanorm{\eta_1}{n},\alphanorm{\eta_2}{n}),
\end{split}
\end{equation}
(with analogous formulas for $R_{mz},R_{zz}$) as well as
\begin{equation}\label{E:h_taylor3}
\begin{split}
&\left| R_{mm}(\bold{a}_1(x)\big|\bold{a}_2(x)) - R_{mm}(\bold{a}_1(y)\big|\bold{a}_2(y))\right|\\
&\qquad \leq L_\gamma(M_n)\int_0^1(1-u)\left(|x-y| + |(1-u)(\eta_2(x)-\eta_2(y)) + u(\eta_1(x)-\eta_1(y))|\right)du,\\
&\qquad \leq \frac{1}{2}L_\gamma(M_n)\left(|x-y| + \alphanorm{\eta_2}{n}|x-y|^\alpha + \alphanorm{\eta_1}{n}|x-y|^\alpha\right),\\
&\qquad = C(n,\alphanorm{\eta_1}{n},\alphanorm{\eta_2}{n})|x-y|^\alpha,
\end{split}
\end{equation}
(with analogous formulas for $R_{zz},R_{mz}$ as well). We now use \eqref{E:h_taylor2}, \eqref{E:h_taylor3} to bound the five terms on the right hand side of \eqref{E:h_taylor1} separately. First,
\begin{equation*}
\begin{split}
&\left|(m_1-m_2)\left(\gamma_m(\bold{a}_2(x)) - \gamma_m(\bold{a}_2(y))\right)\right|\\
&\qquad \leq |m_1-m_2| L_\gamma(M_n)\left(|x-y| + \alphanorm{\eta_2}{n}|x-y|^\alpha\right),\\
&\qquad \leq C(n,\alphanorm{\eta_1}{n},\alphanorm{\eta_2}{n})|m_1-m_2||x-y|^\alpha.
\end{split}
\end{equation*}
Second
\begin{equation*}
\begin{split}
&\left|\gamma_z(\bold{a}_2(x))(\eta_1(x)-\eta_2(x)) - \gamma_z(\bold{a}_2(y))(\eta_1(y)-\eta_2(y))\right|\\
&\qquad \leq \left|\gamma_z(\bold{a}_2(x))\right|\left|\eta_1(x)-\eta_2(x) - (\eta_1(y)-\eta_2(y))\right| + \left|\eta_1(y)-\eta_2(y)\right|\left|\gamma_z(\bold{a}_2(x))-\gamma_z(\bold{a}_2(y))\right|,\\
&\qquad \leq \sup_{(x,m,z)\in \bar{E}_{M_n}}|\gamma_z(x,m,z)|\alphanorm{\eta_1-\eta_2}{n}|x-y|^\alpha\\
&\qquad\qquad + \alphanorm{\eta_1-\eta_2}{n}L_\gamma(M_n)\left(|x-y| + \alphanorm{\eta_2}{n}|x-y|^\alpha\right),\\
&\qquad = \calphanorm\alphanorm{\eta_1-\eta_2}|x-y|^{\alpha}.
\end{split}
\end{equation*}
Third, from \eqref{E:h_taylor3} we get
\begin{equation*}
\begin{split}
&(m_1-m_2)^2\left(R_{mm}(\bold{a}_1(x)\big| \bold{a}_2(x)) - R_{mm}(\bold{a}_1(y)\big|\bold{a}_2(y))\right)\\
&\qquad \leq C^{(1)}_nC(n,\alphanorm{\eta_1}{n},\alphanorm{\eta_2}{n})|m_1-m_2||x-y|^\alpha,\\
&\qquad = C(n,\alphanorm{\eta_1}{n},\alphanorm{\eta_2}{n})|m_1-m_2||x-y|^\alpha.
\end{split}
\end{equation*}
Fourth (recall \eqref{E:h_taylor2},\eqref{E:h_taylor3} and $a^2-b^2 = (a-b)(a+b)$)
\begin{equation*}
\begin{split}
&\left|R_{zz}(\bold{a}_1(x)\big|\bold{a}_2(x))(\eta_1(x)-\eta_2(x))^2 - R_{zz}(\bold{a}_1(y)\big|\bold{a}_2(y))(\eta_1(y)-\eta_2(y))^2\right|\\
&\qquad \leq \left|R_{zz}(\bold{a}_1(x)\big|\bold{a}_2(x))\right|\left|(\eta_1(x)-\eta_2(x))^2 - (\eta_1(y)-\eta_2(y))^2\right|\\
&\qquad\qquad + (\eta_1(y)-\eta_2(y))^2\left|R_{zz}(\bold{a}_1(x)\big|\bold{a}_2(x)) - R_{zz}(\bold{a}_1(y)\big|\bold{a}_2(y))\right|,\\
&\qquad \leq 2\calphanorm \alphanorm{\eta_1-\eta_2}{n}|x-y|^{\alpha}\\
&\qquad\qquad + \alphanorm{\eta_1-\eta_2}{n}^2 \calphanorm|x-y|^\alpha,\\
&\qquad = \calphanorm\alphanorm{\eta_1-\eta_2} |x-y|^{\alpha}.
\end{split}
\end{equation*}
Lastly, or fifth
\begin{equation*}
\begin{split}
&\left|2(m_1-m_2)\left(R_{mz}(\bold{a}_1(x)\big|\bold{a}_2(x))(\eta_1(x)-\eta_2(x)) - R_{mz}(\bold{a}_1(y)\big| \bold{a}_2(y))(\eta_1(y)-\eta_2(y))\right)\right|\\
&\qquad \leq 2|m_1-m_2||R_{mz}(\bold{a}_1(x)\big|\bold{a}_2(x))|\left|\eta_1(x)-\eta_2(x) - (\eta_1(y)-\eta_2(y))\right|\\
&\qquad \qquad + 2|m_2-m_2|\left|\eta_1(y)-\eta_2(y)\right|\left|R_{mz}(\bold{a}_1(x)\big|\bold{a}_2(x)) - R_{mz}(\bold{a}_1(y)\big|\bold{a}_2(y))\right|,\\
&\leq 2|m_1-m_2|\left(\calphanorm\alphanorm{\eta_1-\eta_2}{n}|x-y|^\alpha + \calphanorm|x-y|^\alpha\right)\\
&\qquad = \calphanorm |m_1-m_2|\alphanorm{\eta_1-\eta_2}|x-y|^{\alpha}.
\end{split}
\end{equation*}
Putting together the five estimates above in  \eqref{E:h_taylor1} we obtain
\begin{equation*}
\begin{split}
&\left|h^{m_1,\eta_1}(x) - h^{m_2,\eta_2}(x) - \left(h^{m_1,\eta_1}(y) - h^{m_1,\eta_1}(y)\right)\right|\\
&\qquad \leq \calphanorm \left(|m_1-m_2| + \alphanorm{\eta_1-\eta_2}{n} + |m_1-m_2|\alphanorm{\eta_1-\eta_2}{n}\right)|x-y|^{\alpha},
\end{split}
\end{equation*}
from which the result in \eqref{E:long_h_calc} follows.

\end{proof}


\nada{

For notational convenience we set
\begin{equation*}
\begin{split}
&f^n(t,x,m;\eta):=\espalt{x}{e^{-\int_0^t(r_u + \gamma(X_u,m,\eta(X_u)))du}\mathbbm{1}_{\{t<\tau_n\}}}, \\
&g^n(t,x,m;\eta):=\espalt{x}{r_te^{-\int_0^t(r_u + \gamma(X_u,m,\eta(X_u)))du}\mathbbm{1}_{\{t<\tau_n\}}},
\end{split}
\end{equation*}
and
\begin{equation}\label{hn_def}
h^n(T,x,m;\eta):=\int_0^T (1-e^{-m(T-t)})(mf^n(t,x,m;\eta)-g^n(t,x,m;\eta))\ dt.
\end{equation}

Then the problem reduces to, for given $x \in D_n$, finding an $m>0$ such that
\begin{equation*}
h^n(T,x,m;\eta)+\dfrac{m^2}{n}=0.
\end{equation*}

Since $m>0$, we can divide the above equation by $m$ and consider the equation
\begin{equation*}
k^n(T,x,m;\eta):=\frac{h^n(T,x,m;\eta)}{m}+\frac{m}{n}=0.
\end{equation*}

Note that
\[\begin{aligned}
\frac{\partial}{\partial T}h^n(T,x,m;\eta)&=\int_0^T me^{-m(T-t)}(mf^n(t,x,m;\eta)-g^n(t,x,m;\eta))\ dt \\
&= m\int_0^T mf^n(t,x,m;\eta)-g^n(t,x,m;\eta)\ dt-mh^n(T,x,m;\eta).
\end{aligned}\]

It's convenient to let $l^n(T,x,m;\eta):=e^{mT}h^n(T,x,m;\eta)$. Then

\[\frac{\partial}{\partial T}l^n(T,x,m;\eta)=me^{mT}\int_0^T mf^n(t,x,m;\eta)-g^n(t,x,m;\eta)\ dt.\]

Since $l^n(0,x,m;\eta)=h^n(0,x,m;\eta)=0$, we get
\begin{eq1}
l^n(T,x,m;\eta)&=\int_0^T me^{mt}(mF^n(t,x,m;\eta)-G^n(t,x,m;\eta))\ dt \\
&=\int_0^T me^{mt}(mF^n(t,x,m;\eta)-G^n(t,x,m;\eta)\pm f^n(t,x,m;\eta))\ dt \\
&=me^{mT}F^n(T,x,m;\eta)-m\int_0^T e^{mt}(G^n(t,x,m;\eta)+f^n(t,x,m;\eta))\ dt,
\end{eq1}

which leads to
\[\begin{aligned}
k^n(T,x,m;\eta)=F^n(T,x,m;\eta)-\int_0^T e^{-m(T-t)}(G^n(t,x,m;\eta)+f^n(t,x,m;\eta))\ dt +\frac{m}{n}.
\end{aligned}\]

where
\[F^n(t,x,m;\eta):=\int_0^tf^n(u,x,m;\eta)\ du=\Ex\left[\int_0^{t \wedge \tau_n} e^{-\int_0^u X_\theta^{\scriptscriptstyle(1)} d\theta-\int_0^u \gamma(X_\theta,m,\eta(X_\theta))d\theta} \ du\right],\]
and
\[
G^n(t,x,m;\eta):=\int_0^t g^n(u,x,m;\eta)\ du =\Ex\left[\int_0^{t \wedge \tau_n} X_u^{\scriptscriptstyle(1)} e^{-\int_0^u X_\theta^{\scriptscriptstyle(1)} d\theta-\int_0^u \gamma(X_\theta,m,\eta(X_\theta))d\theta} \ du\right].\]

In the sequel we will sometimes use $\gamma_\theta$ to denote $\gamma(X_\theta,m,\eta(X_\theta))$ if there is no confusion.

Direct calculation yields, firstly,
\begin{equation*}
\begin{aligned}
&\frac{\partial}{\partial m}F^n(T,x,m;\eta) \\
&=\Ex\left[\int_0^{T \wedge \tau_n} -\bigr(\int_0^u \gamma_m(X_\theta,m,\eta(X_\theta))d\theta\ \bigr) e^{-\int_0^u X_\theta^{\scriptscriptstyle(1)} d\theta-\int_0^u \gamma_\theta d\theta} \ du\right].
\end{aligned}
\end{equation*}

Secondly,
\begin{equation*}
\begin{aligned}
&-\frac{\partial}{\partial m}\int_0^Te^{-m(T-t)}G^n(t,x,m;\eta)\ dt \\
&=\Ex\left[\int_0^T(T-t)e^{-m(T-t)}\int_0^{t \wedge \tau_n}X_u^{\scriptscriptstyle(1)} e^{-\int_0^u X_\theta^{\scriptscriptstyle(1)} d\theta-\int_0^u \gamma_\theta d\theta}du\ dt\right] \\
&\hspace{3mm}+\Ex\biggr[\int_0^T e^{-m(T-t)}\int_0^{t \wedge \tau_n} X_u^{\scriptscriptstyle(1)} e^{-\int_0^u X_\theta^{\scriptscriptstyle(1)} d\theta-\int_0^u \gamma_\theta d\theta}\int_0^u \gamma_m(X_\theta,m,\eta(X_\theta))d\theta\  du \ dt\biggr] \\
&=\Ex\left[\int_0^T(T-t)e^{-m(T-t)}\int_0^{t \wedge \tau_n}X_u^{\scriptscriptstyle(1)} e^{-\int_0^u X_\theta^{\scriptscriptstyle(1)} d\theta-\int_0^u \gamma_\theta d\theta}du\ dt\right] \\
&\hspace{3mm}+\Ex\biggr[\int_0^{T\wedge \tau_n} X_u^{\scriptscriptstyle(1)} e^{-\int_0^u X_\theta^{\scriptscriptstyle(1)} d\theta-\int_0^u \gamma_\theta d\theta}\int_0^u \gamma_m(X_\theta,m,\eta(X_\theta))d\theta\left( \int_u^{T}e^{-m(T-t)} \  dt\right) \ du\biggr] \\
&=\Ex\left[\int_0^T(T-t)e^{-m(T-t)}\int_0^{t \wedge \tau_n}X_u^{\scriptscriptstyle(1)} e^{-\int_0^u X_\theta^{\scriptscriptstyle(1)} d\theta-\int_0^u \gamma_\theta d\theta}du\ dt\right] \\
&\hspace{3mm}+\Ex\biggr[\int_0^{T\wedge \tau_n}\frac{ X_u^{\scriptscriptstyle(1)}}{m}(1-e^{-m(T-u)})\bigr(\int_0^u \gamma_m(X_\theta,m,\eta(X_\theta))d\theta\ \bigr) e^{-\int_0^u X_\theta^{\scriptscriptstyle(1)} d\theta-\int_0^u \gamma_\theta d\theta} \ du\biggr],
\end{aligned}
\end{equation*}

where we have used the fact that
\begin{eq1}
&\Ex\biggr[\int_0^T e^{-m(T-t)}\int_0^{t \wedge \tau_n} X_u^{\scriptscriptstyle(1)} e^{-\int_0^u X_\theta^{\scriptscriptstyle(1)} d\theta-\int_0^u \gamma_\theta d\theta}\int_0^u \gamma_m(X_\theta,m,\eta(X_\theta))d\theta\  du \ dt\biggr] \\
&=\Ex\biggr[\int_0^{T\wedge \tau_n} e^{-m(T-t)}\int_0^{t} X_u^{\scriptscriptstyle(1)} e^{-\int_0^u X_\theta^{\scriptscriptstyle(1)} d\theta-\int_0^u \gamma_\theta d\theta}\int_0^u \gamma_m(X_\theta,m,\eta(X_\theta))d\theta\  du \ dt\biggr] \\
&+\hspace{3mm}\Ex\biggr[\int_{T\wedge\tau_n}^T e^{-m(T-t)}\int_0^{T \wedge \tau_n} X_u^{\scriptscriptstyle(1)} e^{-\int_0^u X_\theta^{\scriptscriptstyle(1)} d\theta-\int_0^u \gamma_\theta d\theta}\int_0^u \gamma_m(X_\theta,m,\eta(X_\theta))d\theta\  du \ dt\biggr] \\
&=\Ex\biggr[\int_0^{T\wedge \tau_n} X_u^{\scriptscriptstyle(1)} e^{-\int_0^u X_\theta^{\scriptscriptstyle(1)} d\theta-\int_0^u \gamma_\theta d\theta}\int_0^u \gamma_m(X_\theta,m,\eta(X_\theta))d\theta\left( \int_u^{T \wedge \tau_n}e^{-m(T-t)} \  dt\right) \ du\biggr] \\
&+\hspace{3mm}\Ex\biggr[\int_0^{T \wedge \tau_n} X_u^{\scriptscriptstyle(1)} e^{-\int_0^u X_\theta^{\scriptscriptstyle(1)} d\theta-\int_0^u \gamma_\theta d\theta}\int_0^u \gamma_m(X_\theta,m,\eta(X_\theta))d\theta\ \left(\int_{T\wedge\tau_n}^T e^{-m(T-t)}  dt\right) \ du\biggr].
\end{eq1}

Lastly,
\begin{equation*}
\begin{aligned}
&-\frac{\partial}{\partial m}\int_0^Te^{-m(T-t)}f^n(t,x,m;\eta)\ dt \\
&=\Ex\left[\int_0^{T\wedge\tau_n}(T-u)e^{-m(T-u)} e^{-\int_0^u X_\theta^{\scriptscriptstyle(1)} d\theta-\int_0^u \gamma_\theta d\theta}\ du\right] \\
&\hspace{3mm}+\Ex\biggr[\int_0^{T\wedge \tau_n} e^{-m(T-u)}\bigr(\int_0^u \gamma_m(X_\theta,m,\eta(X_\theta))d\theta\ \bigr) e^{-\int_0^u X_\theta^{\scriptscriptstyle(1)} d\theta-\int_0^u \gamma_\theta d\theta} \ du\biggr]
\end{aligned}
\end{equation*}

Putting everything together, we get
\begin{equation*}
\begin{aligned}
&\frac{\partial k^n}{\partial m}(T,x,m;\eta) \\
&\hspace{3mm}=\espalt{x}{}{\int_0^{\tau_n \wedge T}\bigr(\frac{X_u^{\scriptsize(1)}}{m}-1\bigr)\bigr(1-e^{-m(T-u)}\bigr)\bigr(\int_0^u \gamma_m(X_\theta,m,\eta(X_\theta))d\theta \bigr)e^{-\int_0^u X_\theta^{\scriptscriptstyle(1)}+\gamma_\theta d\theta}\ du} \\
&\hspace{3mm}+\espalt{x}{}{\int_0^T (T-t)e^{-m(T-t)}\bigr(\int_0^{t \wedge \tau_n}X_u^{\scriptscriptstyle(1)}e^{-\int_0^uX_\theta^{\scriptscriptstyle(1)}+\gamma_\theta d\theta}du \bigr)\ dt} \\
&\hspace{3mm}+\espalt{x}{}{\int_0^{\tau_n \wedge T}(T-u)e^{-m(T-u)}e^{-\int_0^u X_\theta^{\scriptscriptstyle(1)}+\gamma_\theta d\theta}\ du} \\
&\hspace{3mm}+\frac{1}{n}.
\end{aligned}
\end{equation*}

By changing the order of integration, we can rewrite the second term in the above summation as follows:
\begin{equation*}
\begin{aligned}
&\espalt{x}{}{\int_0^T (T-t)e^{-m(T-t)}\bigr(\int_0^{t \wedge \tau_n}X_u^{\scriptscriptstyle(1)}e^{-\int_0^uX_\theta^{\scriptscriptstyle(1)}+\gamma_\theta d\theta}du \bigr)\ dt} \\
&=\espalt{x}{}{\int_0^{\tau_n \wedge T} (T-t)e^{-m(T-t)}\bigr(\int_0^{t}X_u^{\scriptscriptstyle(1)}e^{-\int_0^uX_\theta^{\scriptscriptstyle(1)}+\gamma_\theta d\theta}du \bigr)\ dt} \\
&\hspace{3mm}+\espalt{x}{}{\int_{\tau_n \wedge T}^T (T-t)e^{-m(T-t)}\bigr(\int_0^{\tau_n \wedge T}X_u^{\scriptscriptstyle(1)}e^{-\int_0^uX_\theta^{\scriptscriptstyle(1)}+\gamma_\theta d\theta}du \bigr)\ dt} \\
&=\espalt{x}{}{\int_0^{\tau_n \wedge T}X_u^{\scriptscriptstyle(1)}e^{-\int_0^uX_\theta^{\scriptscriptstyle(1)}+\gamma_\theta d\theta}\bigr(\int_u^{\tau_n \wedge T} (T-t)e^{-m(T-t)}dt \bigr)\ du} \\
&\hspace{3mm}+\espalt{x}{}{\int_0^{\tau_n \wedge T}X_u^{\scriptscriptstyle(1)}e^{-\int_0^uX_\theta^{\scriptscriptstyle(1)}+\gamma_\theta d\theta}\bigr(\int_{\tau_n \wedge T}^{T} (T-t)e^{-m(T-t)}dt \bigr)\ du} \\
&=\espalt{x}{}{\int_0^{\tau_n \wedge T}X_u^{\scriptscriptstyle(1)}e^{-\int_0^uX_\theta^{\scriptscriptstyle(1)}+\gamma_\theta d\theta}\bigr(\frac{1}{m^2}(1-e^{-m(T-u)})-\frac{1}{m}(T-u)e^{-m(T-u)}\bigr)\ du}
\end{aligned}
\end{equation*}

Hence we have
\begin{equation*}
\begin{aligned}
\frac{\partial k^n}{\partial m}(T,x,m;\eta)&=\Ex\biggr[\int_0^{\tau_n \wedge T}e^{-\int_0^uX_\theta^{\scriptscriptstyle(1)}+\gamma_\theta d\theta}\Bigr[X_u^{\scriptscriptstyle(1)}\Bigr(\frac{1-e^{-m(T-u)}}{m}\int_0^u \gamma_m(X_\theta,m,\eta(X_\theta))d\theta \\
&\hspace{3mm}+\frac{1}{m^2}(1-e^{-m(T-u)})-\frac{1}{m}(T-u)e^{-m(T-u)}\Bigr)+(T-u)e^{-m(T-u)} \\
&\hspace{6mm}-\bigr(1-e^{-m(T-u)}\bigr)\int_0^u \gamma_m(X_\theta,m,\eta(X_\theta))d\theta\Bigr]du \biggr]+\frac{1}{n}.
\end{aligned}
\end{equation*}

We now claim the following lower bound holds for $\frac{\partial k^n}{\partial m}(T,x,m;\eta)$:
\begin{equation}\label{E:k_n_lb}
\begin{split}
&\frac{\partial k^n}{\partial m}(T,x,m;\eta) \geq \frac{1}{n}\\
& >\espalt{x}{\int_0^{\tau_n \wedge T}r_u e^{-\int_0^uX_\theta^{\scriptscriptstyl}\left(\frac{1}{m^2}(1-e^{-m(T-u)})-\frac{1}{m}(T-u)e^{-m(T-u)}\right)du}}+\frac{1}{n}
\end{split}
\end{equation}

In fact, recall assumption \ref{full_gamma_ass}.2.(b), we only need to consider the case $m<m^*$. For $u \in (u^*,T]$ we have
\begin{eq1}
&(T-u)e^{-m(T-u)}-\bigr(1-e^{-m(T-u)}\bigr)\int_0^u \gamma_m(X_\theta,m,\eta(X_\theta))d\theta \\
&>(T-u)e^{-m(T-u)}\left(1-(m^*+\delta) M T\right)\geq0,
\end{eq1}
while for $u \in [0,u^*]$:
\begin{eq1}
&(T-u)e^{-m(T-u)}-\bigr(1-e^{-m(T-u)}\bigr)\int_0^u \gamma_m(X_\theta,m,\eta(X_\theta))d\theta\geq(T-u^*)e^{-m^*T}- MT>0.
\end{eq1}
f
And the claim (\ref{full_pm_lbd}) is proved.

}

\section{Technical Results}

\nada{

\begin{proposition}\label{P:gamma_factor_orig}
Let Assumptions \ref{A:region}, \ref{A:factor_coefficients} and \ref{A:no_explosion} hold.  Assume $\gamma(x,m,z) = \gamma(x,m)$: i.e. $\gamma$ only depends upon the factor process $X$ and the contract rate rate $m$, and that $\gamma$ satisfies $1)-3)$ of Assumption \ref{A:gamma}.  Then there exists a measurable solution $m(x)$ to \eqref{E:m_goal_x} which in this instance reduces to
\begin{equation}\label{E:m_goal_xm_op}
m(x) = \mathcal{A}[m](x)\dfn \frac{\espalt{x}{\int_0^T p(t,m(x)) r_t e^{-\int_0^t\left(r_u + \gamma(X_u,m(x))\right)du}dt}}{\espalt{x}{\int_0^T p(t,m(x))e^{-\int_0^t\left(r_u + \gamma(X_u,m(x)\right)du}dt}}.
\end{equation}
\end{proposition}

\begin{proof}[Proof of Proposition \ref{P:gamma_factor_orig}]

Consider the map $f:D\times [0,\infty)\mapsto [0,\infty)$ defined by
\begin{equation*}
f(m,x)\dfn\frac{\espalt{x}{\int_0^T r_t p(t,m)e^{-\int_0^t (r_u + \gamma(X_u,m))du}dt}}{\espalt{x}{\int_0^T  p(t,m)e^{-\int_0^t(r_u + \gamma(X_u,m))du}dt}} - m;\qquad x\in D, m\geq 0,
\end{equation*}
so that we will have a fixed point to $m = \mathcal{A}[m]$ in \eqref{E:m_goal_xm_op} if for each $x$ there is some number $m=m(x)$ so that $f(m,x) = 0$. We have
\begin{equation*}
f(0,x) = \frac{\espalt{x}{\int_0^T r_t(1-t/T)e^{-\int_0^t (r_u + \gamma(X_u,0))du}dt}}{\espalt{x}{\int_0^T(1-t/T)e^{-\int_0^t(r_u + \gamma(X_u,0))du}dt}} >0.
\end{equation*}
Furthermore, using that $r,\gamma\geq 0$ and $p(t,m)\leq 1$ we have
\begin{equation}\label{E:num_ub}
\begin{split}
&\espalt{x}{\int_0^T r_t p(t,m)e^{-\int_0^t (r_u + \gamma(X_u,m))du}dt} \leq \espalt{x}{\int_0^T(r_t + \gamma(X_t,m))e^{-\int_0^t(r_u + \gamma(X_u,m))du}dt}\\
&\qquad\qquad = 1 - \espalt{x}{e^{-\int_0^T(r_t+\gamma(X_t,m))dt}}\leq 1.
\end{split}
\end{equation}
Next, using part $2)$ of Assumption \ref{A:gamma} and Lemma \ref{L:p_prop1} below we have, with $\tau_n$ denoting the first exit time of $X$ from $D_n$ for $n$ sufficiently large so that $x\in D_n$:
\begin{equation*}
\begin{split}
\espalt{x}{\int_0^T  p(t,m)e^{-\int_0^t(r_u + \gamma(X_u,m))du}dt} \geq \frac{e^{-B_\gamma(n)T}}{2}\espalt{x}{\int_0^{T/2\wedge \tau_n}e^{-\int_0^t r_udu}dt}\dfn K(x,n).
\end{split}
\end{equation*}
We thus have for each $x$ than $f(m,x) \leq K(x,n)^{-1}-m$ and hence $\lim_{m\uparrow\infty}f(m,x) = -\infty$. Lastly, using the continuity of $X$, regularity of $\gamma\geq 0$ and boundedness of $p(t,m)$ it follows from the bounded convergence theorem that the map $m\mapsto f(x,m)$ is continuous. As such, there exists some $m=m(x)$ so that $f(x,m) = 0$. In particular, if we set  $m(x)\dfn \inf\cbra{m>0:f(x,m) = 0}$ we obtain a measurable fixed point of \eqref{E:m_goal_xm_op}, finishing the proof.

\end{proof}

}

\nada{

Direct calculation yields
\begin{equation}
p_t(t,m)=\frac{\partial}{\partial t}p(t,m)=\begin{cases}-\dfrac{me^{mt}}{e^{mT}-1}, \ m>0, \\ -\frac{1}{T}, \ m=0.\end{cases}
\end{equation}
and
\begin{equation}
p_m(t,m)=\frac{\partial}{\partial m}p(t,m)=\begin{cases} \dfrac{e^{-m(T-t)}(T-t)}{1-e^{-mT}}-\dfrac{e^{-mT}\left(1-e^{-m(T-t)}\right)T}{(1-e^{-mT})^2}, \ m>0, \\ 0, \ m=0.\end{cases}
\end{equation}

It is straightforward to verify that both $p_t(t,m)$ and $p_m(t,m)$ are continuous functions on $[0,T] \times [0,\infty)$ and that for each $T>0$ there exists some constant $L_p(T)$ depending on $T$ such that
$\vert p_t(t,m) \vert \leq L_p(T)$, $\vert p_m(t,m) \vert \leq L_p(T)$, $\forall (t,m)\in[0,T]\times[0,\infty)$.
}

The following lemma shows that for all $m\geq 0$, the first time the balance $p(t,m)$ falls at or below $1/2$ is at least $T/2$:

\begin{lemma}\label{L:p_prop1}
For all $m>0$, $\inf{\bigl\{t \in [0,T]}:\  p(t,m) \leq (1/2)\bigr\} \geq T/2$.
\end{lemma}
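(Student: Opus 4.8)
The plan is to work directly with the closed form \eqref{E:balance_closed_form_P01}, namely $p(t,m) = (1-e^{-m(T-t)})/(1-e^{-mT})$ for $m>0$, and exploit two elementary facts: that $t\mapsto p(t,m)$ is decreasing on $[0,T]$, and that $p(T/2,m)$ admits a clean simplification. The decrease in $t$ is immediate since $T-t$ decreases in $t$, hence $e^{-m(T-t)}$ increases and the numerator $1-e^{-m(T-t)}$ decreases while the denominator is a fixed positive constant. Consequently $\inf\{t\in[0,T]:p(t,m)\leq 1/2\}\geq T/2$ will follow as soon as we check $p(T/2,m)\geq 1/2$, because then $p(t,m)\geq p(T/2,m)\geq 1/2$ on $[0,T/2]$ and the set in question is contained in $(T/2,T]$ (and is $\geq T/2$ vacuously if empty).

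The key computation is the factorization of the denominator at $t=T/2$:
\begin{equation*}
1-e^{-mT} = \left(1-e^{-mT/2}\right)\left(1+e^{-mT/2}\right),
\end{equation*}
so that
\begin{equation*}
p(T/2,m) = \frac{1-e^{-mT/2}}{\left(1-e^{-mT/2}\right)\left(1+e^{-mT/2}\right)} = \frac{1}{1+e^{-mT/2}}.
\end{equation*}
Since $m>0$ and $T>0$ give $0<e^{-mT/2}<1$, we get $p(T/2,m) = 1/(1+e^{-mT/2}) > 1/2$, which completes the argument when combined with the monotonicity observation above.

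There is no real obstacle here: the statement reduces to a one-line algebraic identity together with the trivial monotonicity of $p$ in its time argument. The only point deserving a word of care is the boundary convention for the infimum (the set being possibly empty, in which case the infimum is $+\infty\geq T/2$), and noting the claimed inequality is in fact strict on $[0,T/2]$, which is slightly more than what is asserted.
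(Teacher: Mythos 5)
Your proof is correct and in substance the same elementary computation as the paper's, viewed from the dual direction: the paper solves $p(t,m)=1/2$ for $t$ and checks that this time exceeds $T/2$, which reduces to the inequality $\tfrac12\left(1+e^{-mT}\right)>e^{-mT/2}$, whereas you verify the equivalent statement $p(T/2,m)=1/\left(1+e^{-mT/2}\right)>1/2$ and invoke monotonicity of $p(\cdot,m)$. Your factorization $1-e^{-mT}=\left(1-e^{-mT/2}\right)\left(1+e^{-mT/2}\right)$ makes the key inequality immediate, and spelling out the monotonicity and the empty-set convention for the infimum is a small extra care that the paper leaves implicit.
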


\begin{proof}
Assume for some $m> 0,t\in [0,T]$, $p(t,m) = 2$. Then
\begin{equation*}
t=T+\frac{1}{m}\log\left(\frac{1}{2}\left(1+e^{-mT}\right)\right).
\end{equation*}
It is clear that
\begin{equation*}
t>\frac{T}{2} \iff \frac{1}{m}\log\left(\frac{1}{2}(1+e^{-mT})\right)>-\frac{T}{2} \iff \frac{1}{2}\left(1+e^{-mT}\right) > e^{-mT/2}.
\end{equation*}
The last inequality holds for all $m>0$ and $T>0$, finishing the proof.
\end{proof}


\section{On the Construction of the Risk Neutral Measure $\qprob$}\label{S:qprob}

Let $D$ be as in Assumption \ref{A:region} and let $\tilde{b}:D\mapsto\reals^d$ and $A:D\mapsto\mathbb{S}^d$ be given functions satisfying Assumption \ref{A:factor_coefficients}.  Assume that $D, \tilde{b}$ and $A$ are so that there exists a (necessarily unique) solution to the Martingale problem (see \cite{MR2190038}) for the second order linear operator $\tilde{L}$ associated to $(\tilde{b},A)$ on $D$. 

Now, fix a probability space $(\Omega,\mathcal{G},\prob)$ and denote by $\wt{W}$ a $d$-dimensional Brownian motion under $\prob$.  Set $\filtwt{W}$ as the $\prob$-augmented version of the right continuous enlargement of the natural filtration for $\widetilde{W}$, so that $\filt^{\widetilde{W}}$ satisfies the usual conditions. Since the Martingale problem for $\tilde{L}$ is well posed, there exists a unique strong solution to the SDE
\begin{equation}\label{E:factors_p}
dX_t = \tilde{b}(X_t)dt + a(X_t)d\widetilde{W}_t.
\end{equation}
where $a = \sqrt{A}$. Next let $\mu:D\mapsto\reals^d$, $\Sigma:D\mapsto\mathbb{S}^d$ also satisfy Assumption \ref{A:factor_coefficients}. With $\sigma = \sqrt{\Sigma}$, the market is formed via trading instruments $(S,S^0)$ where $S=(S^1,...,S^d)$ have dynamics
\begin{equation*}
\frac{dS^{i}_t}{S^i_t} = \mu^i(X_t) dt + \sum_{j=1}^{k} \sigma^{ij}(X_t) d\wt{W}^j_t;\qquad i = 1,...,d,
\end{equation*}
and $S^0_t = \xpn{\int_0^t r_u du}$ is the money market where $r = X^{(1)}$. Define $b:D\mapsto\reals^d$ by
\begin{equation}\label{E:mu_b_rep}
b(x) = \tilde{b}(x) - a(x)\sigma(x)^{-1}\left(\mu(x) - r\bold{1}\right),
\end{equation}
where $\bold{1}\in\reals^d$ is the vector of ones. Note that $b$ satisfies Assumption \ref{A:factor_coefficients}.  Lastly, assume the Martingale problem for $L$ associated to $(b,A)$ is also well posed on $D$. Under these hypotheses it is well known (see \cite[Ch. 5]{MR2057928}, \cite{MR1326606,MR2152242}) the above market (with $\filtwt{W}$ adapted, $S$-integral trading strategies) is complete, and the unique risk neutral measure $\qprob$ on $\mathcal{F}^{\widetilde{W}}_T$ has Radon-Nikodym derivative
\begin{equation}\label{E:Z_def}
\frac{d\qprob}{d\prob}\bigg|_{\mathcal{F}^{\widetilde{W}}_T} = Z_T;\qquad Z_t \dfn \mathcal{E}\left(-\int_0^{\cdot}\left(\mu(X_t)-r_t\bold{1}\right)'\sigma^{-1}(X_t)d\wt{W}_t\right)_t,\ t\leq T.
\end{equation}
In particular, $Z$ is a $(\prob,\filtwt{W})$ martingale. With $\qprob$ being well-defined on $\mathcal{F}^{\widetilde{W}}_T$, we recall (see \cite[Ch. 5]{MR2057928}) that, provided the requisite integrability holds, if $\mathcal{C} = \cbra{\mathcal{C}(t)}_{t\leq T}$ is a cumulative cash-flow stream, adapted to $\filt^{\widetilde{W}}$ and with rate $C(t) = \dot{\mathcal{C}}(t)$, then the unique price for the stream is given by $\espalt{\qprob}{\int_0^T C(t)e^{-\int_0^t r_udu}dt}$. With this notation in place, we now derive the mortgage price in two instances.

\subsection{Large Pool}\footnote{This derivation is alluded to, if not explicitly given, in \cite{MR2943181,MR2352905} and uses an argument similar to that in \cite{MR2116154}.} Assume that in addition to $\widetilde{W}$, $(\Omega,\G,\prob)$ supports an $\prob$-i.i.d. sequence of $U(0,1)$ random variables $\cbra{U_i}_{i=1,...}$ which are also $\prob$ independent of $\widetilde{W}$.  Let $\gamma$ be any non-negative, integrable, $\filt^{\widetilde{W}}$ adapted process. Given $\gamma$, the random times $\cbra{\tau_i}_{i=1,...}$ are constructed via
\begin{equation}\label{E:prepayment_times}
\tau_i = \inf\cbra{t\geq 0\ \such\ U_i = e^{-\int_0^t \gamma_u du}};\qquad i = 1,\dots.
\end{equation}
Note that the $\cbra{\tau_i}_{i\in I}$ are $\prob$ conditionally i.i.d. given $\F^W_T$, each with common $\prob$ - intensity $\gamma$.

Now, consider a large pool, consisting of infinitely many loans which are (uniformly) infinitely small.  More precisely, fix $N$ and for $i=1,...,N$ set $\tau_i$ as the prepayment time of the $i^{th}$ loan in an $N$-loan pool, with each loan of size $1/N$.  The pool has common contract rate $m$ and hence the respective principal balances and coupons are $p_i(t,m)= (1/N)p(t,m)$ (where $p$ is from \eqref{E:balance_closed_form_P01} and $c_i = (1/N)m/(1-e^{-mT}) = (1/N)c(m)$ for $i=1,...,N$. The cumulative cash flows of the pools is thus:
\begin{equation*}
\mathcal{C}_N(t) = \frac{1}{N}\sum_{i=1}^N c (t\wedge \tau_i) + \frac{1}{N}\sum_{i=1}^N p(\tau_i,m)1_{\tau_i \leq t}.
\end{equation*}
By the conditional law of large numbers and Glivenko-Cantelli type theorem in \cite[Theorem 6.6]{wellner2005empirical} we have that $\prob$-almost surely:
\begin{equation*}
\lim_{N\uparrow\infty} \sup_{t\in [0,T]}\left|\mathcal{C}_N(t)-\mathcal{C}(t)\right| = 0,
\end{equation*}
where for $t\leq T$ and $\tau$ a generic copy of $\tau_i$:
\begin{equation*}
\begin{split}
\mathcal{C}(t) &= c\condespalt{}{t\wedge\tau}{\F^{\widetilde{W}}_T} + \condespalt{}{p(\tau,m)1_{\tau\leq t}}{\F^{\widetilde{W}}_T},\\
&=ct e^{-\int_0^t\gamma_udu} + c\int_0^t u\gamma_u e^{-\int_0^u\gamma_v dt}du + \int_0^t p(u,m)\gamma_u e^{-\int_0^u\gamma_vdv}du.
\end{split}
\end{equation*}
The cash flow rate is
\begin{equation*}
C(t) = ce^{-\int_0^t\gamma_u du} + p(t,m)\gamma_t e^{-\int_0^t\gamma_u du}.
\end{equation*}
It thus follows that the price of the large pool is given by
\begin{equation*}
\begin{split}
\espalt{\qprob}{\int_0^T(c+p(t,m)\gamma_t)e^{-\int_0^t(r_u+\gamma_u)du}dt} = 1+ \espalt{\qprob}{\int_0^T(m-r_t)p(t,m)e^{-\int_0^t (r_u+\gamma_u)du}dt},
\end{split}
\end{equation*}
where the last inequality follows by using \eqref{E:balance_no_prepay} and integration by parts.  This yields \eqref{E:mortgage_value_nice} and $\gamma$ is the $\prob$ prepayment intensity.

\subsection{Single Loan Pool}

Here, we assume that in addition to $\wt{W}$, $(\Omega,\G,\prob)$ supports a $U(0,1)$ random variable $U$ which is $\prob$ - independent of $\wt{W}$. The random time $\tau$ is created as in \eqref{E:prepayment_times} where $\gamma$ is again a non-negative, integrable, $\filt^{\wt{W}}$ adapted process. Associated to $\tau$ is the indicator process $H = \cbra{H_t}_{t\geq 0}$ with $H_t = 1_{\tau> t}$.  $H$ generates the filtration $\filt^{H} = \cbra{\mathcal{H}_t}_{t\geq 0}$ via $\mathcal{H}_t = \sigma(H_s;s\leq t)$ and  $\tau$ is clearly an $\filt^{H}$-stopping time.  Furthermore, $\filt^{H}$ and $\filt^{\wt{W}}$ are $\prob$ independent. Lastly, the enlarged filtration $\mathbb{G}$ is that generated by both $\filt^{\wt{W}}$ and the $\prob$-augmented versions of $\filt^{H}$, and is right continuous \cite[Theorem 1]{MR671249}.  Now, let $A\in \F^{\wt{W}}$ and $t\geq 0$. We clearly have that $\espalt{\prob}{1_{\tau > t}1_A} = \espalt{\prob}{(1-e^{-\int_0^t\gamma_udu})1_A}$ and hence
\begin{equation*}
\condprobalt{\prob}{\tau>t}{\F^{\wt{W}}} = \condprobalt{\prob}{\tau > t}{\F^{\wt{W}}_t} = 1-e^{-\int_0^t \gamma_u du},
\end{equation*}
so that $\gamma$ is the $(\prob,\filt^{\wt{W}})$ intensity of $\tau$. Enlarge the market described above to allow for $\mathbb{G}$ adapted trading strategies.  Though this market is now incomplete, it follows that the minimal entropy martingale measure $\qprob$ (same notation as above) satisfies
\begin{equation*}
\frac{d\qprob}{d\prob}\bigg|_{\mathcal{G}_T} = Z_T;\qquad T\geq 0.
\end{equation*}
Indeed, this fact has been shown in \cite{MR2011941, Robertson_Spil_2014} amongst others.  We next claim that $\gamma$ is the $\qprob$ intensity of $\tau$ as well. To see this note that $U\sim U(0,1)$ under $\qprob$ since $\qprob\bra{U\leq u} = \espalt{\prob}{1_{U\leq u}Z_T} = \prob\bra{U\leq u} = u$. Next, $U$ is $\qprob$ independent of $\filt^{\wt{W}}$ since for all $A\in \F^{\wt{W}}_T$ for any $T\geq 0$:
\begin{equation*}
\qprob\bra{U\leq u, A} = \espalt{\prob}{1_{U\leq u}1_A Z_T} = \prob\bra{U\leq u}\qprob\bra{A} = \qprob\bra{U\leq u}\qprob\bra{A},
\end{equation*}
and hence the $\qprob$ independence follows. Thus, for all $A\in \filt^{\wt{W}}$ and $t\geq 0$:
\begin{equation*}
\qprob\bra{\tau>t, A} = \espalt{\qprob}{1_A \condespalt{\qprob}{1_{U>e^{-\int_0^t\gamma_udu}}}{\filt^{\wt{W}}}} = \espalt{\qprob}{1_A\left(1-e^{-\int_0^t \gamma_u du}\right)},
\end{equation*}
proving that $\gamma$ is the $\qprob$ intensity of $\tau$. Now, starting with the price for the mortgage as in \eqref{E:mortgage_value} where $\qprob$ is now the minimal entropy measure in the enlarged market, equation \eqref{E:mortgage_value_nice} still holds (see \eqref{E:intensity_equation}) and hence \eqref{E:mortgage_value_nice} and \eqref{E:m_goal} hold.

\bibliographystyle{siam}
\bibliography{/home/scottrob/Bibliography/master.bib}

\def\polhk#1{\setbox0=\hbox{#1}{\ooalign{\hidewidth
  \lower1.5ex\hbox{`}\hidewidth\crcr\unhbox0}}}
\begin{thebibliography}{10}

\bibitem{MR2011941}
{\sc D.~Becherer}, {\em Rational hedging and valuation of integrated risks
  under constant absolute risk aversion}, Insurance Math. Econom., 33 (2003),
  pp.~1--28.

\bibitem{MR2152242}
{\sc P.~Cheridito, D.~Filipovi{\'c}, and M.~Yor}, {\em Equivalent and
  absolutely continuous measure changes for jump-diffusion processes}, Ann.
  Appl. Probab., 15 (2005), pp.~1713--1732.

\bibitem{deng2000mortgage}
{\sc Y.~Deng, J.~M. Quigley, and R.~Van~Order}, {\em Mortgage terminations,
  heterogeneity and the exercise of mortgage options}, Econometrica,  (2000),
  pp.~275--307.

\bibitem{1981_Dunn_Mcconnell}
{\sc K.~B. Dunn and J.~J. McConnell}, {\em A comparison of alternative models
  for pricing gnma mortgage-backed securities}, The Journal of Finance, 36
  (1981), pp.~pp. 471--484.

\bibitem{MR1625845}
{\sc L.~C. Evans}, {\em Partial differential equations}, vol.~19 of Graduate
  Studies in Mathematics, American Mathematical Society, Providence, RI, 1998.

\bibitem{MR0181836}
{\sc A.~Friedman}, {\em Partial differential equations of parabolic type},
  Prentice-Hall, Inc., Englewood Cliffs, N.J., 1964.

\bibitem{MR2295424}
\leavevmode\vrule height 2pt depth -1.6pt width 23pt, {\em Stochastic
  differential equations and applications}, Dover Publications, Inc., Mineola,
  NY, 2006.
\newblock Two volumes bound as one, Reprint of the 1975 and 1976 original
  published in two volumes.

\bibitem{MR1814364}
{\sc D.~Gilbarg and N.~S. Trudinger}, {\em Elliptic partial differential
  equations of second order}, Classics in Mathematics, Springer-Verlag, Berlin,
  2001.
\newblock Reprint of the 1998 edition.

\bibitem{MR2260051}
{\sc Y.~Goncharov}, {\em An intensity-based approach to the valuation of
  mortgage contracts and computation of the endogenous mortgage rate}, Int. J.
  Theor. Appl. Finance, 9 (2006), pp.~889--914.

\bibitem{goncharov2009computing}
\leavevmode\vrule height 2pt depth -1.6pt width 23pt, {\em Computing the
  endogenous mortgage rate without iterations}, Quantitative Finance, 9 (2009),
  pp.~429--438.

\bibitem{MR2943181}
\leavevmode\vrule height 2pt depth -1.6pt width 23pt, {\em On the existence of
  the endogenous mortgage rate process}, Math. Finance, 22 (2012),
  pp.~475--487.

\bibitem{MR2352905}
{\sc V.~Gorovoy and V.~Linetsky}, {\em Intensity-based valuation of residential
  mortgages: an analytically tractable model}, Math. Finance, 17 (2007),
  pp.~541--573.

\bibitem{MR671249}
{\sc S.~W. He and J.~G. Wang}, {\em The property of predictable representation
  of the sum of independent semimartingales}, Z. Wahrsch. Verw. Gebiete, 61
  (1982), pp.~141--152.

\bibitem{MR2116154}
{\sc R.~A. Jarrow, D.~Lando, and F.~Yu}, {\em Default risk and diversification:
  theory and empirical implications}, Math. Finance, 15 (2005), pp.~1--26.

\bibitem{kalotay2004option}
{\sc A.~Kalotay, D.~Yang, and F.~J. Fabozzi}, {\em An option-theoretic
  prepayment model for mortgages and mortgage-backed securities}, International
  Journal of Theoretical and Applied Finance, 7 (2004), pp.~949--978.

\bibitem{kau1995valuation}
{\sc J.~B. Kau, D.~C. Keenan, W.~J. Muller~III, and J.~F. Epperson}, {\em The
  valuation at origination of fixed-rate mortgages with default and
  prepayment}, The Journal of Real Estate Finance and Economics, 11 (1995),
  pp.~5--36.

\bibitem{lieberman1996second}
{\sc G.~M. Lieberman}, {\em Second order parabolic differential equations},
  vol.~68, World Scientific, 1996.

\bibitem{Fed_Flow_Q2_2015}
{\sc B.~of~Governers of~the Federal Researve~System}, {\em Financial accounts
  of the united states: Second quarter 2015}, 2015.
\newblock http://www.federalreserve.gov/releases/z1/Current/.

\bibitem{MR1326606}
{\sc R.~G. Pinsky}, {\em Positive harmonic functions and diffusion}, vol.~45 of
  Cambridge Studies in Advanced Mathematics, Cambridge University Press,
  Cambridge, 1995.

\bibitem{pliska2006mortgage}
{\sc S.~R. Pliska}, {\em Mortgage valuation and optimal refinancing}, in
  Stochastic Finance, Springer, 2006, pp.~183--196.

\bibitem{Robertson_Spil_2014}
{\sc S.~Robertson and K.~Spiliopoulos}, {\em Indifference pricing for
  contingent claims: Large deviations effects}, Working Paper,  (2014).

\bibitem{schwartz1989prepayment}
{\sc E.~S. Schwartz and W.~N. Torous}, {\em Prepayment and the valuation of
  mortgage-backed securities}, The Journal of Finance, 44 (1989), pp.~375--392.

\bibitem{MR2057928}
{\sc S.~E. Shreve}, {\em Stochastic calculus for finance. {II}}, Springer
  Finance, Springer-Verlag, New York, 2004.
\newblock Continuous-time models.

\bibitem{TBA_Fact_Sheet}
{\sc SIFMA}, {\em To-be-announced (tba) market fact sheet}, Securities Industry
  and Financial Markets Association (SIFMA),  (2015).

\bibitem{Mortgage_Data_SIFMA}
\leavevmode\vrule height 2pt depth -1.6pt width 23pt, {\em Us mortgage-related
  issuance and outstanding}, Securities Industry and Financial Markets
  Association (SIFMA),  (2015).
\newblock http://sifma.org/research/statistics.aspx.

\bibitem{stanton1995rational}
{\sc R.~Stanton}, {\em Rational prepayment and the valuation of mortgage-backed
  securities}, Review of Financial Studies, 8 (1995), pp.~677--708.

\bibitem{MR2190038}
{\sc D.~W. Stroock and S.~R.~S. Varadhan}, {\em Multidimensional diffusion
  processes}, Classics in Mathematics, Springer-Verlag, Berlin, 2006.
\newblock Reprint of the 1997 edition.

\bibitem{wellner2005empirical}
{\sc J.~A. Wellner}, {\em Empirical processes: Theory and applications}, Notes
  for a course given at Delft University of Technology,  (2005).

\bibitem{MR2667899}
{\sc T.~Zhou}, {\em Indifference valuation of mortgage-backed securities in the
  presence of prepayment risk}, Math. Finance, 20 (2010), pp.~479--507.

\end{thebibliography}

\end{document}